\newtheorem{Defn}{Definition}
\newtheorem{Prop}{Proposition}
\def\BibTeX{{\rm B\kern-.05em{\sc i\kern-.025em b}\kern-.08em
    T\kern-.1667em\lower.7ex\hbox{E}\kern-.125emX}}
\begin{document}
\title{\vspace{-3.5mm}\huge FUSION: Forecast-Embedded Agent Scheduling with Service Incentive Optimization over Distributed Air-Ground Edge Networks \vspace{-1.5mm}}
\author{Houyi Qi
	, Minghui Liwang, \IEEEmembership{Senior Member}, \IEEEmembership{IEEE}, Seyyedali Hosseinalipour, \IEEEmembership{Senior Member}, \IEEEmembership{IEEE},
	\\Liqun Fu, \IEEEmembership{Senior Member}, \IEEEmembership{IEEE}, Sai Zou, \IEEEmembership{Senior Member}, \IEEEmembership{IEEE}, 
	Xianbin Wang, \IEEEmembership{Fellow}, \IEEEmembership{IEEE},\\ Wei Ni, \IEEEmembership{Fellow}, \IEEEmembership{IEEE},
	and Yiguang Hong, \IEEEmembership{Fellow}, \IEEEmembership{IEEE}
\thanks{H. Qi (houyiqi@tongji.edu.cn), M. Liwang (minghuiliwang@tongji.edu.cn), and Y. Hong (yghong@iss.ac.cn) are with the Shanghai Research Institute for Intelligent Autonomous Systems, with the State Key Laboratory of Autonomous Intelligent Unmanned Systems, Frontiers Science Center for Intelligent Autonomous Systems, Ministry of Education, Shanghai Key Laboratory of Intelligent Autonomous Systems, and Department of Control Science and Engineering, 
Tongji University, China. S. Hosseinalipour (alipour@buffalo.edu) is with Department of Electrical Engineering, University at Buffalo-SUNY, USA. 
L. Fu (liqun@xmu.edu.cn) is with School of Informatics, Xiamen University, China.
S. Zou (dr-zousai@foxmail.com) is with College of Big Data and Information Engineering, Guizhou University, China. 
X. Wang (xianbin.wang@uwo.ca) is with the Department of Electrical and Computer Engineering, Western University, Canada. W. Ni (Wei.Ni@ieee.org) is with Data61, CSIRO, Sydney, Australia.	
}
}

\IEEEtitleabstractindextext{
	\begin{abstract}
		\justifying
In this paper, we introduce a first-of-its-kind forecasting-driven, incentive-aware service provisioning framework for distributed air-ground integrated networks that explicitly accounts for human-machine coexistence. In our framework, vehicular-UAV agent pairs (APs) are proactively dispatched to overloaded hotspots to augment the computing capacity of edge servers (ESs), which in turn gives rise to a set of challenges that we jointly address: highly uncertain spatio-temporal workloads, spatio-temporal coupling between road traffic and UAV capacity, forecast-driven contracting risks, and heterogeneous quality-of-service (QoS) requirements of human users (HUs) and machine users (MUs). To address these challenges, we propose FUSION, a two-stage optimization framework, consisting of an offline stage and an online stage. {In the offline stage, a liquid neural network-powered module performs multi-step spatio-temporal demand forecasting at distributed ESs, whose outputs are exploited by an enhanced ant colony optimization-based routing scheme and an auction-based incentive-aware contracting mechanism, to jointly determine ES-AP contracts and pre-planned service routes.} In the online stage, we model congestion-aware task scheduling as a service demander-side exact-potential game, where HUs and MUs select either local execution or accessible ES/UAV resources, and develop a potential-guided best-response dynamics algorithm that converges to an $\varepsilon$-NE under a positive improvement threshold and to a pure-strategy Nash equilibrium when the threshold is zero. Experiments on both synthetic and real-world datasets show that FUSION consistently achieves higher social welfare and improved resource utilization, while maintaining latency and energy costs comparable to state-of-the-art baselines and preserving individual rationality, budget balance, and near-truthfulness.
	\end{abstract}
	
	\begin{IEEEkeywords}
Distributed edge computing, liquid neural network, auction, potential games, heterogeneous network users.
	\end{IEEEkeywords}
}

\maketitle
\IEEEdisplaynontitleabstractindextext

\IEEEpeerreviewmaketitle

\setlength{\abovedisplayskip}{2pt}
\setlength{\belowdisplayskip}{2pt}
\setlength{\skip\footins}{6pt}
\setlength{\footnotesep}{0pc}


\vspace{-6mm}
\section{Introduction}\label{sec:Intro}
\vspace{-.9mm}
\IEEEPARstart{R}{e}cent years have witnessed a new generation of intelligent applications such as real-time distributed video analytics, intelligent transportation systems, and large-scale mobile crowdsensing, flourishing across smart cities \cite{Survey-edge1,Survey-edge2}. Although edge servers (ESs) can provide computing capabilities for these intelligent applications at the network edge, their limited local resources make them susceptible to resource shortages and task congestion during peak demand periods \cite{Survey-edge3,Survey-edge4,Survey-edge5}. To enhance capability of ESs, low-altitude edge computing enabled by unmanned aerial vehicles (UAVs) has received significant attention \cite{UAV-edge1,UAV-edge2}. Unlike static ESs, UAVs provide high mobility and flexibility, allowing them to be rapidly dispatched to hotspot regions  to provide the needed computing resources. In this work, we highlight an important yet often-overlooked aspect of low-altitude edge systems: most existing studies implicitly assume that UAVs can always \textit{fly directly} to target regions\cite{RW-UAV-Li,RW-UAV-Xu,RW-UAV-Zeng,RW-UAV-Wang}, without considering their deployment origins. This is despite the fact that long travel distances due to poorly planned deployment origins can easily deplete UAVs' batteries before reaching target locations. 
To fill this gap, we propose an \emph{air-ground cooperative edge computing paradigm}: during periods of intensive demand, a set of ground vehicles are used to transport UAVs (vehicles and UAVs are collectively referred to ``agent pairs (APs)'') to the vicinities of overloaded ESs. UAVs are then launched from strategically selected locations to deliver temporary computational augmentation for ESs. Despite its promising potential, operationalizing this hybrid AP-ES deployment paradigm introduces several  challenges, which must be addressed:

\noindent\textbf{(A) Spatio-temporal demand uncertainty.}
Edge computing demands exhibit strong spatio-temporal variability, and the coexistence of \underline{h}uman \underline{u}sers (HUs) and \underline{m}achine \underline{u}sers (MUs) further complicates the management of distributed workloads. This variability makes it difficult to determine when, where, and how UAV services should be dispatched. In such networks, static provisioning or purely reactive scheduling often leads to mismatches between supply and demand, resulting in congestion at emerging hotspots and resource underutilization in low-demand regions~\cite{RW-PG-Zhou,RW-PG-Ding}. The key challenge therefore lies in \textit{accurately forecasting spatio-temporal demand patterns from historical requests and contextual information}, so as to enable proactive UAV deployment.

\noindent\textbf{(B) Spatio-temporal coupling of vehicular mobility and UAV deployment.}
Supporting geo-distributed ESs requires APs to rely on vehicular mobility to reach target locations, which is constrained by road topology and traffic conditions. Moreover, the number of UAVs that can be deployed by a vehicle is tightly coupled with its visiting sequence and dwell time at each ES, since UAV dispatching, service execution, and recovery jointly consume time and onboard resources. Consequently, vehicle–UAV joint deployment becomes \textit{a highly coupled optimization problem that must simultaneously account for travel costs, service time windows, UAV dispatching logistics, and spatially distributed service rewards.}

\noindent\textbf{(C) Predictive contracting and incentive-compatible trading.}
In the ES–AP hybrid network, APs operated by private entities must be compensated by ES operators through \textit{advance contracts} that commit service capacity and pricing for future peak-demand periods before demand realization. However, forecasting errors and dynamic supply–demand variations introduce significant risks for both parties: ESs may suffer unmet demand, while APs may incur revenue loss due to resource misallocation. The central challenge is thus to design \textit{a reliable offline trading mechanism that is individually rational, incentive compatible, and budget feasible, while operating prior to demand realization and service execution}~\cite{AUC-survey}.

\noindent\textbf{(D) Online network coordination under human--machine coexistence.}
HUs and MUs compete for shared resources provided by local ESs and leased UAVs, yet exhibit fundamentally different QoS requirements. HUs are typically delay-sensitive due to interactive applications, whereas MUs emphasize timely task completion before deadlines. The resulting challenge is to design \textit{fair and efficient distributed resource scheduling mechanisms that jointly serve HUs and MUs while ensuring stable operation and provable convergence guarantees.}

The above four challenges form the motivation of our work. Specifically, to address them, we develop a \underline{f}orecasting-driven agent \underline{s}ched\underline{u}ling and \underline{i}ncentive optimizati\underline{on} (FUSION) framework for distributed service provisioning.
In FUSION, to address challenge (A), we design a \underline{pro}active spatio-temporal demand prediction via \underline{l}iquid \underline{n}eural \underline{n}etworks (\emph{Pro-LNN}), which transforms historical service traces and contextual features into multi-step workload forecasts.
Then, to address challenge (B), we propose an \underline{e}nhanced \underline{a}nt \underline{c}olony \underline{o}ptimization for \underline{v}ehicular \underline{r}oute \underline{p}lanning (\emph{eACO-VRP}) based on the predicted demand. 
Afterwards, to address challenge (C), we introduce \underline{off}line \underline{a}uction-based \underline{i}ncentive-\underline{c}ompatible \underline{c}ontracting (\emph{Off-AIC$^2$}) that matches ESs with proper APs and obtains payments and rewards. Finally, to address challenge (D), we formulate an efficient service demander (SD)-side online potential game in which HUs and MUs choose local execution or accessible ES/UAV service resources, and propose the \underline{p}otential \underline{g}ame-based \underline{b}est-\underline{r}esponse \underline{d}ynamics \emph{(PG-BRD)} method, which uses asynchronous payoff-improving updates to drive the system toward an $\varepsilon$-Nash equilibrium (NE), and toward a pure-strategy NE when $\varepsilon=0$.
To sum up, our major contributions can be summarized as follows:
\begin{itemize}[leftmargin=3.5mm]\vspace{-0.5mm}
	\item We present one of the first air-ground integrated edge network models that jointly captures vehicular-UAV APs, ESs, and heterogeneous human-machine (i.e., HU-MU) computing demands within a single framework.
	\item We formulate ES–AP–user service provisioning by accounting for spatio-temporal mobility constraints, stochastic workload evolution, and strategic agent behaviors. We unveil that the offline ES--AP contract design admits a combinatorial mixed-integer formulation, while the online SD-side interactions over heterogeneous ES/UAV service resources lead to a finite potential game with a combinatorial action space. To tackle these formulations, we develop a two-stage solution consisting of a forecast-driven offline stage and a game-theoretic online stage.
	
\item {We develop the FUSION framework that couples proactive offline service preparation with adaptive online scheduling. For offline stage, Pro-LNN performs multi-step spatio-temporal demand forecasting, eACO-VRP constructs executable vehicle-UAV routes, and Off-AIC$^2$ establishes forecast-driven ES--AP contracts with individual rationality, near-truthfulness, and budget feasibility. For online stage, we formulate real-time task scheduling as an SD-side potential game under human-machine coexistence, and design PG-BRD to obtain congestion-aware SD--SP assignments with convergence guarantees to an $\varepsilon$-NE, reducing to a pure-strategy NE when $\varepsilon=0$.}
	
\item Through extensive evaluations, we show that FUSION consistently delivers higher social welfare while keeping the delay and energy costs on par with existing baselines.
\end{itemize}

\section{Related Work}

\subsection{Service Provision in UAV-Assisted Edge Networks} 
{There exists an extensive body of literature on improving the communication/computation performance and reducing the energy footprint of UAV-assisted edge computing systems~\cite{RW-UAV-Li,RW-UAV-Wang,RW-UAV-Xu,RW-UAV-Zeng,RW-TJCCT}.
For instance, \textit{Wang et al.}~\cite{RW-UAV-Wang} designed a multi-agent deep reinforcement learning (DRL)-based resource scheduling scheme for UAV-assisted wireless edge networks; 
\textit{Xu et al.}~\cite{RW-UAV-Xu} studied the minimization of average inference delay in UAV-assisted computing systems via DRL-based resource allocation;
\textit{Li et al.}~\cite{RW-UAV-Li} investigated distributed/federated edge intelligence for UAV-assisted networks while optimizing energy consumption and resource utilization; 
and \textit{Zeng et al.}~\cite{RW-UAV-Zeng} introduced an ES-assisted UAV navigation system that integrates DRL-based decision making with network-aware containerized resource allocation to enhance navigation quality.}
{More recently, \textit{Sun et al.}~\cite{RW-TJCCT} developed a two-timescale framework that incorporates price incentive-based service provisioning with task offloading and UAV trajectory/resource optimization.
}

{Most existing UAV-assisted edge frameworks begin with a pre-established UAV infrastructure and focus on optimizing its subsequent operation, such as online scheduling, trajectory control, and edge service provisioning. In contrast, the preceding resource provisioning stage, covering vehicle-carried UAV deployment and advance ES-AP contracting before demand realization, remains largely overlooked. Consequently, the four interrelated challenges identified in Sec.~\ref{sec:Intro} have not been jointly addressed.}

\begin{table}[t!]
	\vspace{-0.35cm}
	{\footnotesize
		\caption{\footnotesize{A summary of related studies\\(ST: Spatio-Temporal, Pred.: Predictive, Het.: Heterogeneous)}\label{Tab. RW}}
		\vspace{-0.51cm}
		\begin{center}
			\setlength{\tabcolsep}{0.1mm}{
				\begin{tabular}{|c|c|c|c|c|c|c|c|c|}
					\hline
					\multirow{2}{*}{\textbf{Reference}} 
					& \multicolumn{3}{c|}{\makecell[c]{\textbf{Network}\\ \textbf{Characteristics}}} 
					& \multicolumn{3}{c|}{\textbf{Service Delivery}} 
					& \multicolumn{2}{c|}{\textbf{Participant}} \\
					\cline{2-9}
					& Static & Dynamic & ST  
					& Offline & Online & Pred. 
					& Het. users & Het. ESs \\ 
					\hline
					
					\makecell[c]{\cite{RW-UAV-Wang}}
					&  & $\surd$ & $\surd$
					&  & $\surd$ &  
					&  &  \\
					\hline
					
					\makecell[c]{\cite{RW-UAV-Xu}}
					&  & $\surd$ & 
					&  & $\surd$ &  
					& $\surd$ &  \\
					\hline
					
					\makecell[c]{\cite{RW-UAV-Li}}
					&  & $\surd$ & 
					& $\surd$ &  & 
					&  & $\surd$ \\
					\hline
					
					\makecell[c]{\cite{RW-UAV-Zeng}}
					&  & $\surd$ & 
					&  & $\surd$ & 
					&  &  \\
					\hline
				 {\makecell[c]{\cite{RW-TJCCT}}} & & {$\surd$} & {$\surd$} & & {$\surd$} & & {$\surd$} & {$\surd$} \\ \hline
					
					\makecell[c]{\cite{RW-PG-Hsieh,RW-PG-Ding,RW-PG-Wu}}
					& $\surd$ &  &
					& $\surd$ &  & 
					&  & $\surd$ \\
					\hline
					
					\cite{RW-PG-Zhou}
					& $\surd$ &  &
					& $\surd$ &  & 
					&  & $\surd$ \\
					\hline
					\cite{RW-Auction-Ren}
					&  & $\surd$ & $\surd$
					& $\surd$ &  & 
					& $\surd$ & $\surd$ \\
					\hline
					
					\cite{RW-Auction-Xu,RW-Auction-Li}
					&  &$\surd$  &
					& &  $\surd$ & 
					& $\surd$ & $\surd$ \\
					\hline

					\cite{RW-Auction-Lu}
					&  & $\surd$ &
					& $\surd$ &  & 
					& $\surd$ &  \\
					\hline

					\rowcolor[gray]{0.9}
					\textbf{This Work}
					&  & $\surd$ & $\surd$
					& $\surd$ & $\surd$ & $\surd$
					& $\surd$ & $\surd$ \\
					\hline
			\end{tabular}}
	\end{center}}
	\vspace{-1.5mm}
\end{table}

\subsection{Auction-Driven and Game-Theoretic  Service Provisioning}
Auctions have been widely used for resource provisioning from service providers to users.
For instance, \textit{Ren et al.}~\cite{RW-Auction-Ren} investigated truthful auctions for dependent task scheduling over dynamic edge networks; 
\textit{Xu et al.}~\cite{RW-Auction-Xu} designed combinatorial auction-based deep neural network (DNN) inference for heterogeneous edge-cloud networks to maximize social welfare;
\textit{Li et al.}~\cite{RW-Auction-Li} developed an online auction for generative content scheduling;
\textit{Lu et al.}~\cite{RW-Auction-Lu} proposed auction-based distributed/federated learning to improve the distributed model training speed and energy consumption in edge networks. While the above studies provide valuable insights, they are mostly \textit{reactive} (i.e., decisions are made after demand realization), and typically assume fixed resources and stable demand. In contrast, UAV-assisted edge computing involves dynamic resources and stringent delay/energy constraints, calling for spatio-temporally-aware auctions with minimal online interactions between service providers and end users while accounting for uncertainties in both resource availability and demand. This gap is addressed by our Off-AIC$^2$ framework.

Beyond auctions, potential games have also been explored as distributed optimization strategies for service provisioning\cite{RW-PG-Ding,RW-PG-Zhou}. For example, 
\textit{Ding et al.}~\cite{RW-PG-Ding} investigated task processing across user--edge--cloud networks  as potential games;
\textit{Zhou et al.}~\cite{RW-PG-Zhou} modeled computation offloading as a potential game with latency--energy trade-offs;
\textit{Hsieh et al.}~\cite{RW-PG-Hsieh} proposed a knapsack potential game for task scheduling;
\textit{Wu et al.}~\cite{RW-PG-Wu} studied task scheduling for multi-cell edge--cloud networks through potential games with NE guarantees.
Despite their notable contributions, the above works typically consider a single/homogeneous user type, overlooking the fundamentally different QoS requirements of HUs and MUs, as well as their interactions with heterogeneous ESs. Consequently, the effectiveness of potential games in UAV-assisted platforms that support human-machine coexistence, spatio-temporal task windows, and multi-party interactions remains largely unexplored, an effort addressed via our developed PG-BRD. To further highlight our contributions, Table~\ref{Tab. RW} compares our approach with representative existing studies.

\section{Overview and Modeling}
  \begin{figure*}[t!]
	\centering
	\vspace{-1mm}
	\setlength{\abovecaptionskip}{-0.0 mm}
	\includegraphics[width=2\columnwidth]{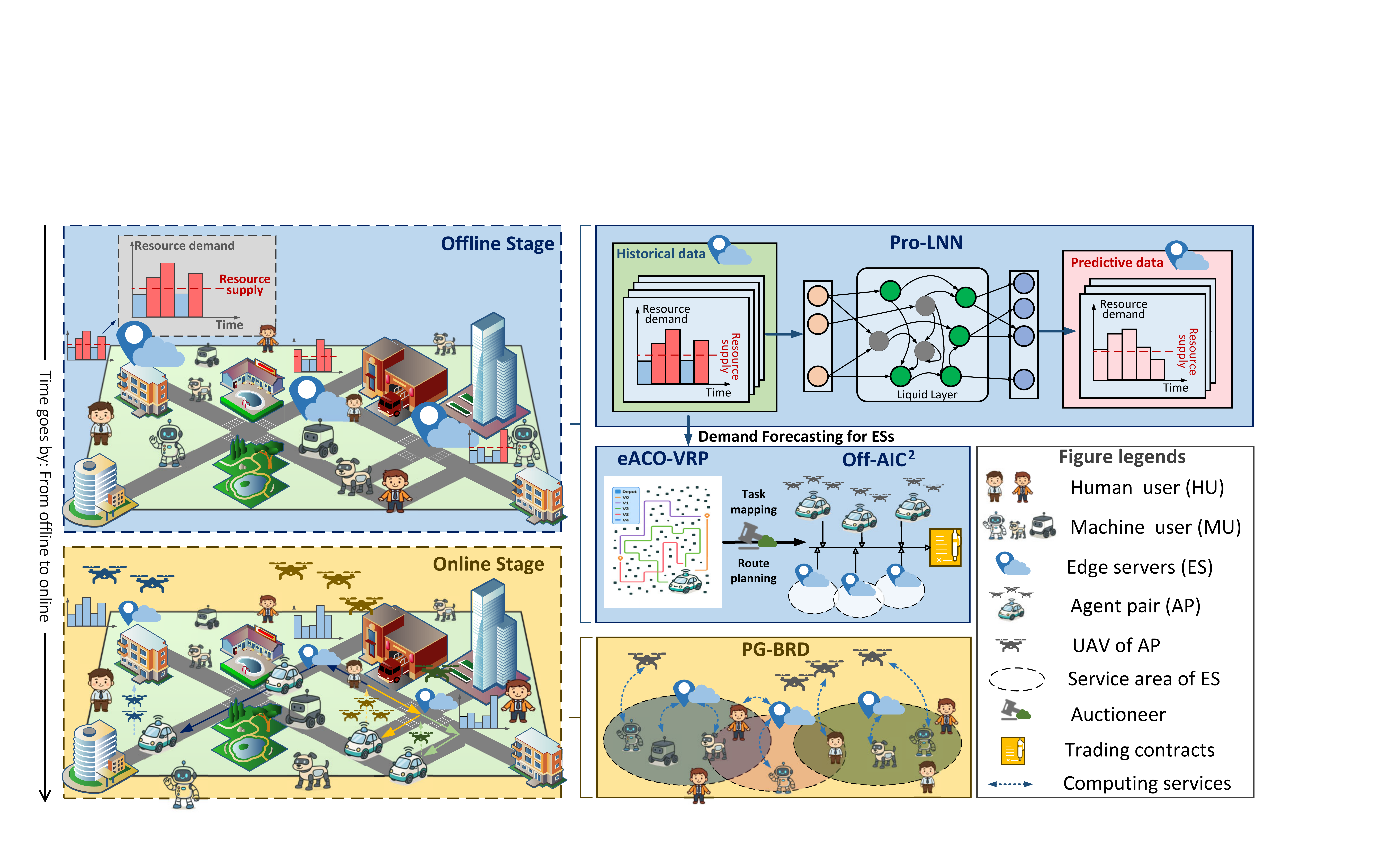}
	\vspace{-0mm}
	\caption{Schematic of FUSION over an air--ground integrated network.}
	\label{fig:system}
	\vspace{-0.45cm}
\end{figure*}
\subsection{Overview of Our Methodology}\label{sec:system_model}
We consider an air-ground integrated network, wherein nodes are partitioned into \textit{service demanders} and \textit{service providers}.

\noindent \textit{1) Service demanders (SDs):}
SDs are users that generate tasks such as model training/inference\footnote{We assume that each SD is associated with one task that requires computing services from SPs in every future trading round.}. 
To capture different QoS needs under human-machine coexistence, we further classify SDs into two categories:
\begin{itemize}[leftmargin=4mm]
	\item \textbf{HUs}, denoted by $\bm{\mathcal{U}}^{\mathrm{H}}=\{u_1^{\mathrm{H}},\dots,u_{|\bm{\mathcal{U}}^{\mathrm{H}}|}^{\mathrm{H}}\}$ are sensitive to instantaneous service latency, as applications such as AR/VR, interactive gaming, real-time video conferencing, and intelligent driving are interactive and experience-driven, where even a slight change in latency or QoS can leave a pronounced impact on their perceived utility.
	\item \textbf{MUs}, denoted by $\bm{\mathcal{U}}^{\mathrm{M}}=\{u_1^{\mathrm{M}},\dots,u_{|\bm{\mathcal{U}}^{\mathrm{M}}|}^{\mathrm{M}}\}$, correspond to autonomous devices such as robots, mapping vehicles, and sensing units. 
	MUs are primarily concerned about whether their computation tasks can be completed before a certain deadline, and are therefore less impacted by instantaneous latency, as long as their overall task feasibility is preserved (e.g., a robot's image-analysis task is considered feasible as long as the inference results are obtained before the deadline).
\end{itemize}

\noindent \textit{2) Service providers (SPs):}
We consider two types of SPs:
\begin{itemize}[leftmargin=4mm]
	\item \textbf{ESs}, denoted by $\bm{\mathcal{S}}=\{s_1,\dots,s_{|\bm{\mathcal{S}}|}\}$, are geo-distributed across the map and provide low-latency services to nearby users. Each ES operates with a limited pool of orthogonal subcarriers and finite resource capacity, inherently constraining the number of users that can be concurrently served. During peak demand periods, hotspot ESs are particularly susceptible to resource saturation, which can trigger substantial queuing delays or even admission blocking.
	\item \textbf{APs}, denoted by $\bm{\mathcal{V}}=\{v_1,\dots,v_{|\bm{\mathcal{V}}|}\}$, are formed by vehicles carrying multiple on-board charging UAVs.
	Due to the limited power of UAVs, vehicles first traverse pre-planned routes to transport UAVs to designated service/ES locations. Upon arriving at their contracted ESs, UAVs are deployed to offer extra computing resources. This procedure incurs operational costs, including vehicle repositioning, UAV dispatching/launching procedures, flight energy consumption, and maintenance overhead.
\end{itemize}

Under this two-sided demand-supply structure, SDs compete for reliable computing services, while SPs collaboratively offer computing resources under their capacity and operational limitations. 
Nevertheless, in such a dynamic and distributed ecosystem, ESs face \textit{uncertain service demand:} the temporal and spatial distribution of service request peaks across ESs is highly stochastic, and the composition ratio between HUs and MUs in the service requests also varies over time. {To capture the time-varying behavior, a practical transaction is discretized into $T$ time-slots, with index $t$, i.e., $t \in \left\{0, 1, 2, ..., T-1 \right\}$.} 

{ Further, to facilitate efficient service provision between SDs and SPs, we design a two-stage framework with a clear separation between offline and online stages, as shown in Fig. \ref{fig:system}. From the ES perspective, the offline stage (Sec.~\ref{SEC:OFFLINE}) precedes each actual trading round, when only historical records and statistical demand patterns are available. As future HUs and MUs are unknown, user-level matching cannot be pre-planned. Instead, only aggregate ES-level demand can be predicted. Accordingly, Pro-LNN forecasts future service demand, enabling the joint optimization of AP visiting routes, UAV deployment, and auction-based ES–AP contracts (i.e., Off-AIC$^{2}$).}
In contrast, as detailed in Sec.~\ref{SEC:ONLINE}, the online stage starts when real computing tasks arrive and the actual identities and locations of SDs become known. With full knowledge of active participants and their instantaneous task requirements, the system performs real-time task--resource scheduling. In this stage, we model the SD-side assignment process as an exact-potential game. {Each SD independently updates its scheduling decision according to a potential-aligned payoff, while the resulting assignment profile implicitly determines SP-side admissions, served sets, congestion levels, and WPS-based service assignment under the corresponding coverage and capacity constraints. The induced game admits finite best-response convergence to an $\varepsilon$-NE under a positive improvement threshold, and to a pure-strategy Nash equilibrium when the threshold is zero.}

{
	Rather than relying on a full-information oracle, FUSION is organized around the information structure of practical air--ground edge service provisioning. Before the online service window starts, the system only observes historical records, predicted aggregate ES demand, AP mobility/resource states, and reported ES/AP bids and asks; hence, the offline stage prepares AP routes, ES--AP contracts, and reserved service capacities. After actual task arrivals, channel conditions, SD locations, and congestion states are revealed, the online stage refines SD--SP assignments via PG-BRD over these prepared resources. In this sense, Pro-LNN, eACO-VRP, Off-AIC$^2$, and PG-BRD form an information-flow pipeline across different decision layers, and the decomposition-induced performance gap is quantified in Appx.~G.
}
\vspace{-2mm}
\subsection{Modeling of SDs, SPs, and Contracts}
Hereafter, we provide detailed modeling of SPs and SDs.

\noindent $\bullet$ \textit{Modeling of SDs.}
We model HUs and MUs as follows:

\noindent\textbf{\textit{(i)} HUs:}
Each HU $u_m^{\mathrm{H}} \in \bm{\mathcal{U}}^{\mathrm{H}}$ is modeled by a 5-tuple
$u_m^{\mathrm{H}} = 
\{ e_m^{\mathrm{H,tx}},\, r_m^{\mathrm{H}},\, d_m^{\mathrm{H}},\, f_m^{\mathrm{H,loc}}, \, \bm{C}_m^{\mathrm{H}} \}$,
where $e_m^{\mathrm{H,tx}}$ is its uplink transmit power (Watts);
$r_m^{\mathrm{H}}$ is its required computing workload (CPU cycles);
$d_m^{\mathrm{H}}$ is its input data size (bits); 
$f_m^{\mathrm{H,loc}}$ is its local CPU frequency;
and $\bm{C}_m^{\mathrm{H}}$ represents the set of reachable ESs within the communication coverage of $u_m^{\mathrm{H}}$.

\noindent\textbf{\textit{(ii)} MUs:}
Each MU $u_n^{\mathrm{M}} \in \bm{\mathcal{U}}^{\mathrm{M}}$ is represented by a 6-tuple
$u_n^{\mathrm{M}} = 
\{ \tau_n^{\mathrm{ddl}},\, e_n^{\mathrm{M,tx}},\, r_n^{\mathrm{M}},\, d_n^{\mathrm{M}},\, f_n^{\mathrm{M,loc}},\, \bm{C}_n^{\mathrm{M}} \}$,
where $\tau_n^{\mathrm{ddl}}$ is its task deadline time;
$e_n^{\mathrm{M,tx}}$ is its transmit power; 
$r_n^{\mathrm{M}}$ and $d_n^{\mathrm{M}}$ denote its workload and data size, respectively. Also,
$f_n^{\mathrm{M,loc}}$ denotes its CPU capability,
and $\bm{C}_n^{\mathrm{M}}$ captures SPs that can serve $u_n^{\mathrm{M}}$.


\noindent $\bullet$ \textit{Modeling of SPs.}
We model ESs and APs as follows:

\noindent\textbf{\textit{(i)} ESs:}
{Each ES $s_i \in \bm{\mathcal{S}}$ is represented by a 5-tuple:
$s_i = 
\{t_i^{\mathrm{E,b}}, f_i^{\mathrm{E}},\, e_i^{\mathrm{E}},\, G_i,\, l_i^{\mathrm{E}} \}$,
where $t_i^{\mathrm{E,b}}$ denotes the time-slot at which $s_i$ begins to request resource supply to mitigate the shortage of edge resources within its own coverage area;
$f_i^{\mathrm{E}}$ is its computing capacity (e.g., CPU cycles/s);
$e_i^{\mathrm{E}}$ is its local power consumption (Watts); $l_i^{\mathrm{E}} = (x_i^{\mathrm{E}}, y_i^{\mathrm{E}}, h_i^{\mathrm{E}})$ is its 3D location.
In addition, each ES $s_i$ is presumed to have a set of orthogonal subcarriers that can serve at most $G_i$ SDs simultaneously.}

\noindent\textbf{\textit{(ii)} APs:}
{each AP $v_k\in \bm{\mathcal{V}}$ is modeled by a 7-tuple:
$v_k =
\{ C_k^{\mathrm{A}},\, f_k^{\mathrm{A}},\, e_k^{\mathrm{comp}},\, e_k^{\mathrm{move}},\, e_k^{\mathrm{fly}},\, G_k^\mathrm{A},\, l_k^{\mathrm{A}} \langle t\rangle \}$,
where $C_k^{\mathrm{A}}$ denotes the maximum number of UAVs carried by the vehicle; $f_k^{\mathrm{A}}$ captures the computing capability of each mounted UAV; $e_k^{\mathrm{move}}$, $e_k^{\mathrm{comp}}$, and $e_k^{\mathrm{fly}}$ denote the power consumption of ground movement, UAV computation, and UAV flight, respectively; $G_k^\mathrm{A}$ denotes the maximum number of SDs that each UAV of $v_k$ can serve through its limited orthogonal subcarriers; $l_k^{\mathrm{A}}  \langle t\rangle{=}(x_k^{\mathrm{A}}\langle t\rangle, y_k^{\mathrm{A}}\langle t\rangle)$ is the ground location of the vehicle at time $t$.} We further define the total online service capacity of AP $v_k$ as $Q_k^{\mathrm{A}} = C_k^{\mathrm{A}}G_k^{\mathrm{A}}$, which denotes the maximum number of SD service units that can be supported by all UAVs carried by AP $v_k$ during the considered online service window.

\noindent $\bullet$ \textit{Modeling of Offline Contracts.} {In the offline stage, a trading contract is established between each ES $s_i$ and AP $v_k$, denoted as
$	\mathbb{C}^{\mathrm{E}\leftrightarrow\mathrm{A}}_{i,k}
	= \{N^\mathrm{Trad}_{i,k}, p_{i,k}^{\mathrm{ES}},
	r_{i,k}^{\mathrm{ES}}\}$, where $N^\mathrm{Trad}_{i,k}$ represents the quantity of traded resources, 
$p_{i,k}^{\mathrm{ES}}$ indicates unit payment (per task) from $s_i$ to contractual AP $v_k$, and 
$r_{i,k}^{\mathrm{ES}}$ refers to the unit reward (per task) of $v_k$ for offering services. }

\vspace{-3mm}
\section{Offline Stage: Forecasting-Driven Contracting and AP Deployment} \label{SEC:OFFLINE}
We first delve into capturing the service leasing among different types of SPs, for which we systematically transform the handling of uncertain spatio-temporal service demand into proactive, incentive-compatible, and cost-aware AP deployment decisions. Our major efforts include: \textit{(i)} a predictive workload forecasting module that estimates the spatiotemporal dynamics of demands across distributed ESs; \textit{(ii)} a cost-aware vehicle routing strategy based on eACO, which constructs time-window-feasible candidate associations/matching of APs to ESs; \textit{(iii)} a contract-design module that leverages an incentive-compatible auction mechanism to determine long-term cooperative agreements between ESs and APs, including winning trading pairs, allocated tasks, and corresponding payment rules.

We let $x^\mathrm{Off}_{i,k}$ denote the assignment/matching between ESs and APs, where $x^\mathrm{Off}_{i,k}=1$ indicates that ES $s_i$ successfully secures the service of AP $v_k$ and thereby signs an offline contract, and $x^\mathrm{Off}_{i,k}=0$ otherwise. For notational simplicity, we define $\bm{X}^\mathrm{Off}=\big\{x^\mathrm{Off}_{i,k} \mid s_i \in \bm{\mathcal{S}}, v_k \in \bm{\mathcal{V}}\big\}$ as the assignment profile.

\subsection{Utilities of ESs, AP, and the Auctioneer}
\subsubsection{Utility of ESs and APs}
In the offline stage, as delivering edge services to users and leasing computing capacity can generate both costs and revenues for ESs, we model the utility of each ES $s_i$ as its net profit as follows:
\begin{equation}\label{equ. U_ES_1}
	U^\mathrm{ES,Off}_{i} = \sum_{v_k\in\bm{\mathcal{V}}}x^\mathrm{Off}_{i,k}N_{i,k}^\mathrm{Trad}\left(\overline{p}_{i}-p^\mathrm{ES}_{i,k}\right) ,
\end{equation}
where $\overline{p}_{i}$ denotes the historical average payment earned by $s_i$ for providing edge services to users.

Besides, we model the utility of each AP as the monetary reward obtained from serving contracted ESs minus its mobility, flight, computation, and maintenance costs. 
For AP $v_k$ serving ES $s_i$, the vehicle moving cost and UAV flight cost are
$c^\mathrm{move}_{i,k}=\tau^\mathrm{move}_{i,k}e^\mathrm{move}_k$ and
$c^\mathrm{fly}_{i,k}=\tau^\mathrm{fly}_{i,k}e^\mathrm{fly}_k$, respectively.
The unit computation cost of processing one task is $c^{\mathrm{comp}}_{i,k}=\frac{\overline r\, e_k^\mathrm{comp}}{f_k^\mathrm{A}} $,
where $\overline r$ is the historical average computational workload per task. 
Therefore, the offline utility of AP $v_k$ is
\begin{equation}\label{equ. SEC_U}
{\small	\begin{aligned}
		\hspace{-1mm}U^\mathrm{A,Off}_{k}
		&\hspace{-1mm}=\hspace{-1mm}
		\sum_{s_i\in\bm{\mathcal{S}}}x^\mathrm{Off}_{i,k}
		\Big[
		N_{i,k}^\mathrm{Trad} r^\mathrm{ES}_{i,k}
		\hspace{-1mm}-\hspace{-1mm}\omega_1\big(
		c^\mathrm{move}_{i,k}
		\hspace{-1mm}+\hspace{-1mm}c^\mathrm{fly}_{i,k}
		\hspace{-1mm}+\hspace{-1mm}N_{i,k}^\mathrm{Trad}c^{\mathrm{comp}}_{i,k}
		\big)
		\Big]  \\
		&-\omega_2 c_k^\mathrm{hard}
		\mathbbm{1}_{\{\sum_{s_i\in\bm{\mathcal{S}}}x^\mathrm{Off}_{i,k}>0\}},
	\end{aligned}}
\end{equation}
where $\mathbbm{1}_{\{\cdot\}}$ denotes the binary indicator function, $\omega_1$ converts energy consumption into monetary cost and $\omega_2 c_k^\mathrm{hard}$ captures the maintenance/depreciation cost incurred when AP $v_k$ participates in service provisioning.

\subsubsection{Utility of the Auctioneer}
{
	An auctioneer (e.g., a trusted third party) plays a pivotal role in coordinating interactions among APs and ESs, including bid/ask collection, determining winning AP--ES associations, and designing trading contracts.
} We define the utility of the auctioneer as the difference between the payments collected from ESs and the rewards paid to APs:
\begin{equation}\label{key}
	\begin{aligned}
		U^\mathrm{Auc,Off}=
		\sum_{s_i\in \bm{\mathcal{S}}}\sum_{v_k\in \bm{\mathcal{V}}}x^\mathrm{Off}_{i,k}N^\mathrm{{Trad}}_{i,k} \left(p^\mathrm{{ES}}_{i,k}-r_{i,k}^\mathrm{ES}\right).
	\end{aligned}
\end{equation}

\vspace{-3mm}
\subsection{Problem of Interest in Offline Stage of FUSION}
Our overarching goal during the offline stage of FUSION is maximizing the social welfare (SW) defined as the collective utilities of engaged parties (i.e., ESs, APs, auctioneer) as:
\begin{equation}\label{equ. SW}
	\begin{aligned}
			U^\mathrm{SW}
			=
			\sum_{s_i\in \bm{\mathcal{S}}}U^\mathrm{ES,Off}_{i}
			+
			\sum_{v_k\in \bm{\mathcal{V}}}U^\mathrm{A,Off}_{k}
			+
			U^\mathrm{Auc,Off}.
		\end{aligned}
\end{equation}
{
	During this offline stage, we determine the optimal winning ES--AP pairs (i.e., $x^\mathrm{Off}_{i,k}$ combinations), while establishing their corresponding trading contracts (i.e., determining $\mathbb{C}_{i,k}^{\mathrm{E\leftrightarrow A}}$), formulated by optimization problem $\bm{\mathcal{F}}^{(\mathrm{I})}$:
}
\begin{subequations}
	\begin{align}
		\bm{\mathcal{F}}^{(\mathrm{I})}:~&{\max_{\bm{X}^\mathrm{Off},\,\{\mathbb{C}_{i,k}^{\mathrm{E}\leftrightarrow\mathrm{A}}\}}
			~~U^\mathrm{SW}}\tag{5}\label{PF_1}\\
		\text{s.t.}~~~
		&\sum_{v_k\in\bm{\mathcal{V}}} x^\mathrm{Off}_{i,k}\le1,~~~ \forall s_i\in\bm{\mathcal{S}}\tag{5a}\label{9b}\\
		&x^\mathrm{Off}_{i,k}\in\left\{0,1\right\},~~~\forall s_i\in\bm{\mathcal{S}},~\forall v_k\in\bm{\mathcal{V}}\tag{5b}\label{9c}\\
&\sum_{s_i\in\bm{\mathcal{S}}}x^\mathrm{Off}_{i,k}N^\mathrm{Trad}_{i,k}
\le Q_k^{\mathrm{A}}, 
~~~\forall v_k\in\bm{\mathcal{V}}.
\tag{5c}\label{9d}
	\end{align}
\end{subequations}
\setcounter{equation}{5}

\noindent where~(\ref{9b}) ensures that each ES is matched with (at most) an AP, (\ref{9c}) enforces the binary nature of $x^\mathrm{Off}_{i,k}$, and (\ref{9d}) guarantees that the promised resources in signed contracts do not exceed the supply of each AP $v_k$. {
	Problem $\bm{\mathcal{F}}^{(\mathrm{I})}$ represents a combinatorial mixed-integer optimization problem~\cite{AUCTION,NP} involving binary ES--AP assignment variables, continuous contract variables, and route-feasibility constraints. The problem becomes computationally challenging because ES--AP matching, service quantities, prices, and AP routes are tightly coupled.}
To tackle this, we first note that the quality of any auction-based solution fundamentally depends on how accurately the future resource usage can be predicted. 
{Motivated by this, we hereafter develop Pro-LNN to estimate the spatio-temporal demands of ESs; then, based on these forecasts, we design eACO-VRP to plan APs' service routes and recommend suitable ES sets. Building on these two components, we finally design the Off-AIC$^2$ that determines the winning ES-AP association pairs together with their contracts.}

\vspace{-1mm}
\subsection{Forecasting, Agent Deployment, and Contracting}
{We next present the offline stage that couples demand forecasting (Pro-LNN), AP deployment (eACO-VRP), and long-term contracting (Off-AIC$^2$).
}

\noindent $\bullet$ At the beginning of each offline trading round, each ES locally runs Pro-LNN to predict its service demand over future rounds, and submits the resulting demand profile as bid information to the auction platform. Each AP, in turn, reports its available service capacity and ask price based on its computing capability, energy consumption, and available service duration.

\noindent $\bullet$ {After collecting bids, asks, predicted ES demands, and AP resource states, the auction platform first invokes eACO-VRP to construct route-feasible AP service plans, including the ESs that each AP can visit and their visiting order. Based on these route-feasible ES--AP candidates, Off-AIC$^2$ then determines the final winning ES--AP contracts, payments, and rewards.}

\noindent $\bullet$ Subsequently, Off-AIC$^2$ applies its critical-price-based pricing rules to compute settlement payments for winning ESs and rewards for winning APs.

\noindent $\bullet$ The final offline outcome includes the ES--AP matching relations, the payment vectors, and the pre-planned service routes for each AP, which will be taken as inputs to the online stage over the subsequent time horizon.

{
	This ordering ensures that incentive-aware contracting is performed only over physically executable ES--AP service plans, while the prepared routes, contracts, and capacities serve as inputs to the online PG-BRD scheduler\footnote{A detailed discussion on the necessity and complementarity of the FUSION components is provided in Appx.~C.}
}
In what follows, we describe the three modules, i.e., Pro-LNN-based demand forecasting, eACO-VRP-based AP deployment, and the Off-AIC$^2$ contracting mechanism.

\subsubsection{Design of Proactive Spatio-Temporal Demand Prediction via Liquid Neural Networks (Pro-LNN)} 
Unlike discrete-time recurrent neural networks (RNNs) or Transformers, LNNs~\cite{LNN} implement continuous-time dynamics with neuron-wise time constants and exponential-style state updates. This makes them particularly suitable for modeling time series that are sampled at irregular resource trading intervals, exhibit non-stationary spatio–temporal demand patterns, and must be processed under strict edge-computing resource constraints. In our setting, Pro-LNN takes as input the ES-side demand time series across trading rounds $\nu=1,2,\ldots$\footnote{Note that $\nu$ indexes the offline trading round, and each offline trading round is followed by an online service window consisting of $T$ online execution slots.}, where the time gaps $\Delta\nu$ between rounds can be irregular and some observations may be missing due to asynchronous reporting. The predictor therefore must stay stable under such irregular $\Delta\nu$ while capturing both short-term and long-term trends, motivating our LNN-based architecture.

Let $N_{i}^{\mathrm{Dem},(\nu)}$ denote the task demand of ES $s_i$ at the $\nu$-th trading round. We form an input vector
$\mathbf{z}_i^{(\nu)}=\big[N_{i}^{\mathrm{Dem},(\nu)},\mathbf{x}_i^{(\nu)}\big]$,
where $\mathbf{x}_i^{(\nu)}$ stacks optional exogenous features at round $\nu$, such as time-of-day and day-of-week indicators, ES-type and capacity descriptors, and local mobility/load statistics extracted from recent logs. Given a length-$L$ history of past observations
$\{\mathbf{z}_{i}^{(\nu-L+1)},\ldots,\mathbf{z}_{i}^{(\nu)}\}$, Pro-LNN encodes this sequence via a liquid cell with neuron-wise time constants, integrates the measurements into a hidden state $\mathbf{h}_i^{(\nu)}$, and then applies a shallow readout network to produce an $H$-step forecast
$\{\hat N_{i}^{\mathrm{Dem},(\nu+1)},\ldots,\hat N_{i}^{\mathrm{Dem},(\nu+H)}\}$.
In a nutshell, Pro-LNN is trained with a composite loss, which is an $H$-step MSE between the forecasts and the ground-truth demands, and we additionally use a multi-step consistency regularizer that rolls the predictor forward in closed-loop over $H$ steps and penalizes the accumulated prediction errors at future trading rounds.
These forecasts are computed offline before each trading round and then fed into the eACO-VRP module for ES recommendation and AP route pre-planning, as well as the auction module for ES--AP contract design. The detailed update equations and training pseudocode of our Pro-LNN are deferred to Appx.~B.1 to avoid detracting from the main discussion.

\subsubsection{Enhanced Ant Colony Optimization for Vehicle Route Planning (eACO-VRP)}\label{sec:eACO-VRP}
{
	After Pro-LNN updates the ES-side demand forecasts for the upcoming online service window, FUSION first needs to check whether APs can physically reach and serve the predicted overloaded ESs. For each AP, this requires selecting a subset of ESs and arranging their visiting order under service-window, mobility, UAV-capacity, and heterogeneous travel/service-cost constraints, which leads to a constrained vehicle routing problem-like planning problem~\cite{ACO,ACO1}.
}
{
	We therefore develop eACO-VRP as an offline AP route planner executed before each online service window. It constructs executable vehicle-UAV routes based on predicted demand, AP states, travel costs, service windows, and capacity limits. The resulting route-feasible ES--AP candidates are then fed into Off-AIC$^2$ for final contract, payment, and reward determination, while fast online variations in HU/MU arrivals, channel states, SD locations, and congestion states are handled by PG-BRD\footnote{Further discussions on the responsiveness, scalability, and robustness of eACO-VRP under demand drift and topology perturbations are provided in Appx.~B.3.}.
} Based on this design, for a given \(v_k\), we model \(v_k\) and all ESs as vertices of a directed complete graph \(\bm{\mathcal{G}}_k(\bm{\mathcal{N}}_k,\bm{\mathcal{E}}_k)\). Here, $\bm{\mathcal{N}}_k=\{\mathrm{n}_0,\mathrm{n}_1,\ldots,\mathrm{n}_{|\bm{\mathcal{S}}|}\}$ denotes the vertex set, where \(\mathrm{n}_0\) represents \(v_k\) and \(\mathrm{n}_1,\ldots,\mathrm{n}_{|\bm{\mathcal{S}}|}\) denote ESs; while the edge set is
$\bm{\mathcal{E}}_k=\{(\mathrm{n}_r,\mathrm{n}_s)\mid \mathrm{n}_r,\mathrm{n}_s\in\bm{\mathcal{N}}_k,\ r\neq s\}$.
Let the initial location of \(\mathrm{n}_0\) be \(l^{\mathrm{v}}_{0}\langle 0\rangle\), and the fixed locations of \(\mathrm{n}_1,\ldots,\mathrm{n}_{|\bm{\mathcal{S}}|}\) be \(l^{\mathrm{s}}_1,\ldots,l^{\mathrm{s}}_{|\bm{\mathcal{S}}|}\).
We use \(\mathbb{K}\) to denote the set of ants: each ant \(\mathbbm{k}\in\mathbb{K}\) simulates the trajectory of \(v_k\), starting from \(l^{\mathrm{v}}_{0}\langle 0\rangle\). At each step, it evaluates the current time-slot $t_{\mathrm{cur}}$, the travel time to each candidate ES, the ES service-window beginning time, the remaining UAV/service capacity, and the route utility upon arrival.

{For a candidate transition from $\mathrm{n}_r$ to $\mathrm{n}_s$, let $a_s=t_{\mathrm{cur}}+\tau^{\mathrm{move}}_{r,s} $ be the predicted arrival time and define the service-window slack as $\mathrm{slack}_s=t_s^{\mathrm{E,b}}-a_s $. A candidate ES is feasible only if $\mathrm{slack}_s\ge0$ and AP $v_k$ still has sufficient remaining UAV/service capacity. We define the distance-based heuristic desirability as $\eta_{r,s}=\frac{1}{d_{r,s}+\epsilon_d} $, where $d_{r,s}$ is the travel distance between $\mathrm{n}_r$ and $\mathrm{n}_s$ and $\epsilon_d>0$ avoids division by zero. The state-transition rule, i.e., the probability that ant \(\mathbbm{k}\) moves from \(\mathrm{n}_r\) to \(\mathrm{n}_s\) (\(r\neq s\neq 0\)), is
\begin{equation}\label{Pro}
	\resizebox{0.93\linewidth}{!}{$
		pr^{\mathbbm{k}}_{r,s}=
		\begin{cases}
			\displaystyle
			\frac{\varphi^{\varepsilon_1}_{r,s}\,\eta^{\varepsilon_2}_{r,s}\,
				(\mathrm{slack}_s+\epsilon_t)^{-\varepsilon_3}}
			{\sum\limits_{q\in J^{\mathbbm{k}}(\mathrm{n}_r)}
				\varphi^{\varepsilon_1}_{r,q}\,\eta^{\varepsilon_2}_{r,q}\,
				(\mathrm{slack}_q+\epsilon_t)^{-\varepsilon_3}}
			& \text{if } \mathrm{n}_s \in J^{\mathbbm{k}}(\mathrm{n}_r),\\[12pt]
			0 & \text{otherwise,}
		\end{cases}
		$}
\end{equation}
where \(\varepsilon_1,\varepsilon_2,\varepsilon_3>0\) are weights and $\epsilon_t>0$ avoids division by zero. Here, \(\varphi_{r,s}\) denotes the pheromone level on edge \((\mathrm{n}_r,\mathrm{n}_s)\), while \(\eta_{r,s}\) is a distance-based desirability rather than the distance itself. The factor $(\mathrm{slack}_s+\epsilon_t)^{-\varepsilon_3}$ prioritizes feasible ESs with tighter service windows, while infeasible ESs with negative slack are excluded from $J^{\mathbbm{k}}(\mathrm{n}_r)$.}
In~\eqref{Pro}, $J^{\mathbbm{k}}(\mathrm{n}_r)$ denotes the feasible ES-vertex set available to ant $\mathbbm{k}$ at node $\mathrm{n}_r$: an ES vertex $\mathrm{n}_s$ belongs to $J^{\mathbbm{k}}(\mathrm{n}_r)$ if \textit{(i)} $t_{\mathrm{cur}}+\tau^{\mathrm{move}}_{r,s}\le t_s^{\mathrm{E,b}}$, and \textit{(ii)} AP $v_k$ still has enough remaining UAV/service capacity to serve the ES demand.
Once a route is found by ants, pheromones on edges are updated  as follows:
\begin{equation}\label{pheromone}
	\varphi_{r,s}\ \leftarrow\ (1-\theta)\,\varphi_{r,s} + \Delta\varphi_{r,s},
\end{equation}
where \(0<\theta<1\) is the evaporation coefficient. The pheromone deposit is defined as
\begin{equation}\label{pheromone1}
	\Delta\varphi_{r,s}=
	\begin{cases}
		\chi\,\max\{0,\Delta U^\mathrm{A}_{r,s}\},
		& \text{if }(\mathrm{n}_r,\mathrm{n}_s)\in\bm{M}^{\mathrm{max}},\\
		0,
		& \text{otherwise},
	\end{cases}
\end{equation}
where $\chi>0$ is the pheromone-deposit scaling factor and $\Delta U^\mathrm{A}_{r,s}$ denotes the utility increment for \(v_k\) when moving from \(\mathrm{n}_r\) to serve the ES at \(\mathrm{n}_s\). The best path among ants is selected as $\bm{M}^{\mathrm{max}}\in\arg\max_{\bm{M}^{\mathbbm{k}}:\mathbbm{k}\in\mathbb{K}}\sum_{(\mathrm{n}_a,\mathrm{n}_b)\in \bm{M}^{\mathbbm{k}}}\Delta U^\mathrm{A}_{a,b}. $

For brevity, the pseudo-code of eACO-VRP is provided in Alg.~\ref{Alg2}, while its four main steps are summarized below:

\begin{algorithm}[t]
	{\scriptsize
		\caption{{Proposed eACO-VRP}}\label{Alg2}
		\LinesNumbered
		\textbf{Input:} AP $v_k$ (source node $\mathrm{n}_0$), ES set $\bm{\mathcal{S}}$, directed complete graph $\bm{\mathcal{G}}_k(\bm{\mathcal{N}}_k,\bm{\mathcal{E}}_k)$, ant set $\mathbb{K}$, max iterations $iter_{\mathrm{max}}$ \\
		
		\textbf{Initialization:} $\varphi_{r,s}\!\leftarrow\!\varphi_0$ for all $(\mathrm{n}_r,\mathrm{n}_s)\!\in\!\bm{\mathcal{E}}_k$; $iter\!\leftarrow\!1$ \\
		
		\While{$iter \le iter_{\mathrm{max}}$}{
			\For{each ant $\mathbbm{k} \in \mathbb{K}$}{
				$\bm{M}^{\mathbbm{k}}\!\leftarrow\!\varnothing$;\quad $\mathrm{n}_{r}\!\leftarrow\!\mathrm{n}_0$ \\
				\While{end condition not met}{
					\textbf{Compute:} feasible set $J^{\mathbbm{k}}(\mathrm{n}_{r})$ and transition probabilities $pr^{\mathbbm{k}}_{r,s}$ (via (\ref{Pro})) \\
					\If{$J^{\mathbbm{k}}(\mathrm{n}_{r})\neq\varnothing$}{
						{sample next node $\mathrm{n}_{s} $ according to $ pr^{\mathbbm{k}}_{r,s}$ from $J^{\mathbbm{k}}(\mathrm{n}_{r})$} \\
						$\bm{M}^{\mathbbm{k}}\!\leftarrow\!\bm{M}^{\mathbbm{k}}\cup\{\text{edge}(\mathrm{n}_{r},\mathrm{n}_{s})\}$ \\
						$\mathrm{n}_{r}\!\leftarrow\!\mathrm{n}_{s}$
					}
					\Else{break}
				}
			}
			
			{$\bm{M}^{\mathrm{max}}\!\leftarrow\!\arg\max_{\bm{M}^{\mathbbm{k}}:\mathbbm{k}\in\mathbb{K}}\;\sum_{(\mathrm{n}_r,\mathrm{n}_s)\in \bm{M}^{\mathbbm{k}}}\triangle {U^\mathrm{A}_{r,s}}$} \\
			\For{each $(\mathrm{n}_r,\mathrm{n}_s)\in\bm{\mathcal{E}}_k$}{Pheromone evaporation and deposit (\ref{pheromone}) and (\ref{pheromone1})
			}
			
			$iter\!\leftarrow\!iter+1$
		}
		
		\textbf{Output:} $\bm{M}^{\mathrm{max}}$ (pre-planned route for $v_k$), and the recommended ES tasks along $\bm{M}^{\mathrm{max}}$
	}
\end{algorithm}

\noindent\textbf{Step 1. Initialization} (lines 1-2, Alg. \ref{Alg2}):
Using the AP/ES information (e.g., geographic locations, the  time-slot when ES $s_i$ begins to request extra resources, AP $v_k$ payments and service costs, and the travel/service time and energy), we construct, for each AP $v_k$ (with source node $\mathrm{n}_0$), a  graph $\bm{\mathcal{G}}_k(\bm{\mathcal{N}}_k,\bm{\mathcal{E}}_k)$ as discussed above.


\noindent\textbf{Step 2. Path exploration} (lines 4-13, Alg. \ref{Alg2}):
At each iteration, an ant $\mathbbm{k}$ located at its current vertex $\mathrm{n}_r$ jointly evaluates the current time-slot, the travel time to each vertex, the service time at that vertex, and the  time-slot when ES $s_i$ begins to request extra resources. This screening step yields the feasible candidate set $J^{\mathbbm{k}}(\mathrm{n}_r)$ and the associated transition probabilities $pr^{\mathbbm{k}}_{r,s}$ as discussed above. The ant then probabilistically selects its next vertex using \eqref{Pro}. 

\noindent\textbf{Step 3. Pheromone update} (lines 14-16, Alg. \ref{Alg2}):
For each constructed path $\bm{M}^{\mathbbm{k}}$, we compute the cumulative utility $\sum_{(r,s)\in \bm{M}^{\mathbbm{k}}}\triangle {U^\mathrm{A}_{r,s}}$. Among all ants, the path with the largest utility is selected as $\bm{M}^{\mathrm{max}}$, and the pheromone trails on its edges are reinforced according to  \eqref{pheromone} and \eqref{pheromone1}, while pheromones on other edges evaporate.

\noindent\textbf{Step 4. Output} (line 18, Alg. \ref{Alg2}):
After finishing all the iterations, the highest-utility route $\bm{M}^{\mathrm{max}}$ is selected as the pre-planned path for AP $v_k$, with the ES tasks along this route recommended as its preferred service targets.

\subsubsection{Offline Auction-based Incentive-Compatible Contracting (Off-AIC$^2$)}
Building on the pre-planned AP routes obtained above, we now introduce Off-AIC$^2$, an offline double-auction mechanism that establishes incentive-aware service contracts between ESs and APs. The goal is to decide ES--AP associations, how much service capacity should be reserved along each AP route, and how payments/rewards should be set so that participants (i.e., ESs and APs) have limited incentive to misreport their private values. 

In Off-AIC$^2$, on the \emph{buyer} side, each ES $s_i$ predicts its aggregate demand over the upcoming online service window as
$D_i = \sum_{h=1}^{H}\hat N_i^{\mathrm{Dem},(\nu+h)}$, where $D_i$ is the forecasted number of service units/tasks that require AP-assisted support in the next trading round. ES $s_i$ has a private per-unit valuation $v_i$ for receiving edge service and participates in Off-AIC$^2$ by submitting a per-unit bid $\overline{\mathrm{bid}}_i$ together with its forecasted demand $D_i$. Here, $\overline{\mathrm{bid}}_i$ represents the maximum payment that $s_i$ is willing to make per unit of service. On the \emph{seller} side, each AP $v_k$, based on its total service capacity $Q_k^\mathrm{A}$ and its service cost, reports a per-unit ask price $\mathrm{ask}_k$ and its capacity $Q_k^\mathrm{A}$, where $\mathrm{ask}_k$ is the minimum acceptable payment per unit of service\footnote{Here, $\mathrm{ask}_k$ denotes AP $v_k$'s route-aware amortized per-unit service cost, covering its expected computation, vehicle movement, UAV flight, and maintenance/depreciation costs under the route-feasible service plan in the current offline trading round.}. Subsequently, for a potential ES--AP pair $(s_i,v_k)$, we define the pairwise surplus as
$	U_{i,k}^{\mathrm{Off}}=\overline{\mathrm{bid}}_i-\mathrm{ask}_k $, where $U_{i,k}^{\mathrm{Off}}>0$ indicates that the ES's per-unit willingness to pay exceeds the AP's per-unit ask, and hence the pair is economically admissible before route-feasibility screening.
Off-AIC$^2$ then uses the reported bids, asks, demands, and capacities to compute pairwise utilities and select a set of winning ESs and APs, determine their matching/association pattern, and compute payments and rewards. The complete pseudo-code 
of Off-AIC$^2$ is provided in Alg.~\ref{Alg3}\footnote{For each active contract $(s_i,v_k)$ with $x_{i,k}^{\mathrm{Off}}=1$, the contract variables are set to
	$p_{i,k}^{\mathrm{ES}} = p_i^{\mathrm{b}*}$,
	$r_{i,k}^{\mathrm{ES}} = r_k^{\mathrm{s}*}$, and
	$N_{i,k}^{\mathrm{Trad}} = D_i$,
	unless a partial-service contract is explicitly specified. Thus, the settlement vectors from Alg.~\ref{Alg3} directly determine the payment and reward terms in $\mathbb{C}_{i,k}^{\mathrm{E}\leftrightarrow\mathrm{A}}$.}, while its main three steps are summarized below:

\begin{algorithm}[t]
	{\scriptsize
		\caption{Proposed Off-AIC$^2$}\label{Alg3}
		\LinesNumbered
		\textbf{Input:}
		ES set $\bm{\mathcal{S}}$ with forecasted demands $\{D_i\}$ and per-ES bids $\{\overline{\mathrm{bid}}_i\}$;
		AP set $\bm{\mathcal{V}}$ with asks $\{\mathrm{ask}_k\}$ and total service capacities $\{Q_k^\mathrm{A}\}$
		
		\textbf{Initialization:}
		Set $\bm{X}^*\!\leftarrow\!\bm{0}$, $\bm{p}^{\mathrm{b}*}\!\leftarrow\!\bm{0}$, $\bm{r}^{\mathrm{s}*}\!\leftarrow\!\bm{0}$.
		Set residual capacities $R_k^\mathrm{A}\leftarrow Q_k^\mathrm{A}$ for all APs $v_k$.

		Sort ES indices into $\mathcal{L}^{\mathrm{ES}}$ by $\overline{\mathrm{bid}}_i$ (descending), and AP indices into $\mathcal{L}^{\mathrm{AP}}$ by $\mathrm{ask}_k$ (ascending); relabel $(i,k)$ accordingly.\\
		Find a pair of pivotal ranks $(I^\ast,K^\ast)$ such that
		$	\overline{\mathrm{bid}}_{I^\ast} \ge \mathrm{ask}_{K^\ast}, \quad
		\sum_{k=1}^{K^\ast} Q_k^\mathrm{A} \;\ge\; \sum_{i=1}^{I^\ast} D_i$,
		and $I^\ast$ is maximized (breaking ties by larger capacity slack
		$\sum_{k=1}^{K^\ast} Q_k^\mathrm{A} - \sum_{i=1}^{I^\ast} D_i$).
		
		Initialize the set of unmatched ESs:
		$\mathcal{U}^{\mathrm{free}} \leftarrow \{ s_i \mid 1 \le i \le I^\ast \}$.\\
		\For{$k$ in top-$K^\ast$ APs (according to $\mathcal{L}^{\mathrm{AP}}$)}{
			\If{$R_k^\mathrm{A} \le 0$ \text{or} $\mathcal{U}^{\mathrm{free}}=\emptyset$}{
				\textbf{continue}
			}
			Build candidate ES set $
			\mathcal{C}_k \leftarrow \big\{ s_i \in \mathcal{U}^{\mathrm{free}} \,\big|\,
			U_{i,k}^\mathrm{Off} > 0 \ \wedge\ \overline{\mathrm{bid}}_i \ge \mathrm{ask}_k \big\}$;
			
			\If{$\mathcal{C}_k = \emptyset$}{
				\textbf{continue}
			}
			$\mathcal{K}_k^\star \leftarrow$ Alg.~2 ;\\
			\For{$i \in \mathcal{K}_k^\star$}{
				Set $x^{\mathrm{Off}}_{i,k}\leftarrow 1$;\\
				Remove $s_i$ from $\mathcal{U}^{\mathrm{free}}$:
				$\mathcal{U}^{\mathrm{free}} \leftarrow \mathcal{U}^{\mathrm{free}} \setminus \{s_i\}$;
			}
			Update residual service capacity:
			$R_k^\mathrm{A} \leftarrow R_k^\mathrm{A} - \sum_{i\in\mathcal{K}_k^\star} D_i$.
		}
		Denote the resulting matching by $\bm{X}^*=\{x^{\mathrm{Off}}_{i,k}\}$.
		
		\For{each matched ES $s_i$ with $\sum_k x^{\mathrm{Off}}_{i,k} = 1$}{
			Set $high \leftarrow \overline{\mathrm{bid}}_i$,
			$low \leftarrow \overline{\mathrm{bid}}_{I^\ast+1}$
			(or a pre-specified reserve lower bid if $I^\ast=|\bm{\mathcal{S}}|$);\\
			\While{$high-low>\epsilon_{\mathrm{bin}}$}{
				Set $b_{\mathrm{eff}} \leftarrow (low+high)/2$;\\
				Rerun Step~2 on the top-$I^\ast$ ESs and top-$K^\ast$ APs,
				using $b_{\mathrm{eff}}$ as ES $s_i$'s bid, keeping all other bids/asks unchanged, and reinitializing temporary residual capacities as $R_\ell^\mathrm{A}\leftarrow Q_\ell^\mathrm{A}$ for all APs $v_\ell$,
				to obtain a temporary matching $\bm{X}^{\mathrm{Temp}}$;\\
				\lIf{$\sum_k x^{\mathrm{Temp}}_{i,k} = 1$}{$high \leftarrow b_{\mathrm{eff}}$}
				\lElse{$low \leftarrow b_{\mathrm{eff}}$}
			}
			Set $p_i^{\mathrm{b}*}\leftarrow high$.
		}
		
		\For{each matched AP $v_k$ with $\sum_i x^{\mathrm{Off}}_{i,k} \ge 1$}{
			Set $low \leftarrow \mathrm{ask}_k$,
			$high \leftarrow \mathrm{ask}_{K^\ast+1}$ 
			(or a pre-specified reserve upper ask if $K^\ast=|\bm{\mathcal{V}}|$);\\
			\While{$high-low>\epsilon_{\mathrm{bin}}$}{
				Set $\mathrm{ask}_{\mathrm{eff}} \leftarrow (low+high)/2$;\\
				Rerun Step~2 on the top-$I^\ast$ ESs and top-$K^\ast$ APs,
				using $\mathrm{ask}_{\mathrm{eff}}$ as AP $v_k$'s ask, keeping all other bids/asks unchanged, and reinitializing temporary residual capacities as $R_\ell^\mathrm{A}\leftarrow Q_\ell^\mathrm{A}$ for all APs $v_\ell$,
				to obtain a temporary matching $\bm{X}^{\mathrm{Temp}}$;\\
				\lIf{$\sum_i x^{\mathrm{Temp}}_{i,k} \ge 1$}{$low \leftarrow \mathrm{ask}_{\mathrm{eff}}$}
				\lElse{$high \leftarrow \mathrm{ask}_{\mathrm{eff}}$}
			}
			Set $r_k^{\mathrm{s}*}\leftarrow low$.
		}
		
		\textbf{Output:}
		Matching $\bm{X}^*$, ES payments $\bm{p}^{\mathrm{b}*}$, AP rewards $\bm{r}^{\mathrm{s}*}$.
	}
\end{algorithm}

\vspace{0.5em}
\noindent
\textbf{Step 1. List generation and pivotal-index search} (lines 3-5, Alg. \ref{Alg3}):
Off-AIC$^2$ operates as a one-shot sealed-bid double auction: all ES buyers $s_i$ and AP sellers $v_k$ simultaneously submit their bids, asks, and resource quantity information to an auctioneer without observing each others' submissions.
We first sort ESs into a list $\mathcal{L}^{\mathrm{ES}}$ in descending order of $\overline{\mathrm{bid}}_i$, and APs into a list $\mathcal{L}^{\mathrm{AP}}$ in ascending order of $\mathrm{ask}_k$. Without loss of generality, we relabel indices so that these sorted orders are directly reflected by $i$ and $k$.
The algorithm then searches for a pair of \emph{pivotal ranks} $(I^\ast,K^\ast)$ such that
$\overline{\mathrm{bid}}_{I^\ast} \ge \mathrm{ask}_{K^\ast}, 
\sum_{k=1}^{K^\ast} Q_k^\mathrm{A} \ge
\sum_{i=1}^{I^\ast} D_i$.

In words, $(I^\ast,K^\ast)$ is the largest prefix of high-bidding ESs and low-asking APs whose aggregate demand and supply can be satisfied at a non-loss-making price level. 

\vspace{0.5em}
\noindent
\textbf{Step 2. ES--AP matching and route planning via eACO-VRP} (lines 6-16, Alg. \ref{Alg3}):
Given the pivotal ranks $(I^\ast,K^\ast)$, the top-$I^\ast$ ESs and top-$K^\ast$ APs participate in the matching stage. We initialize the set of currently unmatched ESs as $\mathcal{U}^{\mathrm{free}}=\{1,\dots,I^\ast\}$ and iterate over AP sellers $v_k$ following the order in $\mathcal{L}^{\mathrm{AP}}$. 
For each AP $v_k$ with remaining capacity $Q_k^\mathrm{A}>0$ and $\mathcal{U}^{\mathrm{free}}\neq\emptyset$, we construct a candidate ES set
$\mathcal{C}_k = \big\{\, s_i \in \mathcal{U}^{\mathrm{free}} \,\big|\,
U_{i,k}^\mathrm{Off} > 0,\ \overline{\mathrm{bid}}_i \ge \mathrm{ask}_k \big\}$,
where $U_{i,k}^\mathrm{Off}>0$ ensures that serving $s_i$ along some route of $v_k$  improves the system-level SW, and $\overline{\mathrm{bid}}_i \ge \mathrm{ask}_k$ guarantees that the ES's per-unit value is larger than the AP's ask price.

For such an AP $v_k$, we invoke the eACO-VRP routine (Sec.~\ref{sec:eACO-VRP}), which considers ES demand volumes over time, travel time and energy cost of APs, and the payment that $v_k$ can receive, in order to jointly determine:
\textit{(i)} which ESs in $\mathcal{C}_k$ should be served without exceeding $v_k$'s remaining capacity $Q_k^\mathrm{A}$, and
\textit{(ii)} in what visiting order these ESs are arranged along the route of $v_k$.
By maximizing the AP's utility over all such candidate routes, we obtain a selected ES set $\mathcal{K}_k^\ast \subseteq \mathcal{C}_k$ and the associated pre-planned path for AP $v_k$. Furthermore, for every $s_i\in\mathcal{K}_k^\ast$, we set $x_{i,k}^{\mathrm{Off}}=1$ and remove $s_i$ from $\mathcal{U}^{\mathrm{free}}$, ensuring that each ES is matched to (at most) one AP. Stacking these per-AP decisions over all $k$ yields the overall ES--AP assignment matrix $\bm{X}^\ast$. Moreover, since every successful pair $(s_i,v_k)$ satisfies $\overline{\mathrm{bid}}_i \ge \mathrm{ask}_k$, each winning ES is guaranteed not to pay more than its declared value in the matching stage, supporting ES-side individual rationality.

\vspace{0.5em}
\noindent
\textbf{Step 3. Pricing and contract finalization via binary search} (lines 18-33, Alg. \ref{Alg3}):
Once $\bm{X}^\ast$ is determined, Off-AIC$^2$ computes ES payments and AP rewards through a monotone binary-search procedure that approximates \emph{critical-value} pricing. The high-level goal is twofold: \textit{(i)} discourage large deviations from truthful bidding, and \textit{(ii)} keep the total ES payment and total AP reward closely aligned so that the mechanism does not incur a systematic deficit.
In particular, for each matched ES buyer $s_i$ (with $\sum_k x^{\mathrm{Off}}_{i,k}=1$), we treat $\overline{\mathrm{bid}}_i$ as an \emph{effective bid parameter} and search over the interval $[\overline{\mathrm{bid}}_{I^\ast+1},\,\overline{\mathrm{bid}}_i]$ to find the smallest bid $p_i^{\mathrm{b}\ast}$ such that, if $s_i$ had declared this lower bid (while all other bids/asks remained unchanged) and Step~2 was re-executed, $s_i$ would still be selected as a winner. If $I^\ast=|\bm{\mathcal{S}}|$, $\overline{\mathrm{bid}}_{I^\ast+1}$ is replaced by a pre-specified reserve lower bid. This $p_i^{\mathrm{b}\ast}$ is the ES's \emph{critical payment}: bidding any lower would cause $s_i$ to lose the contract.
Similarly, for each matched AP seller $v_k$ (with $\sum_i x^{\mathrm{Off}}_{i,k}\ge 1$), we vary its effective ask within $[\mathrm{ask}_k,\,\mathrm{ask}_{K^\ast+1}]$ (or a pre-specified reserve upper ask if $K^\ast=|\bm{\mathcal{V}}|$) and re-run Step~2 to determine the largest reward level $r_k^{\mathrm{s}\ast}$ such that $v_k$ remains in the winning set. This $r_k^{\mathrm{s}\ast}$ acts as a \emph{critical reward}: if $v_k$ has a higher ask, it would not be matched to any ES.

{
	The resulting settlement vectors $(\bm{p}^{\mathrm{b}\ast},\bm{r}^{\mathrm{s}\ast})$ finalize the offline contracts between ESs and APs under Off-AIC$^2$. By tying payments and rewards to these critical values, the mechanism encourages \textit{near-truthful}\footnote{Unlike conventional auctions that ignore spatio-temporal aspects of the system, truthfulness in our setting is inherently shaped by spatial advantages: ESs close to AP routes can be profitably served with lower bids, whereas remote ESs incur higher travel and time costs and must bid higher to remain competitive; similarly, APs with route-aligned demand clusters face lower marginal costs than detour-intensive ones. As ES–AP matching is jointly coupled with route planning in eACO-VRP, strict dominant-strategy truthfulness \textit{cannot} be guaranteed. Instead, the binary-search-based payment rule approximates critical-value pricing and induces \textit{near-truthful} behavior in practice.} reporting of private valuations while keeping buyer payments and seller rewards balanced at the system level (a property that we numerically demonstrate in Sec.~\ref{Evaluation_IR}). 
	In summary, Off-AIC$^2$ satisfies \textit{individual rationality, near-truthfulness, and budget balance}. Since these properties follow standard definitions, we relegate their formal definitions and detailed proofs to Appx.~B.2 to maintain clarity and focus in the main text.
}

\section{Online Stage: Potential Games for Service Scheduling under Human-Machine Coexistence}\label{SEC:ONLINE}
Unlike the offline stage, the online stage aims to cope with \textit{real-time} service requests generated by HUs and MUs. In practice, the number and composition of SDs that actually arrive in each trading round may deviate from the offline forecasts. 
As a result, the offline outcome can only reserve a feasible pool of UAVs and service capacities for ESs with heavy resource demands, but it cannot directly determine which SP should serve which SD, at what time, or at what service rate given the real-time state of the system.
{
	Therefore, we build on the offline contracts, AP routes, and capacity budgets generated before the online service window starts, and design an online SD-side assignment mechanism. This dynamically refines SD--SP associations, while the corresponding SP-side served sets, congestion levels, and resource shares are induced by the SD assignment profile, capacity constraints, and weighted processor sharing (WPS)-based sharing rule.
	In particular, we formulate the online PG-BRD as an SD-side assignment game. The strategic decision-makers are the online SDs, including HUs and MUs, collected as
	$\mathcal{U}^{\mathrm{On}}=\{u^{\mathrm{On}}_1,\ldots,u^{\mathrm{On}}_{\mathbbm{j}},\ldots,u^{\mathrm{On}}_{|\mathcal{U}^{\mathrm{On}}|}\}$. 
	The SPs, collected as
	$\mathcal{S}^{\mathrm{On}}=\{s^{\mathrm{On}}_1,\ldots,s^{\mathrm{On}}_{\mathbbm{i}},\ldots,s^{\mathrm{On}}_{|\mathcal{S}^{\mathrm{On}}|}\}$\footnote{The online SP set $\mathcal{S}^{\mathrm{On}}$ contains both native ES service units and UAV-assisted service units made available by the offline ES--AP contracts. Specifically, when an AP is contracted to support an ES during the current service window, its deployed UAV resources are treated as additional online SP resources with computing capacity, subcarrier capacity, and coverage determined by the corresponding offline route and contract.},
	act as service-side entities that provide computing resources subject to coverage, capacity, and WPS-based sharing rules, but they are not independent best-response players in PG-BRD.
}

{
	Each SD $u^{\mathrm{On}}_{\mathbbm{j}}$ chooses either local execution or one accessible SP for service, i.e.,
}
\begin{equation}
	{
		\pi_{\mathbbm{j}}\in\mathcal{A}^{\mathrm{On}}_{\mathbbm{j}}
		=
		\{0\}\cup\bm{C}_{\mathbbm{j}},
	}
\end{equation}
{
	where $\pi_{\mathbbm{j}}=0$ denotes local execution, and $\bm{C}_{\mathbbm{j}}$ denotes the set of SPs accessible to $u^{\mathrm{On}}_{\mathbbm{j}}$ in the current online slot. Therefore, the online decision variable optimized by PG-BRD is the SD assignment profile
}
\begin{equation}
	{
		\bm{\pi}=(\pi_{\mathbbm{j}})_{u^{\mathrm{On}}_{\mathbbm{j}}\in\mathcal{U}^{\mathrm{On}}}.
	}
\end{equation}
{
	For implementation, the scalar assignment $\pi_{\mathbbm{j}}$ can equivalently be represented by the binary indicator vector
}
\begin{equation}
	\hspace{-3mm}
	{
		\bm{X}^\mathrm{On}_{\mathbbm{j}}
		= \big(x^{\mathrm{On}}_{\mathbbm{j},0},\,x^{\mathrm{On}}_{\mathbbm{j},1},\,\ldots,\,x^{\mathrm{On}}_{\mathbbm{j},\mathbbm{i}},\,\ldots,\,x^{\mathrm{On}}_{\mathbbm{j},|\mathcal{S}^{\mathrm{On}}|}\big)
		\in \{0,1\}^{\,1+|\mathcal{S}^{\mathrm{On}}|},
	}
\end{equation}
{
	where $x^{\mathrm{On}}_{\mathbbm{j},0}=1$ indicates local execution and $x^{\mathrm{On}}_{\mathbbm{j},\mathbbm{i}}=1$ indicates offloading to SP $s^{\mathrm{On}}_{\mathbbm{i}}$. Once $\bm{\pi}$ is given, each SP's served SD set, load level, capacity feasibility, and WPS-based resource share are determined by the model. Thus, SP-side service acceptance and resource sharing are induced quantities rather than additional online strategic actions.
} We next formalize this joint effect through a delay and congestion model, which will serve as the basis for defining the utilities of SDs and SPs.

\subsection{Utility of SDs and SPs}
\subsubsection{Delay and Congestion Model}
We consider the SP set $\mathcal{S}^{\mathrm{On}}$ and the SD set $\mathcal{U}^{\mathrm{On}}=\bm{\mathcal{U}}^\mathrm{H}\cup\bm{\mathcal{U}}^{\mathrm{M}}$, where $\bm{\mathcal{U}}^{\mathrm{H}}$ and $\bm{\mathcal{U}}^{\mathrm{M}}$ denote the sets of HUs and MUs, respectively, where each SP $s^{\mathrm{On}}_{\mathbbm{i}}$ has a finite computing capacity $f^{\mathrm{On}}_{\mathbbm{i}}>0$ (CPU cycles/s). When a subset of SDs $\mathcal{U}_{\mathbbm{i}}=\{u^{\mathrm{On}}_{\mathbbm{v}}\in\mathcal{U}^{\mathrm{On}}: x^{\mathrm{On}}_{\mathbbm{v},\mathbbm{i}}=1\}$ simultaneously offload their workload to $s^{\mathrm{On}}_{\mathbbm{i}}$, the SP must multiplex its limited computing resource among heterogeneous requests with different priorities and contract terms. To capture both congestion effects and differentiated service levels in a tractable way, we adopt a WPS rule~\cite{RW-PG-Zhou}:
 each SD $u^{\mathrm{On}}_{\mathbbm{j}}$ associated with $s^{\mathrm{On}}_{\mathbbm{i}}$ is assigned a positive weight $\gamma_{\mathbbm{j},\mathbbm{i}}>0$ that reflects its service priority, contract tier, or fairness quota. {
 	This $\gamma_{\mathbbm{j},\mathbbm{i}}$ is treated as a fixed online input before PG-BRD starts. It can be determined by the SD service class, task urgency, fairness quota, or contract-related priority, but it is not optimized as an endogenous online decision variable in PG-BRD.
 }The effective computing rate that $u^{\mathrm{On}}_{\mathbbm{j}}$ receives from $s^{\mathrm{On}}_{\mathbbm{i}}$ is then given by
$\hat f_{\mathbbm{j},\mathbbm{i}}
= f^{\mathrm{On}}_{\mathbbm{i}}
\frac{\gamma_{\mathbbm{j},\mathbbm{i}}}{\sum_{u^{\mathrm{On}}_{\mathbbm{v}}\in\mathcal{U}^{\mathrm{On}}_{\mathbbm{i}}} \gamma_{\mathbbm{v},\mathbbm{i}}}$,
and the corresponding edge-side computing delay can be obtained as:
\begin{equation}
	t^{\mathrm{comp}}_{\mathbbm{j},\mathbbm{i}}
	= \frac{r_{\mathbbm{j}}}{\hat f_{\mathbbm{j},\mathbbm{i}}}
	= \frac{r_{\mathbbm{j}}}{\gamma_{\mathbbm{j},\mathbbm{i}}}
	\times \frac{\sum_{u^{\mathrm{On}}_{\mathbbm{v}}\in\mathcal{U}^{\mathrm{On}}_{\mathbbm{i}}}\gamma_{\mathbbm{v},\mathbbm{i}}}{f^{\mathrm{On}}_{\mathbbm{i}}}, 
	\label{eq:comp_delay_origin}
\end{equation}
where $r_{\mathbbm{j}}$ denotes the computing workload (e.g., CPU cycles needed) of $u^{\mathrm{On}}_{\mathbbm{j}}$. This WPS model compactly captures how the delay of each SD grows with the aggregate load at the SP. We further summarize the impact of  associated SDs on $s^{\mathrm{On}}_{\mathbbm{i}}$ through an \emph{effective congestion load} defined as
$y_{\mathbbm{i}} = \sum_{u^{\mathrm{On}}_{\mathbbm{v}}\in\mathcal{U}^{\mathrm{On}}_{\mathbbm{i}}} \gamma_{\mathbbm{v},\mathbbm{i}}$.
With this notation, \eqref{eq:comp_delay_origin} can be written in a factorized form as:
\begin{equation}
	t^{\mathrm{comp}}_{\mathbbm{j},\mathbbm{i}}
	= \theta_{\mathbbm{j},\mathbbm{i}}\;\tau_{\mathbbm{i}}(y_{\mathbbm{i}}),\quad
	\theta_{\mathbbm{j},\mathbbm{i}} = \frac{r_{\mathbbm{j}}}{\gamma_{\mathbbm{j},\mathbbm{i}}},\quad
	\tau_{\mathbbm{i}}(y) = \frac{y}{f^{\mathrm{On}}_{\mathbbm{i}}},
	\label{eq:comp_delay_factorized}
\end{equation}
where $\theta_{\mathbbm{j},\mathbbm{i}}$ is a user-specific load factor, whereas $\tau_{\mathbbm{i}}(y)$ captures the server-side congestion effect as a function of the aggregate load. This separable structure will be instrumental for constructing the potential-aligned payoff and the exact game potential in the subsequent analysis.

While \eqref{eq:comp_delay_factorized} characterizes the congestion-induced computing delay, an offloading SD also experiences wireless transmission latency in the uplink: when $u^{\mathrm{On}}_{\mathbbm{j}}$ offloads to $s^{\mathrm{On}}_{\mathbbm{i}}$, the uplink delay is 
$t^{\mathrm{tx}}_{\mathbbm{j},\mathbbm{i}}
= \frac{d_{\mathbbm{j}}}{W \log_2 \!\big(1 + e^{\mathrm{tx}}_{\mathbbm{j}}\,\Gamma_{\mathbbm{j},\mathbbm{i}} \big)}$,
where $d_{\mathbbm{j}}$ is the input data size, $W$ is the bandwidth, $e^{\mathrm{tx}}_{\mathbbm{j}}\Gamma_{\mathbbm{j},\mathbbm{i}}$ is the signal-to-noise-ratio (SNR) between $u^{\mathrm{On}}_{\mathbbm{j}}$ and $s^{\mathrm{On}}_{\mathbbm{i}}$. Consequently, the total edge-side latency experienced by $u^{\mathrm{On}}_{\mathbbm{j}}$ when uploading its workload to $s^{\mathrm{On}}_{\mathbbm{i}}$ is given by:
\begin{equation}
	T^{\mathrm{edge}}_{\mathbbm{j},\mathbbm{i}}
	= t^{\mathrm{comp}}_{\mathbbm{j},\mathbbm{i}} + t^{\mathrm{tx}}_{\mathbbm{j},\mathbbm{i}}.
	\label{eq:tot_edge_latency}
\end{equation}

\subsubsection{Utility of HUs}
Building on the above latency and energy models, we now characterize the utility of an HU when deciding whether to execute a task locally or offload it to an SP. Consider an HU $u^{\mathrm{On}}_{\mathbbm{j}}\in\bm{\mathcal{U}}^{\mathrm{H}}$ with local CPU frequency $f^{\mathrm{loc}}_{\mathbbm{j}}$ and local computational power $e^{\mathrm{loc}}_{\mathbbm{j}}$. The local execution time and energy consumption of its task are given by
$t^{\mathrm{loc}}_{\mathbbm{j}}={r_{\mathbbm{j}}}/{f^{\mathrm{loc}}_{\mathbbm{j}}}$ and
$E^{\mathrm{loc}}_{\mathbbm{j}}={r_{\mathbbm{j}}\,e^{\mathrm{loc}}_{\mathbbm{j}}}/{f^{\mathrm{loc}}_{\mathbbm{j}}}$, respectively.
When the HU $u^{\mathrm{On}}_{\mathbbm{j}}$ offloads its task to $s^{\mathrm{On}}_{\mathbbm{i}}$, it experiences  latency $t^{\mathrm{comp}}_{\mathbbm{j},\mathbbm{i}} + t^{\mathrm{tx}}_{\mathbbm{j},\mathbbm{i}}$ and uplink energy $E^{\mathrm{tx}}_{\mathbbm{j},\mathbbm{i}}$. Compared with local task execution, this leads to the latency saving of
$t^{\mathrm{save}}_{\mathbbm{j},\mathbbm{i}}
{=} t^{\mathrm{loc}}_{\mathbbm{j}} - \big(t^{\mathrm{comp}}_{\mathbbm{j},\mathbbm{i}} + t^{\mathrm{tx}}_{\mathbbm{j},\mathbbm{i}}\big)$
and energy saving of
$c^{\mathrm{save}}_{\mathbbm{j},\mathbbm{i}}
{=} E^{\mathrm{loc}}_{\mathbbm{j}} - E^{\mathrm{tx}}_{\mathbbm{j},\mathbbm{i}}$ ($E^{\mathrm{tx}}_{\mathbbm{j},\mathbbm{i}}$ denotes the uplink transmission energy).

To map these latency and energy savings into a scalar benefit, we introduce non-negative valuation coefficients $\mathbb{V}^{\mathrm{H}}_t$ and $\mathbb{V}^{\mathrm{H}}_e$ to capture HU sensitivities to latency and energy, respectively. The HU's perceived valuation of being served by $s^{\mathrm{On}}_{\mathbbm{i}}$ is then modeled via
$v^{\mathrm{H}}_{\mathbbm{j},\mathbbm{i}}
= \mathbb{V}^{\mathrm{H}}_t\,t^{\mathrm{save}}_{\mathbbm{j},\mathbbm{i}}
+ \mathbb{V}^{\mathrm{H}}_e\,c^{\mathrm{save}}_{\mathbbm{j},\mathbbm{i}}$.
Given this valuation, the HU's net utility must also account for the service payment: if $s^{\mathrm{On}}_{\mathbbm{i}}$ charges a price $p_{\mathbbm{j},\mathbbm{i}}$ for serving $u^{\mathrm{On}}_{\mathbbm{j}}$, its net utility is
\begin{equation}
	U^{\mathrm{H}}_{\mathbbm{j},\mathbbm{i}}
	= v^{\mathrm{H}}_{\mathbbm{j},\mathbbm{i}} - p_{\mathbbm{j},\mathbbm{i}},\qquad
	U^{\mathrm{H}}_{\mathbbm{j},0}=0,
	\label{eq:hu_utility}
\end{equation}
where the value of $U^{\mathrm{H}}_{\mathbbm{j},0}$ corresponds to local task execution.

\subsubsection{Utility of MUs}
While HUs mainly care about the joint latency–energy tradeoff, MUs (e.g., robots or sensing devices) are typically driven by task deadlines and reliability requirements. We therefore adopt a slightly different utility structure for MUs. 
Specifically, an MU $u^{\mathrm{On}}_{\mathbbm{j}}\in\bm{\mathcal{U}}^{\mathrm{M}}$ is associated with a latency deadline $\tau^\mathrm{ddl}_{\mathbbm{j}}$, while its total edge-side latency at $s^{\mathrm{On}}_{\mathbbm{i}}$ is $T^{\mathrm{edge}}_{\mathbbm{j},\mathbbm{i}}$ in \eqref{eq:tot_edge_latency}. To capture the fact that the MU primarily values meeting this deadline, we adopt a \textit{soft-deadline} valuation
\begin{equation}
	G^{\mathrm{M}}_{\mathbbm{j},\mathbbm{i}}
	= v_{\mathbbm{j}}\,
	{\big[\tau^\mathrm{ddl}_{\mathbbm{j}}-T^{\mathrm{edge}}_{\mathbbm{j},\mathbbm{i}}\big]_+}\times ({\tau^\mathrm{ddl}_{\mathbbm{j}}})^{-1},
	\label{eq:mu_soft_deadline}
\end{equation}
where $[x]_+=\max\{x,0\}$ and $v_{\mathbbm{j}}>0$ denotes the MU's task reward scale. This valuation is maximal when the edge-side latency is zero, decreases linearly as the latency consumes the available deadline slack, remains positive before the deadline, and becomes zero once the deadline is violated. In particular, ``soft-deadline'' means that the reward is continuous and piecewise linear with respect to latency, rather than being a discontinuous hard-threshold reward.

In addition to timeliness, an MU also benefits from saving local computing energy by task offloading. Similar to HUs, the MU's energy saving when utilizing edge services is
$c^{\mathrm{M},\mathrm{save}}_{\mathbbm{j},\mathbbm{i}}
= E^{\mathrm{loc}}_{\mathbbm{j}} - E^{\mathrm{tx}}_{\mathbbm{j},\mathbbm{i}}$,
where $E^{\mathrm{loc}}_{\mathbbm{j}}$ is the local execution energy and $E^{\mathrm{tx}}_{\mathbbm{j},\mathbbm{i}}$ is the uplink transmission energy. Let $\mathbb{V}^{\mathrm{M}}_e$ denote the MU's valuation coefficient on energy saving. Combining deadline satisfaction, energy saving, and the service payment $p_{\mathbbm{j},\mathbbm{i}}$, the MU's utility when transferring its tasks to $s^{\mathrm{On}}_{\mathbbm{i}}$ can be  given by:
\begin{equation}
	U^{\mathrm{M}}_{\mathbbm{j},\mathbbm{i}}
	= G^{\mathrm{M}}_{\mathbbm{j},\mathbbm{i}}
	+ \mathbb{V}^{\mathrm{M}}_e\,c^{\mathrm{M},\mathrm{save}}_{\mathbbm{j},\mathbbm{i}}
	- p_{\mathbbm{j},\mathbbm{i}},\qquad
	U^{\mathrm{M}}_{\mathbbm{j},0}=0,
	\label{eq:au_utility}
\end{equation}
where $U^{\mathrm{M}}_{\mathbbm{j},0}$ corresponds to local task execution. 

Note that both HU and MU utilities in \eqref{eq:hu_utility} and \eqref{eq:au_utility} inherently share the same congestion component $t^{\mathrm{comp}}_{\mathbbm{j},\mathbbm{i}}=\theta_{\mathbbm{j},\mathbbm{i}}\tau_{\mathbbm{i}}(y_{\mathbbm{i}})$ in \eqref{eq:comp_delay_factorized}. This shared structure will be crucial for aligning SD-side updates with the exact game potential in our subsequent analysis (Sec.~\ref{subsec:potential}).

\subsubsection{Cost of SPs}
Having specified how HUs and MUs (i.e., SDs) evaluate edge services, we now focus on the SPs' perspective. As more SDs offload their tasks to an SP, the SP consumes more  energy and experiences higher operating costs, which must be reflected in its payoff and, ultimately, in the system-level SW. Subsequently, for each SP $s^{\mathrm{On}}_{\mathbbm{i}}$, given its aggregated computing load 
$\mathbbm{x}_{\mathbbm{i}} = \sum_{u^{\mathrm{On}}_{\mathbbm{v}}\in\mathcal{U}^{\mathrm{On}}_{\mathbbm{i}}} r_{\mathbbm{v}}$, we model its  computation/operation cost as follows:
\begin{equation}
	K_{\mathbbm{i}}(\mathbbm{x}_{\mathbbm{i}})
	= \omega_1\,e^{\mathrm{comp}}_{\mathbbm{i}}
	\frac{\mathbbm{x}_{\mathbbm{i}}}{f^{\mathrm{On}}_{\mathbbm{i}}}
	+ \omega_2 c_{\mathbbm{i}}^\mathrm{hard}
	\mathbbm{1}_{\{\mathbbm{x}_{\mathbbm{i}}>0\}},
	\label{eq:sp_cost}
\end{equation}
where $e^{\mathrm{comp}}_{\mathbbm{i}}$ is the average computing power consumption of $s^{\mathrm{On}}_{\mathbbm{i}}$, and $c_{\mathbbm{i}}^\mathrm{hard}$ captures hardware- and maintenance-related costs ($\omega_1,\omega_2\geq 0$ are weighting coefficients). In \eqref{eq:sp_cost}, the term proportional to $\mathbbm{x}_{\mathbbm{i}}/f^{\mathrm{On}}_{\mathbbm{i}}$ reflects that processing a larger workload leads to higher energy and operational expenditure, while the indicator term charges maintenance/depreciation only when the SP actively serves at least one SD.
It is worth noting that service payments have different roles in welfare evaluation and game dynamics. When evaluating transfer-free system welfare, payments between SDs and SPs cancel out as internal transfers. However, in the SD-side best-response dynamics, posted prices affect individual action choices and must be included in the game potential. We therefore distinguish the transfer-free welfare $\mathcal{W}(\bm{\pi})$ from the game potential $\Phi^{\mathrm{G}}(\bm{\pi})$ constructed in Sec.~\ref{subsec:potential}.

In a nutshell, in our considered human–machine ecosystem, SPs resources are limited and SDs (i.e., HUs/MUs) act selfishly, where each SD tries to choose the execution mode of its tasks (local vs. offloading) and SP that maximizes its own utility. Directly coordinating all the coupled decisions across SPs and SDs via a centralized SW maximization formulation is computationally prohibitive and poorly scalable. This motivates our following game-theoretic treatment in which we capture the system performance by a carefully crafted potential function that is aligned with individual incentives and allows for decentralized handling of resource allocations across SPs and SDs.

\subsection{Exact Potential Function}\label{subsec:potential}
In order to analyze our online resource scheduling game and to enable simple distributed best-response dynamics with provable convergence, we construct an exact game potential $\Phi^{\mathrm{G}}(\bm{\pi})$ aligned with the potential-guided payoff used by PG-BRD. This construction leverages the unified congestion structure in~\eqref{eq:comp_delay_factorized}, the active-regime form of the MU soft-deadline utility, and fixed posted prices that are determined before each PG-BRD update round.

For an MU $u^{\mathrm{On}}_{\mathbbm{j}}$ served by $s^{\mathrm{On}}_{\mathbbm{i}}$, the time-related part of its payoff was modeled by the soft-deadline valuation in~\eqref{eq:mu_soft_deadline} as
$G^{\mathrm{M}}_{\mathbbm{j},\mathbbm{i}}
= v_{\mathbbm{j}}\,
{\big[\tau^\mathrm{ddl}_{\mathbbm{j}}-T^{\mathrm{edge}}_{\mathbbm{j},\mathbbm{i}}\big]_+}\times{(\tau^\mathrm{ddl}_{\mathbbm{j}})^{-1}}$. This valuation, together with energy saving and the payment term,  determines the MU's overall utility in the online game defined in~\eqref{eq:au_utility}.
We note that since the computing delay $t^{\mathrm{comp}}_{\mathbbm{j},\mathbbm{i}}$ in~\eqref{eq:comp_delay_factorized}, which is inherently encoded in~\eqref{eq:au_utility}, is an increasing function of the congestion load $y_{\mathbbm{i}}$, heavier congestion directly reduces $G^{\mathrm{M}}_{\mathbbm{j},\mathbbm{i}}$ and the MU utility in~\eqref{eq:au_utility}. This monotone relationship between congestion and MU valuation  allows us to align individual best responses with the system-wide potential in the subsequent analysis.

For HUs, the utility in \eqref{eq:hu_utility} can be decomposed into a congestion-independent valuation part and a congestion-dependent penalty, as follows:
\vspace{1mm}
\begin{equation}{\small\begin{aligned}
		\hspace{-3mm}	U^{\mathrm{H}}_{\mathbbm{j},\mathbbm{i}}
			{=} \underbrace{\mathbb{V}^{\mathrm{H}}_t\!\big(t^{\mathrm{loc}}_{\mathbbm{j}} - t^{\mathrm{tx}}_{\mathbbm{j},\mathbbm{i}}\big)
				+ \mathbb{V}^{\mathrm{H}}_e\!\big(E^{\mathrm{loc}}_{\mathbbm{j}} - E^{\mathrm{tx}}_{\mathbbm{j},\mathbbm{i}}\big)}_{S_{\mathbbm{j}}(\mathbbm{i})\;\text{(valuation independent of }y_{\mathbbm{i}}\text{)}}
			{-}\;\underbrace{\mathbb{V}^{\mathrm{H}}_t\,\theta_{\mathbbm{j},\mathbbm{i}}\,\tau_{\mathbbm{i}}(y_{\mathbbm{i}})}_{\text{congestion penalty}}
			{-}\;p_{\mathbbm{j},\mathbbm{i}},
	\end{aligned}}
	\label{eq:hu_decomp_new}
    \hspace{-3mm}
    \vspace{1mm}
\end{equation}
where $S_{\mathbbm{j}}(\mathbbm{i})$ collects all valuation terms that do not depend on the congestion load $y_{\mathbbm{i}}$, while the second term captures how queueing-induced delay degrades the HU's utility. 
Note that the payment $p_{\mathbbm{j},\mathbbm{i}}$ is independent of $y_{\mathbbm{i}}$, but it is written separately because it affects SD-side best responses. It cancels out in transfer-free welfare evaluation, but it is retained in the game potential used for PG-BRD convergence analysis.

MU utilities also admit the same structural form in the active soft-deadline regime (see Appx.~E.4). In this regime, the MU utility can be rewritten as:
\begin{equation}
	U^{\mathrm{M}}_{\mathbbm{j},\mathbbm{i}}
	= \bar S_{\mathbbm{j}}(\mathbbm{i})
	\;-\;\kappa_{\mathbbm{j}}\,\theta_{\mathbbm{j},\mathbbm{i}}\,\tau_{\mathbbm{i}}(y_{\mathbbm{i}})
	\;-\;p_{\mathbbm{j},\mathbbm{i}},
	\label{eq:au_decomp_soft_new}
\end{equation}
where $\bar S_{\mathbbm{j}}(\mathbbm{i})$ gathers the non-congestion valuation terms (soft-deadline valuation base reward and energy saving), and $\kappa_{\mathbbm{j}}>0$ reflects the MU's effective sensitivity to congestion. 
Similar to~\eqref{eq:hu_decomp_new}, $p_{\mathbbm{j},\mathbbm{i}}$ is separated from $\bar S_{\mathbbm{j}}(\mathbbm{i})$ in~\eqref{eq:au_decomp_soft_new}. This separation lets us use transfer-free welfare for system-efficiency evaluation while retaining posted prices in the game potential that governs SD-side best responses. 
 
{ Considering both HUs and MUs, we collect their non-congestion service valuation into
 $S_{\mathbbm{j}}(\mathbbm{i})$, where $\mathbbm{i}=0$ denotes local execution and
 $S_{\mathbbm{j}}(0)=0$. For HUs, $S_{\mathbbm{j}}(\mathbbm{i})$ is obtained from the latency--energy valuation in~\eqref{eq:hu_decomp_new}; for MUs, it is obtained from the soft-deadline valuation in~\eqref{eq:au_decomp_soft_new} under the active soft-deadline regime.
 To align selfish SD updates with a global potential, PG-BRD uses a potential-aligned scheduling payoff rather than the raw transfer-inclusive utility alone. Given the actions of all other SDs, define $y_{\mathbbm{i}}^{-\mathbbm{j}} = \sum_{\mathbbm{v}\neq\mathbbm{j}:\pi_{\mathbbm{v}}=\mathbbm{i}} \gamma_{\mathbbm{v},\mathbbm{i}}, ~\mathbbm{x}_{\mathbbm{i}}^{-\mathbbm{j}} = \sum_{\mathbbm{v}\neq\mathbbm{j}:\pi_{\mathbbm{v}}=\mathbbm{i}} r_{\mathbbm{v}} . $ For an offloading action $\mathbbm{i}\in\mathcal{S}^{\mathrm{On}}$, define
 \begin{equation}
 	\begin{aligned}
 		\widetilde U_{\mathbbm{j},\mathbbm{i}}(\bm{\pi}_{-\mathbbm{j}})
 		=&~
 		S_{\mathbbm{j}}(\mathbbm{i})
 		-
 		\int_{y_{\mathbbm{i}}^{-\mathbbm{j}}}^{\,y_{\mathbbm{i}}^{-\mathbbm{j}}+\gamma_{\mathbbm{j},\mathbbm{i}}}
 		\tau_{\mathbbm{i}}(z)\,dz\\
 		&-
 		\Big[
 		K_{\mathbbm{i}}\big(\mathbbm{x}_{\mathbbm{i}}^{-\mathbbm{j}}+r_{\mathbbm{j}}\big)
 		-
 		K_{\mathbbm{i}}\big(\mathbbm{x}_{\mathbbm{i}}^{-\mathbbm{j}}\big)
 		\Big]
 		-
 		p_{\mathbbm{j},\mathbbm{i}},
 	\end{aligned}
 	\label{eq:unified_utility_new}
 \end{equation}
 and set $\widetilde U_{\mathbbm{j},0}=0$ for local execution. Here $p_{\mathbbm{j},\mathbbm{i}}$ is a fixed posted price before the PG-BRD update. The integral term prices the marginal congestion contribution of SD $u_{\mathbbm{j}}^{\mathrm{On}}$, while the bracketed term prices its marginal SP-side operating cost. Therefore, a unilateral increase in $\widetilde U_{\mathbbm{j},\mathbbm{i}}$ is exactly aligned with the change in the game potential defined below.}

To describe how individual decisions interact at the system-level, we collect all SD assignments into the \emph{joint assignment profile}\footnote{Given $\bm{\pi}$, the binary indicators $x^{\mathrm{On}}_{\mathbbm{j},\mathbbm{i}}$ are induced by
	$x^{\mathrm{On}}_{\mathbbm{j},\mathbbm{i}} = \mathbbm{1}_{\{\pi_{\mathbbm{j}}=\mathbbm{i}\}},~
	\mathbbm{i}\in\{0\}\cup\mathcal{S}^{\mathrm{On}}$ ,
	so that the vector $\bm{X}^{\mathrm{On}}_{\mathbbm{j}}$ and the scalar action $\pi_{\mathbbm{j}}$ are equivalent representations of $u^{\mathrm{On}}_{\mathbbm{j}}$'s decision.
} $\bm{\pi}=(\pi_{\mathbbm{j}})_{\mathbbm{j}\in\mathcal{U}^{\mathrm{On}}}$, where $\pi_{\mathbbm{j}}\in\{0\}\cup\mathcal{S}^{\mathrm{On}}$ denotes the choice of $u^{\mathrm{On}}_{\mathbbm{j}}$: $\pi_{\mathbbm{j}}=0$ means that the task is processed locally at $u^{\mathrm{On}}_{\mathbbm{j}}$, while $\pi_{\mathbbm{j}}=\mathbbm{i}$ indicates task offloading to SP $s^{\mathrm{On}}_{\mathbbm{i}}$. 
Given a profile $\bm{\pi}$, we can then compute, for each SP $s^{\mathrm{On}}_{\mathbbm{i}}$, the effective congestion load and the physical computing load as follows:
\begin{equation}
	y_{\mathbbm{i}}(\bm{\pi}) = \sum_{\mathbbm{j}:\,\pi_{\mathbbm{j}}=\mathbbm{i}} \gamma_{\mathbbm{j},\mathbbm{i}},
	\qquad
	\mathbbm{x}_{\mathbbm{i}}(\bm{\pi}) = \sum_{\mathbbm{j}:\,\pi_{\mathbbm{j}}=\mathbbm{i}} r_{\mathbbm{j}},
	\label{eq:cong_vars_pi_final}
\end{equation}
where $y_{\mathbbm{i}}(\bm{\pi})$ feeds into the congestion-induced delay 
via $t^{\mathrm{comp}}_{\mathbbm{j},\mathbbm{i}}
= \theta_{\mathbbm{j},\mathbbm{i}}\,\tau_{\mathbbm{i}}\big(y_{\mathbbm{i}}(\bm{\pi})\big)$
in~\eqref{eq:comp_delay_factorized}, and $\mathbbm{x}_{\mathbbm{i}}(\bm{\pi})$ 
determines the SP-side computation/operation cost through 
$K_{\mathbbm{i}}\big(\mathbbm{x}_{\mathbbm{i}}(\bm{\pi})\big)$ in~\eqref{eq:sp_cost}. Thus, any unilateral change of $\pi_{\mathbbm{j}}$ only modifies the loads of the involved SPs in~\eqref{eq:cong_vars_pi_final}, and these updated loads propagate to SD delays and SP costs through \eqref{eq:comp_delay_factorized} and \eqref{eq:sp_cost}, respectively.

In order to analyze the resulting game and to show that simple myopic best-response updates can converge, we construct the following game potential:
\begin{equation}\label{eq:potential_def_pi_final}
	\begin{aligned}
		\Phi^{\mathrm{G}}(\bm{\pi})
		=&
		\sum_{\mathbbm{j}:\,\pi_{\mathbbm{j}}\neq 0}
		S_{\mathbbm{j}}(\pi_{\mathbbm{j}})
		-
		\sum_{\mathbbm{i}\in\mathcal{S}^{\mathrm{On}}}
		\int_{0}^{\,y_{\mathbbm{i}}(\bm{\pi})}
		\tau_{\mathbbm{i}}(z)\,dz\\
		&-
		\sum_{\mathbbm{i}\in\mathcal{S}^{\mathrm{On}}}
		K_{\mathbbm{i}}\!\big(\mathbbm{x}_{\mathbbm{i}}(\bm{\pi})\big)
		-
		\sum_{\mathbbm{j}:\,\pi_{\mathbbm{j}}\neq 0}
		p_{\mathbbm{j},\pi_{\mathbbm{j}}}.
	\end{aligned}
\end{equation}
The first term aggregates SD-side non-congestion service valuations, the second term captures aggregate congestion, the third term captures SP-side operating costs, and the last term accounts for fixed posted prices that directly affect SD-side best responses. For system-efficiency evaluation, we also use the transfer-free welfare $\mathcal{W}(\bm{\pi})
=
\sum_{\mathbbm{j}:\,\pi_{\mathbbm{j}}\neq 0}
S_{\mathbbm{j}}(\pi_{\mathbbm{j}})
-
\sum_{\mathbbm{i}\in\mathcal{S}^{\mathrm{On}}}
\int_{0}^{\,y_{\mathbbm{i}}(\bm{\pi})}
\tau_{\mathbbm{i}}(z)\,dz-
\sum_{\mathbbm{i}\in\mathcal{S}^{\mathrm{On}}}
K_{\mathbbm{i}}\!\big(\mathbbm{x}_{\mathbbm{i}}(\bm{\pi})\big) $,
which differs from $\Phi^{\mathrm{G}}(\bm{\pi})$ only by excluding monetary transfers.

\begin{Prop}[Exact potential of the online scheduling game]\label{thm:exact_pi}
	Consider the online scheduling game in which each SD updates its action according to the potential-aligned payoff in~\eqref{eq:unified_utility_new}. For any SD $u_{\mathbbm{j}}^{\mathrm{On}}$ and any unilateral deviation from $\mathbbm{i}$ to $\mathbbm{i}'$, we have
	\begin{equation}
		\widetilde U_{\mathbbm{j},\mathbbm{i}'}(\bm{\pi}_{-\mathbbm{j}})
		-
		\widetilde U_{\mathbbm{j},\mathbbm{i}}(\bm{\pi}_{-\mathbbm{j}})
		=
		\Phi^{\mathrm{G}}(\bm{\pi}_{-\mathbbm{j}},\mathbbm{i}')
		-
		\Phi^{\mathrm{G}}(\bm{\pi}_{-\mathbbm{j}},\mathbbm{i}).
	\end{equation}
	Hence, $\Phi^{\mathrm{G}}(\bm{\pi})$ is an exact potential function. Any unilateral payoff-improving update strictly increases $\Phi^{\mathrm{G}}(\bm{\pi})$.
\end{Prop}

Since each SD has a finite feasible action set, any sequence of strict better-response updates under~\eqref{eq:unified_utility_new} terminates in finitely many steps at a pure-strategy Nash equilibrium when $\varepsilon=0$, and at an $\varepsilon$-NE when PG-BRD uses a positive acceptance threshold $\varepsilon$. The proof is given in Appx.~E.

\subsection{Problem Formulation}

{
	For online stage, our goal is to optimize the task-assignment decisions of heterogeneous SDs (i.e., HUs/MUs), while respecting coverage and subcarrier-capacity constraints. The SP-side served sets, admission feasibility, congestion levels, and WPS-based resource shares are induced by SD assignment profile $\bm{\pi}$ rather than optimized through an additional SP action space.} To this end, following our above discussions, we utilize the fact any maximizer of $\Phi^{\mathrm{G}}(\bm{\pi})$ corresponds to a pure-strategy NE of the potential-aligned scheduling game when $\varepsilon=0$. We therefore formulate the online scheduling problem as $\bm{\mathcal{F}}^{(\mathrm{II})}$:
\begin{subequations}
	\begin{align}
		\bm{\mathcal{F}}^{(\mathrm{II})}:\quad
		&\max_{\ \bm{\pi}}~\Phi^{\mathrm{G}}(\bm{\pi})
		\tag{25}\label{PF_2}\\[2pt]
		\mathrm{s.t.}\ \
		&\pi_{\mathbbm{j}}\in\{0\}\cup\mathcal{S}^{\mathrm{On}},\quad \forall\,u^\mathrm{On}_\mathbbm{j}\in\mathcal{U}^{\mathrm{On}},
		\tag{25a}\label{23a}\\
		&x_{\mathbbm{j},\mathbbm{i}}^\mathrm{On}=0,\ \ \text{if } s^\mathrm{On}_\mathbbm{i}\notin \bm{C}_\mathbbm{j},\quad \forall\,u^\mathrm{On}_\mathbbm{j}\in\mathcal{U}^{\mathrm{On}},
		\tag{25b}\label{23b}\\
		&\sum_{u^\mathrm{On}_\mathbbm{j}\in {\mathcal{U}^{\mathrm{On}}}} x_{\mathbbm{j},\mathbbm{i}}^\mathrm{On}
		\le G_{\mathbbm{i}},\quad \forall\,s^\mathrm{On}_\mathbbm{i}\in\mathcal{S}^{\mathrm{On}},
		\tag{25c}\label{23c}
	\end{align}
\end{subequations}
\setcounter{equation}{25}

\noindent
where $\bm{\pi}=(\pi_{\mathbbm{j}})_{\mathbbm{j}\in\mathcal{U}^{\mathrm{On}}}$ denotes the joint assignment profile, with $\pi_{\mathbbm{j}}=0$ representing local task execution at $u^\mathrm{On}_\mathbbm{j}$ and $\pi_{\mathbbm{j}}=\mathbbm{i}$ indicating task offloading to SP $s^\mathrm{On}_\mathbbm{i}$. The binary variables $x_{\mathbbm{j},\mathbbm{i}}^\mathrm{On}$ are induced by $\bm{\pi}$ as defined in Sec.~\ref{subsec:potential} (i.e., $x_{\mathbbm{j},\mathbbm{i}}^\mathrm{On}=1$ if $\pi_{\mathbbm{j}}=\mathbbm{i}$).

Constraint~\eqref{23a} enforces each SD to choose only one execution option for its current task: either local processing ($\pi_{\mathbbm{j}}=0$) or offloading to a single SP ($\pi_{\mathbbm{j}}\in\mathcal{S}^{\mathrm{On}}$). Constraint~\eqref{23b} captures the service-coverage feasibility: $\bm{C}_\mathbbm{j}$ denotes the set of SPs that can physically serve $u^\mathrm{On}_\mathbbm{j}$ (e.g., within communication range), and thus the association variable for
an SD outside the coverage of $s^\mathrm{On}_\mathbbm{i}$ is forced to zero. Constraint~\eqref{23c} limits the admission load at each SP, where $G_{\mathbbm{i}}$ is the maximum number of SDs that SP $s^\mathrm{On}_\mathbbm{i}$ can concurrently serve.

{
	Problem $\bm{\mathcal{F}}^{(\mathrm{II})}$ is an integer programming (IP) problem over a discrete action space and is, in general, NP-hard.
} Since this online scheduling has to be repeatedly solved at each trading round under stringent latency and computational constraints at the edge, running a centralized IP solver or heavyweight meta-heuristics to seek a global optimum is impractical. Instead, by recasting $\bm{\mathcal{F}}^{(\mathrm{II})}$ as an exact-potential game with potential $\Phi^{\mathrm{G}}(\bm{\pi})$, we obtain a formulation that admits a fully distributed implementation. In particular, PG-BRD (Sec.~\ref{sec:PG-BRD}) lets each SD myopically maximize the potential-aligned payoff in~\eqref{eq:unified_utility_new} subject to feasibility. Every accepted update monotonically increases $\Phi^{\mathrm{G}}(\bm{\pi})$; if the algorithm terminates before reaching $T_{\max}$, the resulting profile is an $\varepsilon$-NE, and it becomes a pure-strategy NE when $\varepsilon=0$ (see Appx.~D and Appx.~E for details). This provides a locally optimal and practically attainable solution to~\eqref{PF_2}.

\subsection{Potential-Guided Best-Response Dynamics (PG-BRD)}\label{sec:PG-BRD}

 For brevity, we provide the complete pseudo-code of PG-BRD in Alg. \ref{alg:pg_brd} and summarize its four main steps below:
 
 \begin{algorithm}[t!]
 	{\scriptsize
 		\caption{Proposed PG-BRD}
 		\label{alg:pg_brd}
 		\LinesNumbered
 		{
 			\textbf{Input:} SD set $\mathcal{U}^{\mathrm{On}}$, SP set $\mathcal{S}^{\mathrm{On}}$, 
 			coverage sets $\{\bm{C}_{\mathbbm{j}}\}$, subcarrier-capacity $\{G_{\mathbbm{i}}\}$, prices $\{p_{\mathbbm{j},\mathbbm{i}}\}$, WPS weights $\{\gamma_{\mathbbm{j},\mathbbm{i}}\}$, 
 			max iterations $T_{\max}$, threshold $\varepsilon$.\\[2pt]
 		}
 		
 		\textbf{Initialization:}
 		Construct a random feasible profile 
 		$\bm{\pi} = (\pi_{\mathbbm{j}})_{\mathbbm{j}\in\mathcal{U}^{\mathrm{On}}}$ 
 		that satisfies constraints~\eqref{23a}--\eqref{23c};
 		set $\mathbbm{t} \leftarrow 0$ and $\mathrm{converged} \leftarrow \textbf{false}$.\\[2pt]
 		
 		\While{not $\mathrm{converged}$ \textbf{and} $\mathbbm{t} < T_{\max}$}{
 			
 			$\mathrm{converged} \leftarrow \textbf{true}$;\\
 			Generate a random permutation $\mathcal{P}$ of indices in $\mathcal{U}^{\mathrm{On}}$;\\
 			
 			\For{each SD $u^{\mathrm{On}}_{\mathbbm{j}}$ in $\mathcal{P}$}{
 				let $a^{\mathrm{cur}}_{\mathbbm{j}} \leftarrow \pi_{\mathbbm{j}}$;\\
 				for all $\mathbbm{i}$, compute the loads excluding $u^{\mathrm{On}}_{\mathbbm{j}}$:
 				$y^{-{\mathbbm{j}}}_{\mathbbm{i}}$, $\mathbbm{x}^{-{\mathbbm{j}}}_{\mathbbm{i}}$, 
 				and $n^{-{\mathbbm{j}}}_{\mathbbm{i}}$;\\
 				
 				build the feasible action set $\mathcal{A}^{\mathrm{feas}}_{\mathbbm{j}}$ according to~\eqref{feas_new};\\
 				
 				compute current payoff 
 				$\widetilde U_{\mathbbm{j}}(a^{\mathrm{cur}}_{\mathbbm{j}}\mid\bm{\pi}_{-\mathbbm{j}})$ 
 				via~\eqref{eq:unified_utility_new};\\
 				
 				\For{each $a\in\mathcal{A}^{\mathrm{feas}}_{\mathbbm{j}}$}{
 					temporarily set $\pi_{\mathbbm{j}}\leftarrow a$ and evaluate
 					$\widetilde U_{\mathbbm{j}}(a\mid\bm{\pi}_{-\mathbbm{j}})$;\\
 				}
 				
 				let 
 				$a^{\mathrm{best}}_{\mathbbm{j}}
 				\in\arg\max_{a\in\mathcal{A}^{\mathrm{feas}}_{\mathbbm{j}}} 
 				\widetilde U_{\mathbbm{j}}(a\mid\bm{\pi}_{-\mathbbm{j}})$;\\
 				
 				\If{$\widetilde U_{\mathbbm{j}}(a^{\mathrm{best}}_{\mathbbm{j}}\mid\bm{\pi}_{-\mathbbm{j}}) 
 					- \widetilde U_{\mathbbm{j}}(a^{\mathrm{cur}}_{\mathbbm{j}}\mid\bm{\pi}_{-\mathbbm{j}}) > \varepsilon$}{		$\pi_{\mathbbm{j}}\leftarrow a^{\mathrm{best}}_{\mathbbm{j}}$;\\
 					$\mathrm{converged} \leftarrow \textbf{false}$;
 				}
 				\Else{
 					$\pi_{\mathbbm{j}}\leftarrow a^{\mathrm{cur}}_{\mathbbm{j}}$;
 				}
 			}
 			
 			$\mathbbm{t} \leftarrow \mathbbm{t} + 1$;
 		}
 		
 		\textbf{Output:} Final feasible profile $\bm{\pi}^{\star}=\bm{\pi}$, convergence flag $\mathrm{converged}$, and the associated allocation. If $\mathrm{converged}=\textbf{true}$, then $\bm{\pi}^{\star}$ is an $\varepsilon$-NE; in particular, it is a pure-strategy NE when $\varepsilon=0$. If $T_{\max}$ is reached before convergence, $\bm{\pi}^{\star}$ is the final feasible profile obtained within the iteration budget.
 	}
 \end{algorithm}

\noindent\textbf{Step 1. Random-feasible initialization} (lines 1-2, Alg~\ref{alg:pg_brd}):
We first construct a feasible initial association profile
$\bm{\pi}^{(0)}\in\{0,1,\ldots,|\mathcal{S}^{\mathrm{On}}|\}^{|\mathcal{U}^{\mathrm{On}}|}$,
where $\pi_{\mathbbm{j}}=0$ corresponds to local task execution at $u^{\mathrm{On}}_{\mathbbm{j}}$ and $\pi_{\mathbbm{j}}=\mathbbm{i}>0$ means that $u^{\mathrm{On}}_{\mathbbm{j}}$ offloads its task to SP $s^{\mathrm{On}}_{\mathbbm{i}}$.
For each SD $u^{\mathrm{On}}_{\mathbbm{j}}$, we randomly select an action from the set $\{0\}\cup\bm{C}_{\mathbbm{j}}$ and only accept it if the resulting profile satisfies the coverage constraint~\eqref{23b} and the subcarrier-capacity constraint~\eqref{23c}. 
This procedure ensures that the algorithm starts from a valid resource scheduling configuration with no constraint violations.

\vspace{0.35em}
\noindent\textbf{Step 2. Feasible-action enumeration under coverage and subcarrier-capacity constraints} (lines 7-9, Alg.~\ref{alg:pg_brd}):
Given a current profile $\bm{\pi}$, we compute for each SP $s^{\mathrm{On}}_{\mathbbm{i}}$ the effective congestion load $y_{\mathbbm{i}}(\bm{\pi})$ and the physical computing load $\mathbbm{x}_{\mathbbm{i}}(\bm{\pi})$ as in~\eqref{eq:cong_vars_pi_final}, as well as the number of currently served SDs
$n_{\mathbbm{i}}(\bm{\pi}) = \big|\{\,\mathbbm{j}\mid \pi_{\mathbbm{j}}=\mathbbm{i}\,\}\big|$.
When it is SD $u^{\mathrm{On}}_{\mathbbm{j}}$'s turn to update its decision, we temporarily remove its own contribution and obtain the loads generated by all other SDs, denoted by $y^{-{\mathbbm{j}}}_{\mathbbm{i}}$, $\mathbbm{x}^{-{\mathbbm{j}}}_{\mathbbm{i}}$, and $n^{-{\mathbbm{j}}}_{\mathbbm{i}}$. These quantities represent the congestion level and computing load at each SP if $u^{\mathrm{On}}_{\mathbbm{j}}$ were not associated with any SP.
Given these congestion and load levels generated by all other SDs, we  construct the feasible action set for $u^{\mathrm{On}}_{\mathbbm{j}}$ as follows:
	\begin{equation}\label{feas_new}
    \hspace{-4mm}
    \resizebox{0.935\linewidth}{!}{$
		\mathcal{A}_{\mathbbm{j}}^{\mathrm{feas}}
		{=}
		\Big\{ a\in\{0\}\cup\mathcal{S}^{\mathrm{On}}
		\hspace{-.0mm}  \Big| \hspace{-.0mm} 
		a=0\ \text{or}\ \big(a=\mathbbm{i}>0,\ s^{\mathrm{On}}_{\mathbbm{i}}\in\bm{C}_{\mathbbm{j}},\ n^{-{\mathbbm{j}}}_{\mathbbm{i}} + 1 \le G_{\mathbbm{i}}\big)
		\Big\},$}\hspace{-4mm}
	\end{equation}
where $a=0$ corresponds to local task execution, and an offloading action $a=\mathbbm{i}>0$ is feasible only if SP $s^{\mathrm{On}}_{\mathbbm{i}}$ covers $u^{\mathrm{On}}_{\mathbbm{j}}$ and still has enough residual subcarrier-capacity to admit one more SD without violating its capacity $G_{\mathbbm{i}}$. This step explicitly removes any candidate action that would break either the coverage or the subcarrier-capacity constraints.

\vspace{0.35em}
\noindent\textbf{Step 3. Asynchronous best response under the exact potential} (lines 10-18, Alg.~\ref{alg:pg_brd}):
In each iteration, SDs are visited one-by-one in a random order, and at most one SD is allowed to change its decision at a time. For a given SD $u^{\mathrm{On}}_{\mathbbm{j}}$ and current profile $\bm{\pi}$, we first evaluate its current potential-aligned payoff
$\widetilde U_{\mathbbm{j}}(\pi_{\mathbbm{j}}\mid\bm{\pi}_{-\mathbbm{j}})$ using~\eqref{eq:unified_utility_new}. Then, for each feasible action $a\in\mathcal{A}_{\mathbbm{j}}^{\mathrm{feas}}$, we temporarily set $\pi_{\mathbbm{j}}=a$, recompute the congestion and delay quantities for this SD, and obtain the resulting utility $\widetilde U_{\mathbbm{j}}(a\mid\bm{\pi}_{-\mathbbm{j}})$.
The SD then chooses a best response
$a^{\mathrm{best}}_{\mathbbm{j}}\in\arg\max_{a\in\mathcal{A}_{\mathbbm{j}}^{\mathrm{feas}}} \widetilde U_{\mathbbm{j}}(a\mid\bm{\pi}_{-\mathbbm{j}})$.
To avoid frequent switches caused by marginal gains, we introduce an acceptance threshold $\varepsilon>0$ and let $u^{\mathrm{On}}_{\mathbbm{j}}$ update its decision only if
$\widetilde U_{\mathbbm{j}}(a^{\mathrm{best}}_{\mathbbm{j}}\mid\bm{\pi}_{-\mathbbm{j}}) -
\widetilde U_{\mathbbm{j}}(\pi_{\mathbbm{j}}\mid\bm{\pi}_{-\mathbbm{j}}) > \varepsilon$.
Otherwise, it keeps its current action $\pi_{\mathbbm{j}}$ unchanged. Because the game admits the exact potential $\Phi^{\mathrm{G}}(\bm{\pi})$, every accepted update in this step strictly increases $\Phi^{\mathrm{G}}(\bm{\pi})$.

\vspace{0.35em}
\noindent\textbf{Step 4. Convergence and resulting equilibrium} (lines 3-20, Alg.~\ref{alg:pg_brd}):
The above two steps (feasible-action enumeration and asynchronous best responses) are repeatedly applied by going over all SDs. When a complete pass over $\mathcal{U}^{\mathrm{On}}$ produces no accepted update, no SD can improve its potential-aligned payoff by more than $\varepsilon$ through a unilateral action change. Therefore, the algorithm reaches an $\varepsilon$-NE with respect to~\eqref{eq:unified_utility_new}; when $\varepsilon=0$, this reduces to a pure-strategy NE. If the iteration budget $T_{\max}$ is reached before this condition holds, PG-BRD returns the best feasible profile obtained within the iteration budget rather than claiming equilibrium convergence.

\section{Evaluations}
We next conduct numerical experiments to evaluate FUSION.
We consider a service provisioning region of size $1000\ \mathrm{m}\times 1000\ \mathrm{m}$, where ESs are randomly placed (HUs and MUs are in turn randomly placed within the coverage areas of ESs) and APs can be dispatched to assist overloaded ESs. The availability windows of ESs are randomly generated over a normalized time horizon to capture the temporal variability of computation demand. Other key parameters are as follows\cite{AUCTION,ACO,ECI1}: 
$f_i^{\mathrm{E}}\in[1,3]\times10^{12}$ CPU cycles/s (this is the aggregated clock speed of CPUs co-located at the ES),
$f_k^{\mathrm{A}}\in[1,3]\times10^{10}$ CPU cycles/s,
$f_m^{\mathrm{H,loc}}=f_n^{\mathrm{M,loc}}\in[1,2]\times10^{9}$ CPU cycles/s,
$d_m^{\mathrm{H}}\in[10,55]\ \mathrm{Mbit}$,
$d_n^{\mathrm{M}}\in[100,550]\ \mathrm{Mbit}$,
$G_i\in[6,8]$,
$G_k^\mathrm{A}\in[1,2]$,
$C_k^\mathrm{A}\in[8,10]$,
$W=50\ \mathrm{MHz}$,
$e_m^{\mathrm{H,tx}}=e_n^{\mathrm{M,tx}}\in[0.2,0.4]\ \mathrm{W}$,
$r_m^{\mathrm{H}} = 600 $ cycles/bit $ \times d_m^{\mathrm{H}}$, and $r_n^{\mathrm{M}} = 600 $ cycles/bit $ \times d_n^{\mathrm{M}}$.
Channel gains and data rates are generated using the distance-dependent path-loss models following~\cite{UAV-SNR}. 
Under these models, most links fall into an SNR range of $10$--$25$ dB, and UAV links achieve slightly higher average SNR due to their higher Line-of-Sight (LoS) probability.
All figures are obtained via Monte-Carlo method over 100 independent trials.

\vspace{-4mm}
\subsection{Benchmark Methods and Evaluation Metrics}
We evaluate FUSION from two complementary angles:
\textit{(i)} prediction performance on service demand forecasting module (Sec. \ref{Evaluation_LNN}); and
\textit{(ii)} system-level performance (Sec. \ref{Evaluation_FUSION}).

\subsubsection{Benchmarks and Metrics for Demand Forecasting}
We compare Pro-LNN against two widely used prediction models.

\noindent$\bullet$ \textbf{LSTM-driven demand prediction\cite{LSTM}:}
A standard long short-term memory (LSTM) network is applied to the  input sequences $\{\mathbf{z}_i\}$ and trained to predict future ES demands. This baseline represents a strong discrete-time RNN model.

\noindent$\bullet$ \textbf{Transformer-driven demand prediction\cite{transformer}:}
A temporal Transformer-based predictor with self-attention is used over the historical context window. This serves as a powerful baseline, but typically requires more computing resources.

Performance evaluations consider the following metrics:

\noindent$\bullet$ \textbf{Accuracy:}
We report the root mean square error (RMSE), mean absolute error (MAE), and symmetric mean absolute percentage error (sMAPE) of the $H$-step forecasts $\{\hat N_i^{\mathrm{Dem},(\nu+1:\nu+H)}\}$ against the ground truth $\{N_i^{\mathrm{Dem},(\nu+1:\nu+H)}\}$, averaged over all ESs and all test instances.

\noindent$\bullet$ \textbf{Model footprint:}
We measure the total number of trainable parameters and the corresponding 32-bit floating-point format (FP32, measured in megabytes) for each predictor, which together characterize its computing (memory/storage) overhead.

\noindent$\bullet$ \textbf{Robustness:}
To assess robustness under realistic noisy conditions, we compare the mean squared error (MSE) on clean test inputs with that on corrupted inputs containing randomly missing observations, additive perturbations, and occasional demand bursts, and report the  performance drop $\Delta\text{MSE}$.

\subsubsection{System-Level Benchmarks and Metrics}

We then compare FUSION with several benchmark methods enumerated below:

\noindent$\bullet$ \textbf{Pure online scheduling (PurOnline)\cite{RW-PG-Zhou}:}
This baseline removes the offline stage of FUSION completely, e.g., ESs and APs do not sign any offline contracts, and UAV locations are not pre-planned. All HUs and MUs directly enter the online market, where their tasks are scheduled only via the PG-BRD-based scheduling over the current ES/AP network environment.

\noindent$\bullet$ \textbf{Two-stage resource provisioning without PG-based online scheduling (FUSION\_NoPG):}
This variant keeps the same offline stage as FUSION, i.e., Off-AIC$^2$ with eACO-VRP is still used to form offline AP--ES contracts and pre-planned routes. However, the online stage abandons PG-BRD: HU/MU tasks are randomly assigned to feasible ESs/APs within coverage subject to resource capacity and deadline constraints.

\noindent$\bullet$ \textbf{Two-stage resource provisioning with random offline contracts (FUSION\_Random):}
This baseline uses the same online stage as FUSION and still relies on PG-BRD for congestion-aware online resource scheduling. However, its offline stage no longer uses Off-AIC$^2$ or eACO-VRP; instead, each AP is randomly associated with a subset of ESs and moves along a random feasible route until its capacity is exhausted. 

\noindent$\bullet$ \textbf{Two-stage resource provisioning without ES spatio-temporal features (FUSION\_NoST):}
This baseline preserves the two-stage structure of FUSION, but its offline auction ignores ES-specific time windows and travel-time constraints. Specifically, AP-ES pairs are determined only by simple price/volume trading without using eACO-VRP, while online stage still runs PG-BRD for task scheduling.

To assess performance, we use the following metrics:

\noindent$\bullet$ \textbf{Social welfare (SW):}
SW captures the sum of utilities of all ESs, APs, and users across both stages: in the offline stage, SW accounts for ES profits, AP utilities, and auctioneer surplus given by (\ref{equ. U_ES_1})-(\ref{key}); in the online stage, it captures the transfer-free welfare $\mathcal{W}(\bm{\pi})$, combining SD-side benefits, congestion costs, and SP-side operating costs. The game potential $\Phi^{\mathrm{G}}(\bm{\pi})$ is used for PG-BRD convergence analysis because it additionally includes fixed posted prices that affect SD-side best responses.

\noindent$\bullet$ \textbf{Delay incurred by interactions between SPs and SDs (DoI):}
DoI captures the total decision-making latency (in milliseconds) from control-message exchanges between ESs/APs (SPs) and HUs/MUs (SDs) across both offline and online stages. To compute DoI, following \cite{ECI1,ECI2,ECI3}, we assume that the per-interaction delay lies in the range of $[1,15]$ ms.

\noindent$\bullet$ \textbf{Energy consumption incurred by interactions between SPs and SDs (ECoI):}
ECoI measures signaling-related energy consumption during decision making.
Following~\cite{ECI1,ECI2}, the ES transmit power is set within $[6,20]$ W.
Specifically, ECoI is computed by summing the energy consumed by all control-message transmissions, with interaction delays used as the transmission durations.

\noindent$\bullet$ \textbf{Truthfulness and individual rationality of Off-AIC$^2$:}
To empirically verify the economic properties of Off-AIC$^2$, we conduct experiments in which ESs and APs can deviate from their truthful bids/asks. We report the fraction of ESs and APs whose utilities increase under such deviations. A truthful and individually rational auction should ensure that \textit{(i)} every winning ES attains non-negative utility and cannot benefit from misreporting its bid, and \textit{(ii)} every winning AP receives a payment no lower than its declared ask and gains no advantage by inflating or deflating it.


\vspace{-3mm}
\subsection{Performance Evaluation on Prediction}\label{Evaluation_LNN}
\subsubsection{Robustness vs. Cost}\label{subsec:B-Sim}
We first conduct experiments to evaluate ``robustness vs. cost'' trade-off in Table~\ref{tab:lnn_robust}. In the ``clean'' setting, we feed each model with intact historical sequences, while in the ``corrupt'' setting, we inject 30\% random missing values, additive Gaussian noise with standard deviation $\sigma=0.05$, and random burst shocks of length 1--3. On clean data, LSTM attains the lowest MSE (0.119), whereas Transformer and Pro-LNN yield slightly larger errors (0.467 and 0.459, respectively). This gap is expected, as LSTM uses approximately 52.8k parameters, compared to only 11.1k parameters of Pro-LNN, implying significantly higher computation costs. Once corrupted inputs are introduced, however, LSTM error surges to 0.983 (a relative increase of +726.3\%), while the errors of Transformer and Pro-LNN only grow to 0.781 and 0.762 (relative increases of +67.3\% and +66.1\%), respectively. These results indicate that our proposed Pro-LNN offers a favorable trade-off between robustness and cost compared to an over-parameterized LSTM on noisy sequences, which matches the characteristics of ES demand trajectories in Off-AIC$^{2}$.
\begin{table}[t]
	\vspace{-5mm}
	\centering
	\small
	\caption{Robustness--complexity comparison of different forecasters on synthetic non-stationary series. The ``corrupt'' case injects 30\% random missing values, Gaussian noise with standard deviation $\sigma=0.05$, and random burst shocks.}
	\vspace{-2mm}
	\label{tab:lnn_robust}
	\setlength{\tabcolsep}{0.6mm}
	\begin{tabular}{lccccc}
		\toprule
		\textbf{Model} 
		& {\#Params} 
		& $\text{MSE}_{\text{clean}}$ 
		& $\text{MSE}_{\text{corrupt}}$ 
		& $\Delta\text{MSE}$ 
		& Rel.\ increase \\
		\midrule
		LSTM        
		& 52800 
		& 0.119 
		& 0.983 
		& 0.864 
		& +726.3\% \\
		Transformer 
		& 69500
		& 0.467 
		& 0.781 
		& 0.314 
		& +67.3\%  \\
		\rowcolor[gray]{0.9}	Pro-LNN     
		& 11100 
		& 0.459 
		& 0.762 
		& 0.303 
		& +66.1\%  \\
		\bottomrule
	\end{tabular}
	\vspace{-4mm}
\end{table}

\subsubsection{Real-World Dataset Test} 
We validate Pro-LNN on the UCI Electricity Load Diagrams 2011--2014 dataset\cite{UCI} using two feeders (MT\_005 and MT\_145) in a univariate task, where a history of $T_{\mathrm{his}}=48$ past measurements predict the next $H=10$ time steps. We use this dataset since the electricity load traces are commonly used as a proxy for computing service demand, since power consumption in ESs can be directly driven by processor utilization and exhibits similar diurnal and bursty temporal patterns.
Quantitative results are summarized in Table~\ref{tab:uci-accuracy} and Table~\ref{tab:uci-robust}. Observing Table~\ref{tab:uci-accuracy}, Pro-LNN achieves comparable prediction accuracy to LSTM and Transformer, while being much more compact. In particular, Pro-LNN attains an RMSE of 0.1234, close to that of the LSTM (0.1121) and the Transformer (0.1158), but only uses 11{,}114 parameters and has a 0.046\,MB storage/memory footprint. This corresponds to roughly 4--6$\times$ fewer parameters and 4--9$\times$ smaller model size than other methods, signaling the deployability of our Pro-LNN  over resource-constrained edge networks. To evaluate robustness, we inject  corrupted inputs, following the same configuration as in the synthetic experiments detailed in Sec.~\ref{subsec:B-Sim}.
As reported in Table~\ref{tab:uci-robust}, Pro-LNN exhibits the smallest performance degradation among all models: its MSE increases from 0.0152 (clean) to 0.0467 (corrupt), yielding a $\Delta$MSE of 0.0315, which is lower than that of both LSTM (0.0359) and Transformer (0.0458). These observations indicate that Pro-LNN is highly parameter-efficient and demonstrates strong robustness to missing data and measurement noise,
 which in turn supports the suitability of Pro-LNN for forecasting future ES resource demands in the offline stage of FUSION. 

\begin{table}[!t]
	\vspace{-5mm}
	\centering
	\small
	\caption{Prediction accuracy and model footprint on the UCI dataset (feeders MT\_005 and MT\_145, input length $T_{\mathrm{his}}=48$, prediction horizon $H=10$).}
	\vspace{-3mm}
	\setlength{\tabcolsep}{1.2mm}
	\begin{tabular}{lccccc}
		\toprule
		\textbf{Model}      & \#Params & Size (MB) & RMSE   & MAE    & sMAPE  \\
		\midrule
		LSTM         & 52842    & 0.204     & 0.1121 & 0.0816 & 0.1831  \\
		Transformer  & 69482    & 0.400     & 0.1158 & 0.0824 & 0.1864 \\
		\rowcolor[gray]{0.9}Pro-LNN      & 11114    & 0.046     & 0.1234 & 0.0906 & 0.1983\\
		\bottomrule
	\end{tabular}\label{tab:uci-accuracy}
	\vspace{-0mm}
\end{table}

\begin{table}[!t]
	\vspace{-2mm}
	\centering
	\small
	\caption{Robustness to missing and noisy inputs on UCI Electricity (30\% random missing values, added Gaussian noise with standard deviation of $\sigma=0.05$, and random burst shocks).}
	\vspace{-3.3mm}
	\begin{tabular}{lccc}
		\toprule
		\textbf{Model}        & MSE (clean) & MSE (corrupt) & $\Delta$MSE \\
		\midrule
		LSTM         & 0.0126      & 0.0485        & 0.0359      \\[-1pt]
		Transformer  & 0.0134      & 0.0592        & 0.0458      \\[-1pt]
		\rowcolor[gray]{0.9}	Pro-LNN      & 0.0152      & 0.0467        & 0.0315      \\
		\bottomrule
	\end{tabular}
	\label{tab:uci-robust}
	\vspace{-4mm}
\end{table}

\begin{figure*}[t]
	\centering
	\vspace{-0.4cm}
	\subfigtopskip=2pt
	\subfigbottomskip=10pt
	\subfigcapskip=-0.1cm
	\setlength{\abovecaptionskip}{-0.1cm}
        \subfigure[]{
\begin{minipage}[t]{0.48\columnwidth}
\centering 
\includegraphics[width=\linewidth, height=2.6cm]{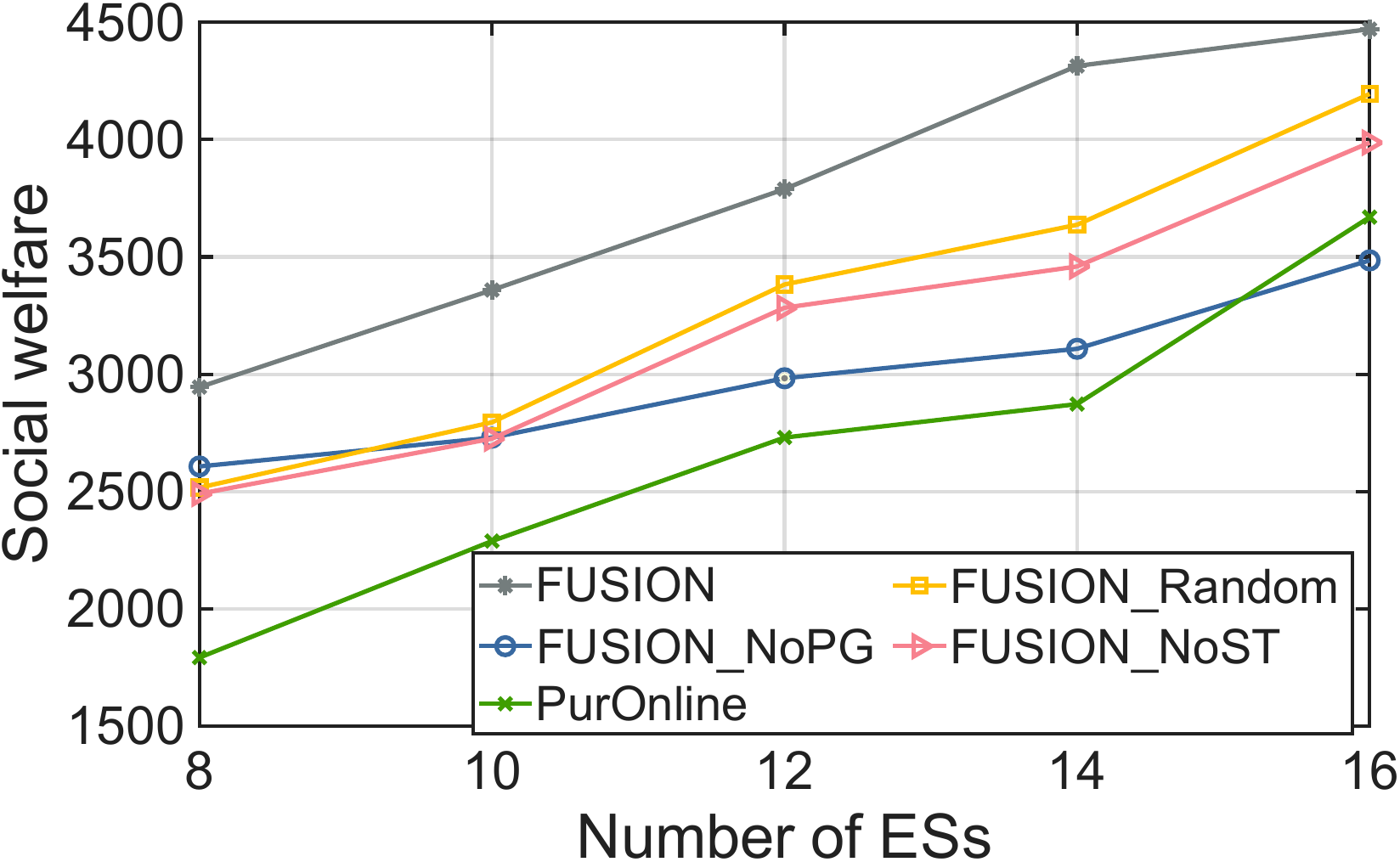}\label{fig_SWa} 
\end{minipage}}
	\subfigure[] { 
			\label{fig_SWb} 
			\includegraphics[width=0.5\columnwidth, height=2.65cm]{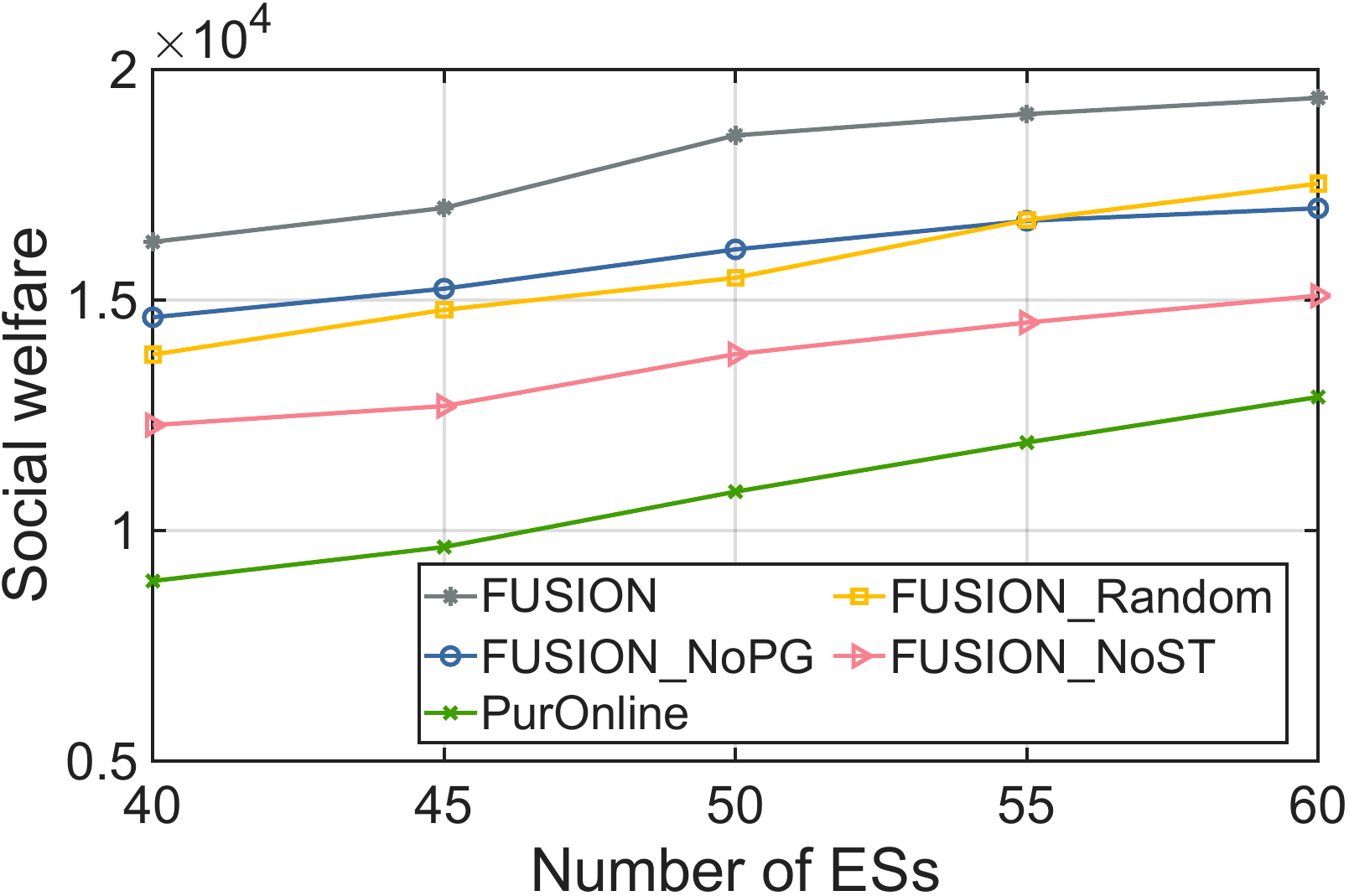} 
		}  
	\hspace{-3.29mm}\subfigure[]{
		\label{fig:doi}
		\includegraphics[width=0.5\columnwidth, height=2.5cm]{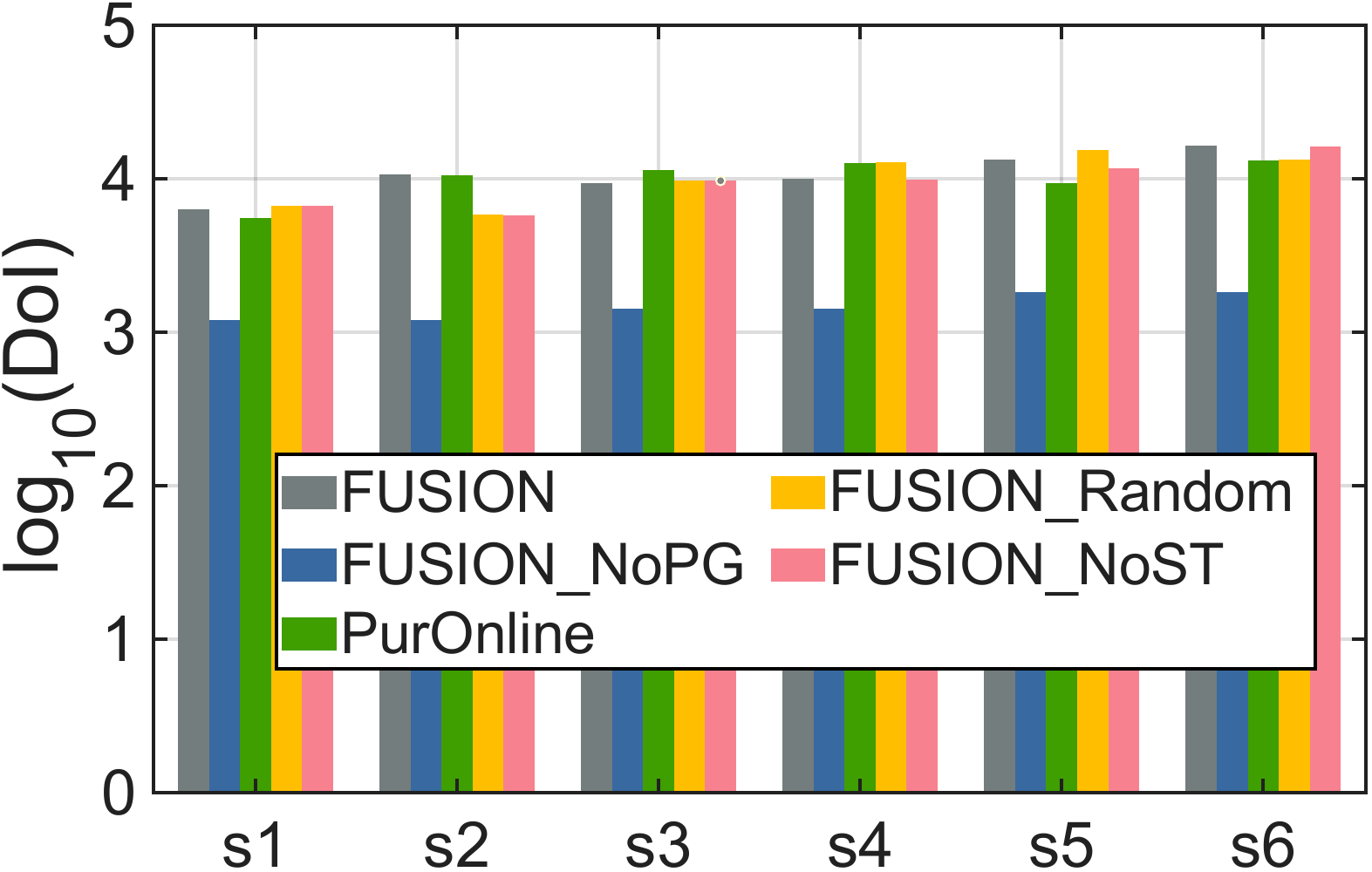}
	}
	\hspace{-4.29mm}
	\subfigure[]{
		\label{fig:ecoi}
		\includegraphics[width=0.5\columnwidth, height=2.5cm]{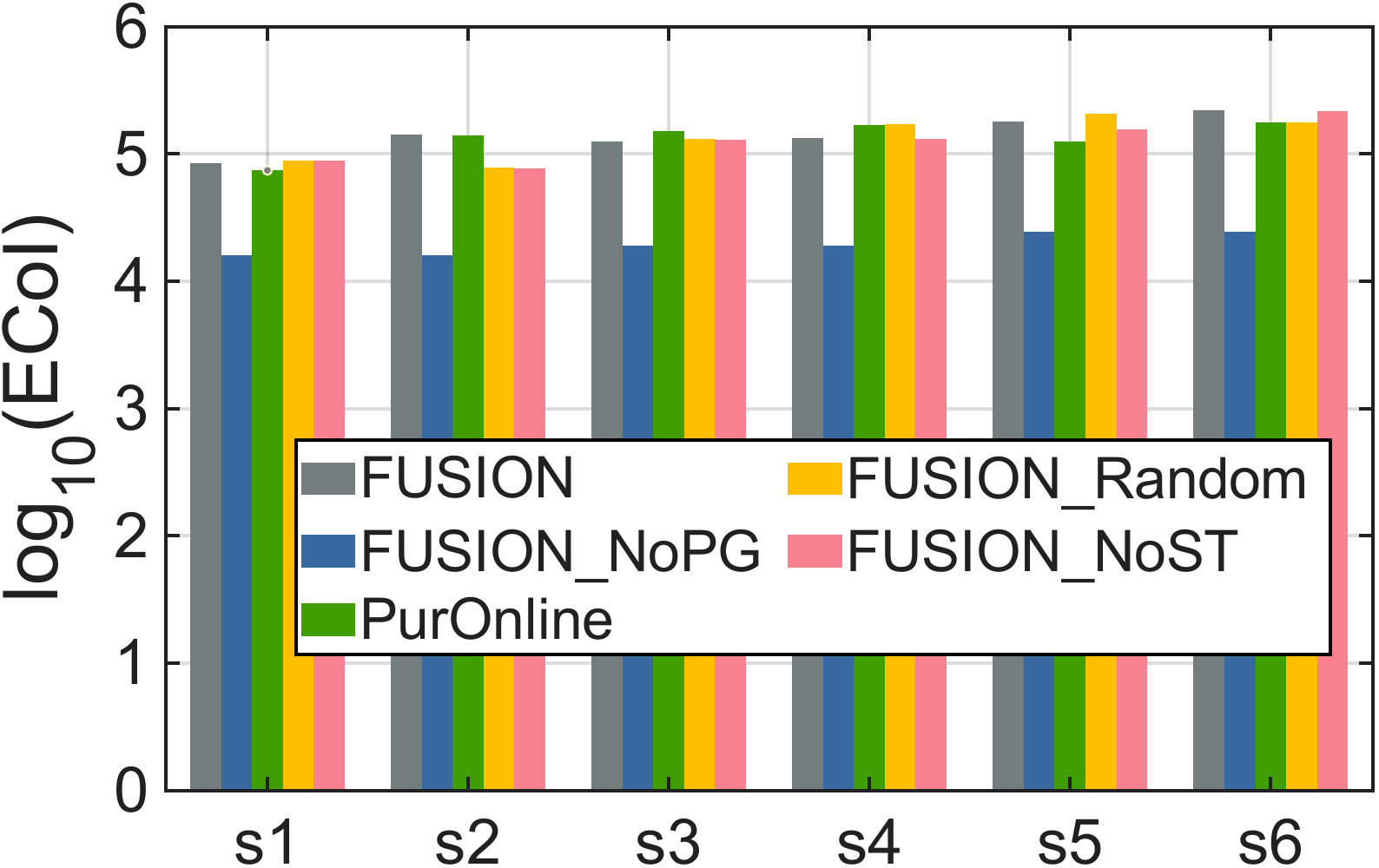}
	}\hspace{-3.59mm}
    \vspace{-1mm}
    \subfigure[]{
\begin{minipage}[t]{0.49\columnwidth}
\centering 
\includegraphics[width=\linewidth, height=2.6cm]{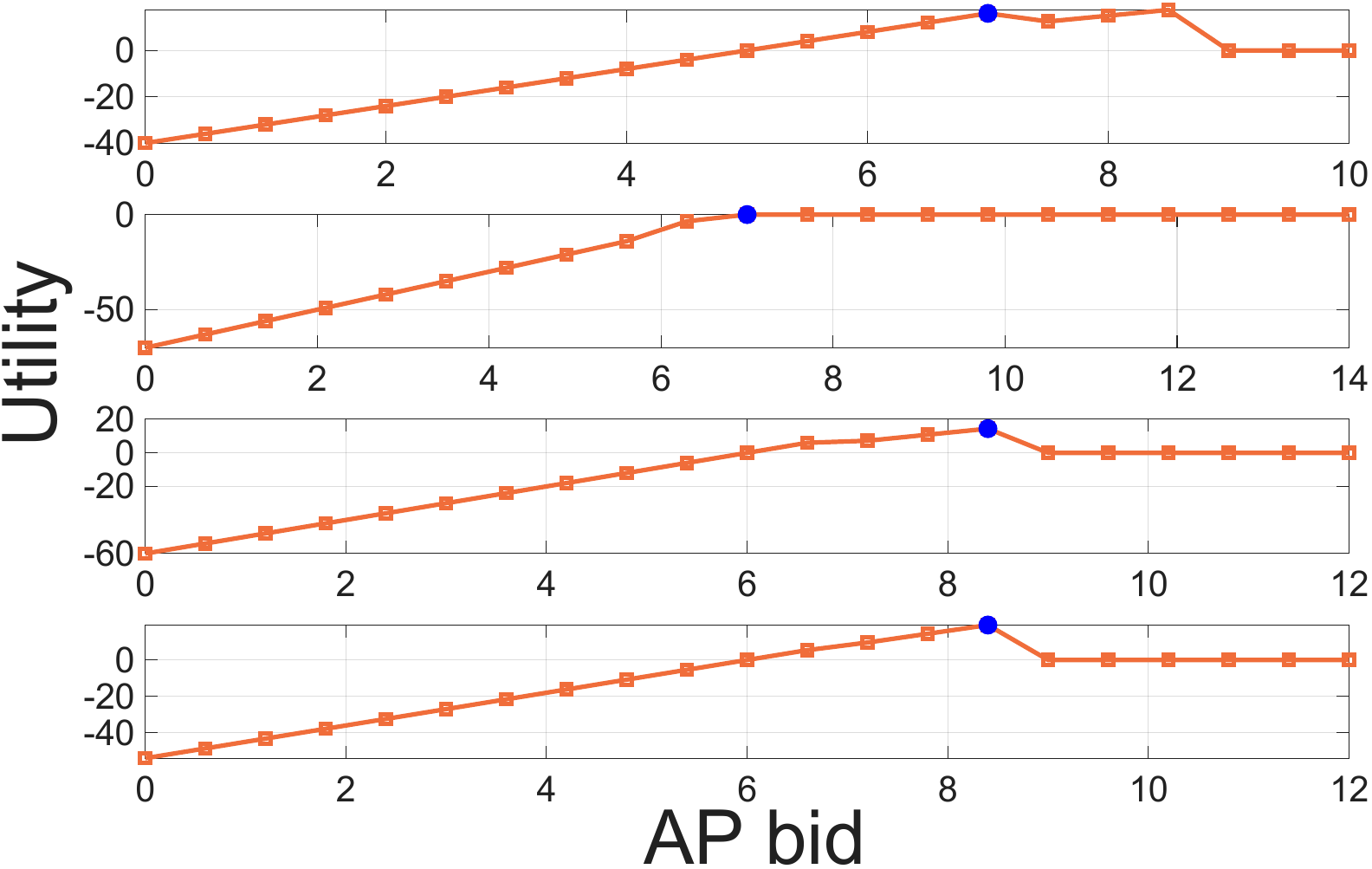}\label{fig:truth_agp}
\end{minipage}}
	\hspace{-2mm}
       \subfigure[]{
\begin{minipage}[t]{0.49\columnwidth}
\centering 
\includegraphics[width=\linewidth, height=2.6cm]{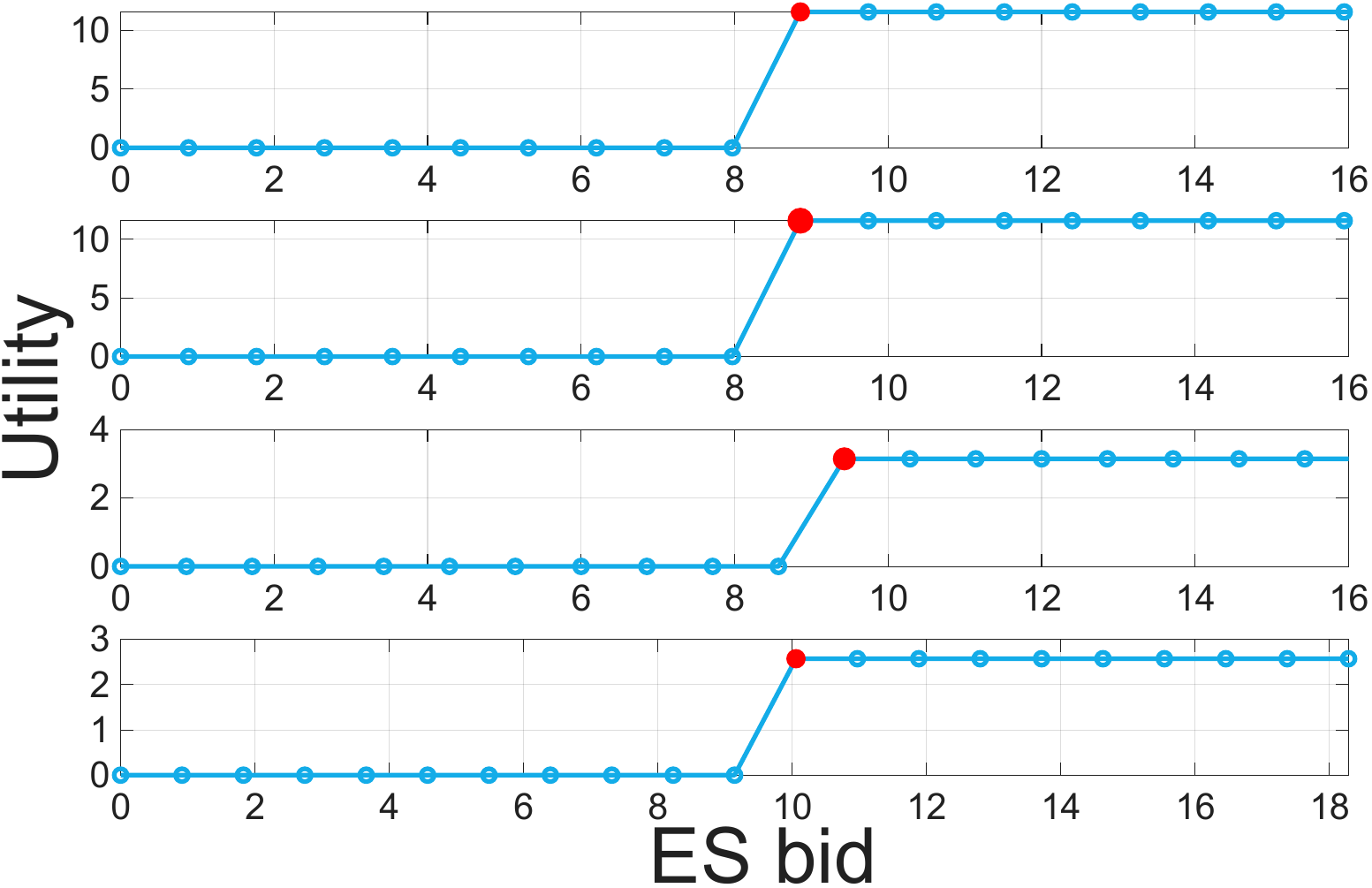}\label{fig:truth_es}
\end{minipage}}
	\hspace{-2mm}
 \subfigure[]{
\begin{minipage}[t]{0.49\columnwidth}
\centering 
\includegraphics[width=\linewidth, height=2.6cm]{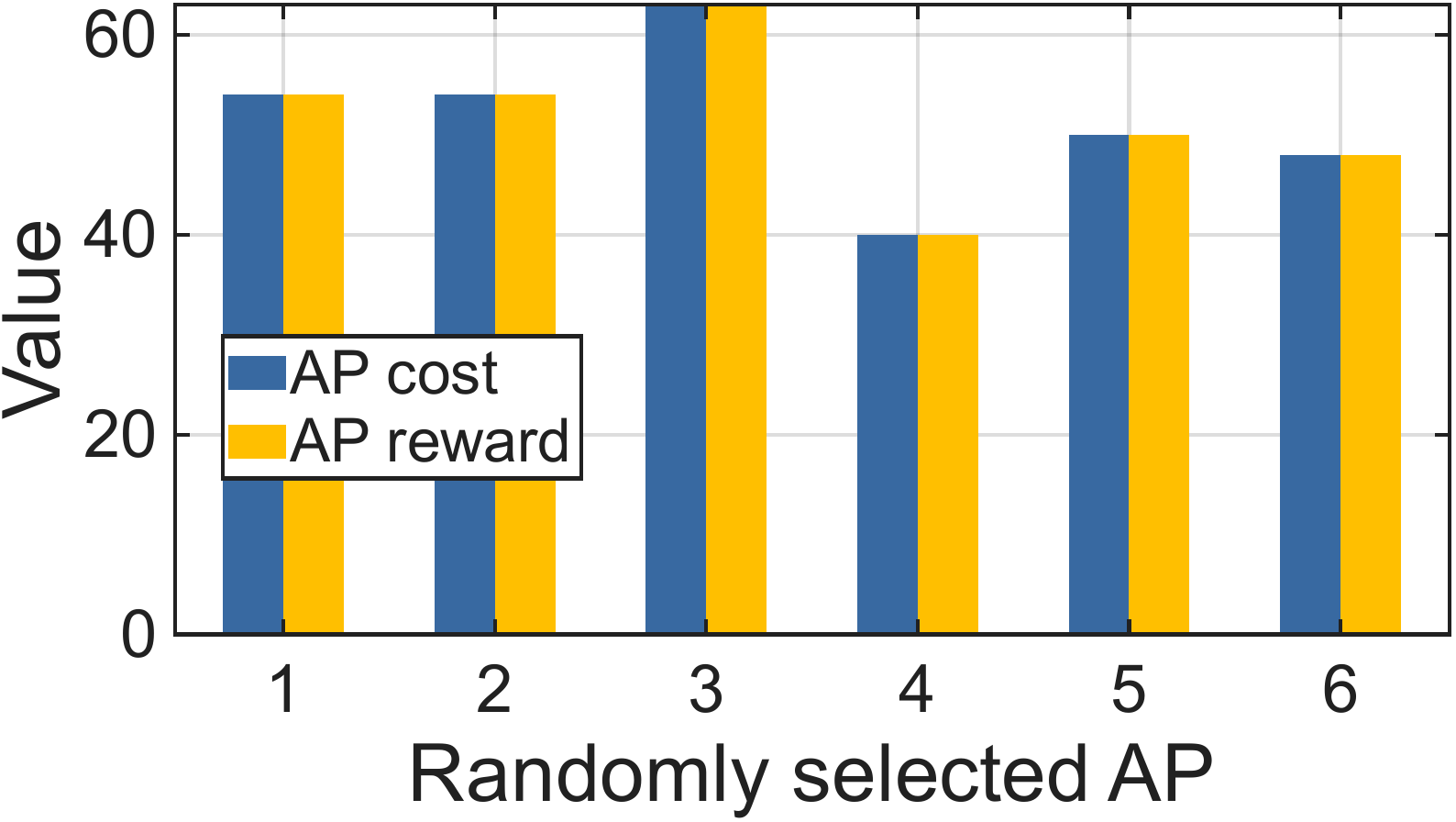}\label{fig:ir_agp}
\end{minipage}
}
	 \hspace{-3mm}
     \subfigure[]{
\begin{minipage}[t]{0.49\columnwidth}
\centering 
\includegraphics[width=\linewidth, height=2.6cm]{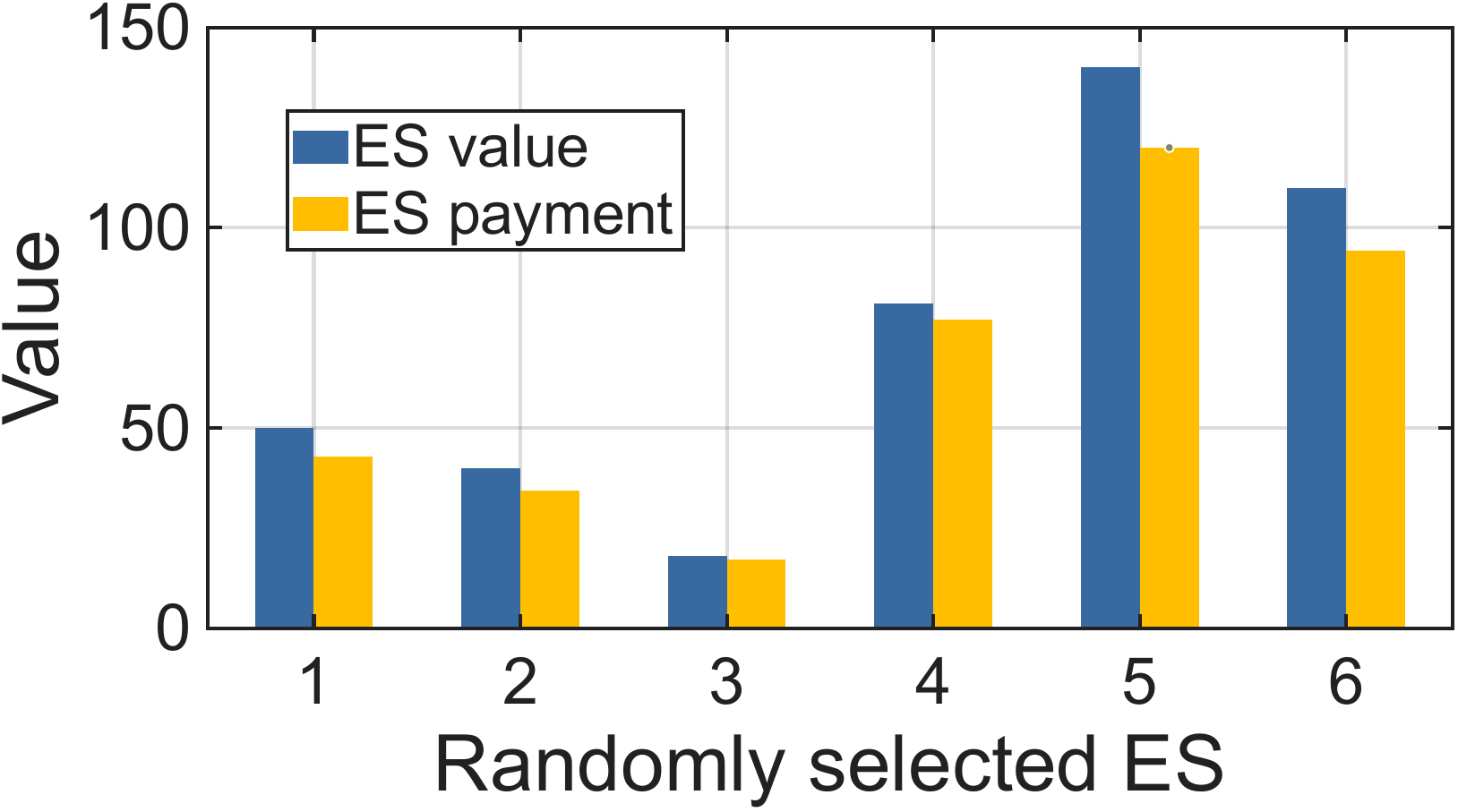}\label{fig:ir_es}
\end{minipage}}	
	\caption{Performance comparisons and economic property analyses, where (a)-(b) SW under different problem sizes (80 HUs, 80 MUs, and 5 APs in (a), and 400 HUs, 400 MUs, and 35 APs in (b)), (c)-(d) DoI and ECoI under different problem sizes (s1-s6 are set as $\{75/75/15/8\}$, $\{75/75/20/8\}$, $\{90/90/20/10\}$, $\{90/90/25/10\}$, $\{115/115/25/12\}$, and $\{115/115/30/12\}$), (e)-(f) truthfulness, and (g)-(h) individual rationality.}
	\label{fig:all_performance}
	\vspace{-0.4cm}
\end{figure*}

\vspace{-3mm}
\subsection{Performance Evaluation on FUSION}\label{Evaluation_FUSION}
To emulate heterogeneous, time-varying online demands across different ESs, we adopt the real-world UCI Electricity Load Diagrams dataset~\cite{UCI}, select several distinct load profiles, and map their normalized power-consumption traces to the per-time-slot resource demand of each ES. We then train Pro-LNN on these profiles to forecast the per-ES demand\footnote{We employ the UCI dataset as a trace to capture realistic temporal patterns of aggregate demand at each ES. Pro-LNN is trained on these time series and used to forecast the demand $N_i^{\mathrm{Dem},(\nu)}$ of ES $s_i$ over future trading rounds; the forecasted demand profiles drive the time-varying SD task-arrival intensities at each ES, while the total population sizes of SDs and SPs are kept fixed to control the overall network scale.}, and then evaluate FUSION as follows (i.e., in Sec. \ref{Evaluation_SW}--Sec. \ref{Evaluation_IR}). 

\subsubsection{Social Welfare (SW)}\label{Evaluation_SW}
Figs.~\ref{fig_SWa}–\ref{fig_SWb} report SW as the number of ESs increases from 8 to 16. The two subfigures correspond to two representative network configurations: 80 HUs, 80 MUs, and 5 APs in Fig.~\ref{fig_SWa}, and 400 HUs, 400 MUs, and 35 APs in Fig.~\ref{fig_SWb} to capture different network scales.

In Fig.~\ref{fig_SWa}, SW increases for all methods as more ESs are introduced, since additional ESs provide more computing resources and service opportunities. However, the growth rates differ across schemes. FUSION consistently achieves the highest SW because its offline stage employs Off-AIC$^2$ to match AP resources with ESs of high revenue potential under spatio-temporal constraints, while its online stage adopts a PG-BRD scheduler to perform congestion-aware task allocation among HUs and MUs, leading to more efficient and fair resource utilization. PurOnline, which lacks an offline stage and relies solely on an online potential game over real-time ES resources, achieves the lowest SW due to the absence of proactive UAV repositioning. FUSION\_NoPG shares the same offline design as FUSION but replaces PG-BRD with random online assignment, resulting in unresolved congestion and a notable SW degradation, highlighting the importance of the online PG. FUSION\_Random adopts random AP–ES association in the offline stage while retaining PG-BRD online, yielding intermediate performance and indicating that intelligent online scheduling alone cannot compensate for the lack of revenue- and spatio-temporal-aware offline planning. FUSION\_NoST ignores ES spatio-temporal characteristics in the offline stage and relies on a price-driven auction with PG-BRD online; it outperforms PurOnline and FUSION\_Random but still exhibits a consistent gap to FUSION. Fig.~\ref{fig_SWb} shows qualitatively similar trends under the larger-scale configuration: as the number of ESs increases, SW rises for all methods, with FUSION consistently outperforming the baselines. Overall, these results indicate that, in FUSION, a well-designed offline AP–ES contracting mechanism (particularly, Off-AIC$^2$ with explicit spatio-temporal and routing considerations) effectively unlocks scalability gains, while the PG-BRD-based online stage mitigates congestion and further improves resource utilization. 

\subsubsection{Evaluation on Interaction Overhead}
To quantitatively assess interaction efficiency, we analyze DoI (Fig. \ref{fig:doi}) and ECoI (Fig. \ref{fig:ecoi}). The x-axis (s1–s6) denotes different system configurations (various HU/MU/ES/AP combinations detailed in the caption of the figure), and the y-axis uses a logarithmic scale to capture results across magnitudes.
Across all settings, DoI and ECoI remain within the same order of magnitude: FUSION is not always the best in DoI/ECoI, but the gaps with baselines (FUSION\_NoPG, FUSION\_Random, FUSION\_NoST, PurOnline) are rather small. This arises from FUSION's offline contracting stage, which exposes more candidate SPs and resource options, prompting the PG-BRD online scheduler to explore a larger strategy space, slightly increasing interaction rounds and per-transaction overhead. Importantly, FUSION's auction and route planning occur offline, adding no extra delay or energy cost to the online stage (DoI/ECoI in Figs. \ref{fig:doi}–\ref{fig:ecoi} reflect amortized offline overhead). Combined with SW results (Figs. \ref{fig_SWa}-\ref{fig_SWb}), we conclude that FUSION maintains comparable DoI/ECoI to baselines while substantially improving SW through more efficient air-ground resource utilization. 

\subsubsection{Truthfulness and Individual Rationality}\label{Evaluation_IR}

To assess truthfulness and individual rationality, we study the  Off-AIC$^2$ embedded in FUSION, under a representative setting with 75 HUs, 75 MUs, 12 ESs, and 7 APs. Figs.~\ref{fig:truth_agp} and~\ref{fig:truth_es} show the utilities of sellers (APs) and buyers (ESs) as functions of their reported asks/bids, while Figs.~\ref{fig:ir_agp} and~\ref{fig:ir_es} verify individual rationality by comparing true costs/values with payments.

Fig.~\ref{fig:truth_agp} shows four independent market instances, where one AP is randomly selected in each subplot. The horizontal axis denotes the reported ask and the vertical axis the resulting utility (revenue minus true cost), with the blue dot marking the true ask. As the ask increases, the AP's utility rises while it remains selected, but drops sharply once the ask exceeds its critical region and the contract is lost. In all instances, the utility is maximized near the true ask, indicating that cost misreporting does not improve the seller's payoff. Fig.~\ref{fig:truth_es} presents a similar analysis for ESs. In each subplot, one ES is selected, with the horizontal axis representing the reported mean bid and the red dot indicating the true valuation. When the bid is below a threshold close to the true value, the ES fails to win and obtains zero utility; once this threshold is crossed, the ES becomes a winner and gains utility, while further bid increases yield negligible gains. Across instances, the best-response region consistently concentrates around the true valuation.
The above near-truthful behavior is due to the critical-value-based pricing adopted in the third stage of Off-AIC$^2$. A binary-search procedure is used to approximate each participant's critical price, i.e., the lowest effective bid for an ES or the highest effective ask for an AP that guarantees winning. Payments are then determined according to the classical critical-value pricing principle, under which large deviations from true valuations tend to exclude participants from the winner set. Since the allocation problem jointly couples ES–AP matching and route planning and is solved heuristically via eACO-VRP, strict dominant-strategy truthfulness cannot be guaranteed; nevertheless, the empirical results demonstrate consistent near-truthful behavior.
The negative utilities observed in parts of Figs.~\ref{fig:truth_agp} and~\ref{fig:truth_es} correspond to extreme hypothetical misreports. They arise because utilities are evaluated using true valuations while reported bids/asks are artificially searched over a wide range; such loss-making strategies would not be chosen by rational participants in practice.
Figs.~\ref{fig:ir_agp} and~\ref{fig:ir_es} further confirm individual rationality. For randomly selected winning APs, the total reward determined by Off-AIC$^2$ is always no smaller than the corresponding true total cost, and for selected ESs, the true valuation always exceeds the final payment. 

{
	\subsubsection{Additional DRL comparison and decomposition gap}
	To further examine whether the online PG-BRD scheduler can be replaced by learning-based methods, Appx.~F compares FUSION with representative DRL-based online schedulers, including DQN, PPO, and MADDPG, under the same offline planning outputs. Specifically, all DRL variants use the same Pro-LNN demand forecasts, eACO-VRP AP routes, Off-AIC$^2$ ES--AP contracts, reserved service capacities, mobility setting, channel model, and online task-arrival process; hence, the performance difference reflects only the online scheduling policy. The results show that FUSION achieves the highest social welfare and timely ratio, improving social welfare from $8946.71$ under FUSION-DQN to $9362.50$, and timely ratio from $0.711$ to $0.769$. Similar gains are observed over FUSION-PPO and FUSION-MADDPG, indicating that PG-BRD better exploits the exact-potential structure to balance delay sensitivity, deadline satisfaction, congestion effects, and feasibility constraints without additional policy training.
	We also quantify the decomposition-induced performance loss in Appx.~G using small-scale full-information oracle benchmarks. The resulting exact/upper-bound gap ranges from approximately $0.83\%$ to $2.44\%$, while the runtime is substantially reduced. This confirms that FUSION trades exact global optimality of an impractical full-information formulation for tractability, route feasibility, incentive-aware ES--AP cooperation, and online implementability under the actual information structure of the system.
}

\section{Conclusion and Future Work}
{This work proposed FUSION, a forecasting-driven two-stage framework for service provisioning in air–ground integrated networks with human–machine coexistence. The offline stage leverages Pro-LNN for ES workload prediction and integrates results into eACO-VRP and Off-AIC$^2$ for joint AP routing and incentive-compatible contract design. The online stage models congestion-aware scheduling as an SD-side potential game and solves it via PG-BRD, yielding an $\varepsilon$-NE under a positive improvement threshold and a pure-strategy Nash equilibrium when the threshold is zero. Experiments show that FUSION improves social welfare and resource utilization while maintaining comparable latency and energy efficiency and preserving key economic properties. Future work includes quantum-enabled ESs and resource-efficient foundation model learning in UAV-assisted networks.}

\newpage
\clearpage
\appendices
\setcounter{equation}{26}
\section{Key Notations}
Key notations in this paper are summarized in Table \ref{Tab-notations}.

\begin{table*}[b!]
	{\footnotesize
		\caption{\footnotesize{Key notations}}\vspace{-0.3cm}
		\begin{center}\label{Tab-notations}
			\begin{tabular}{|l|l|}
				\hline
				\textbf{Notation} & \textbf{Explanation} \\ \hline
				$\bm{\mathcal{S}}$, $\bm{\mathcal{U}}^{\mathrm{H}}$, $\bm{\mathcal{U}}^{\mathrm{M}}$, $\bm{\mathcal{V}}$ 
				& \makecell[l]{Sets of edge servers (ESs), human users (HUs), machine users (MUs), and vehicle--UAV agent pairs (APs)} \\ \hline
				
				$s_i$, $u_m^{\mathrm{H}}$, $u_n^{\mathrm{M}}$, $v_k$ 
				& \makecell[l]{The $i$-th ES, the $m$-th HU, the $n$-th MU, and the $k$-th AP} \\ \hline
				
				$N_i^{\mathrm{Dem},(\nu)}$, $\hat N_i^{\mathrm{Dem},(\nu+h)}$ 
				& \makecell[l]{Actual and LNN-predicted task demand of ES $s_i$ at trading rounds $\nu$ and $\nu\!+\!h$} \\ \hline
				
				{
					$l_i^{\mathrm{E}}$, $l_k^{\mathrm{A}} \langle t\rangle$}
					& \makecell[l]{{Location of ES $s_i$ and ground location of AP $v_k$ in time $t$}} \\ \hline
				
				{
					$l_m^{\mathrm{H}} \langle t\rangle$, $l_n^{\mathrm{M}} \langle t\rangle$} & \makecell[l]{{3D locations of HU $u_m^{\mathrm{H}}$ and MU $u_n^{\mathrm{M}}$ in time $t$}} \\ \hline
				
				$r_m^{\mathrm{H}}$, $r_n^{\mathrm{M}}$, $r_{\mathbbm{j}}$ 
				& \makecell[l]{Computation workload (CPU cycles) of HU $u_m^{\mathrm{H}}$,  MU $u_n^{\mathrm{M}}$, and generic SD $u^{\mathrm{On}}_{\mathbbm{j}}$} \\ \hline
				
				$d_m^{\mathrm{H}}$, $d_n^{\mathrm{M}}$, $d_{\mathbbm{j}}$ 
				& \makecell[l]{Input data size (bits) of HU tasks, MU tasks, and generic SD tasks} \\ \hline
				
				$f_i^{\mathrm{E}}$, $f_{\mathbbm{i}}^{\mathrm{On}}$, $f_m^{\mathrm{H,loc}}$, $f_n^{\mathrm{M,loc}}$, $f_k^{\mathrm{A}}$ 
				& \makecell[l]{Computing capabilities of ES $s_i$, online SP $s_{\mathbbm{i}}^{\mathrm{On}}$, local HU/MU devices, and UAVs dispatched by AP $v_k$} \\ \hline
				
				$e_m^{\mathrm{H,tx}}$, $e_n^{\mathrm{M,tx}}$ 
				& \makecell[l]{Uplink transmit power of HU $u_m^{\mathrm{H}}$ and MU $u_n^{\mathrm{M}}$} \\ \hline
				
				$e_i^{\mathrm{E}}$ 
				& \makecell[l]{Local computing energy consumption of ES $s_i$} \\ \hline
				
				$e_k^{\mathrm{comp}}$, $e_k^{\mathrm{move}}$, $e_k^{\mathrm{fly}}$	
				& \makecell[l]{Energy consumption of AP $v_k$ for computation, ground movement, and UAV flight} \\ \hline
				
				$\tau_n^{\mathrm{ddl}}$, $\tau^\mathrm{ddl}_{\mathbbm{j}}$ 
				& \makecell[l]{Task deadline of MU $u_n^{\mathrm{M}}$ in offline stage and latency deadline of SD $u^{\mathrm{On}}_{\mathbbm{j}}$ in online stage} \\ \hline
				
				$C_k^{\mathrm{A}}$, $G_k^{\mathrm{A}}$, $Q_k^{\mathrm{A}}$
				& \makecell[l]{Number of UAVs carried by AP $v_k$, maximum number of SDs served by each UAV,\\ and total AP service capacity} \\ \hline
				
				$N_{i,k}^{\mathrm{Trad}}$ 
				& \makecell[l]{Quantity of traded computing resources between ES $s_i$ and AP $v_k$ in an offline contract} \\ \hline
				
				$x_{i,k}^{\mathrm{Off}}$, $\bm{X}$ 
				& \makecell[l]{Binary offline assignment between ES $s_i$ and AP $v_k$; $\bm{X}$ collects all $x_{i,k}^{\mathrm{Off}}$} \\ \hline
				
				$p_{i,k}^{\mathrm{ES}}$, $r_{i,k}^{\mathrm{ES}}$, $\overline{p}_i$
				& \makecell[l]{Unit payment collected from ES $s_i$, unit reward paid to AP $v_k$, and historical average user service price \\at ES $s_i$} \\ \hline
				
				$U_i^{\mathrm{ES,Off}}$, $U_k^{\mathrm{A,Off}}$, $U^{\mathrm{Auc,Off}}$ 
				& \makecell[l]{Utility of ES $s_i$, utility of AP $v_k$, and utility of the auctioneer in the offline stage} \\ \hline
				
				$U^{\mathrm{SW}}$ 
				& \makecell[l]{Total SW of the offline auction market (sum of utilities of ESs, APs, and auctioneer)} \\ \hline
				
				$\bm{\mathcal{G}}_k(\bm{\mathcal{N}}_k,\bm{\mathcal{E}}_k)$, $\mathbb{K}$ 
				& \makecell[l]{Directed complete graph and ant set used by eACO-VRP for routing AP $v_k$} \\ \hline

					$\varphi_{r,s}$, $\eta_{r,s}$, $pr^{\mathbbm{k}}_{r,s}$, $\Delta U^\mathrm{A}_{r,s}$
					& \makecell[l]{Pheromone level, distance-based heuristic desirability, transition probability of ant $\mathbbm{k}$, and AP-side utility\\ increment on edge $(\mathrm{n}_r,\mathrm{n}_s)$ in eACO-VRP} \\ \hline

				$\mathcal{S}^{\mathrm{On}}$, $\mathcal{U}^{\mathrm{On}}$, $\bm{\mathcal{U}}^{\mathrm{H}}$, $\bm{\mathcal{U}}^{\mathrm{M}}$ 
				& \makecell[l]{Sets of online service providers (SPs), online service demanders (SDs), HUs, and MUs} \\ \hline
				
				$x_{\mathbbm{j},0}^{\mathrm{On}}$, $x_{\mathbbm{j},\mathbbm{i}}^{\mathrm{On}}$, $\bm{X}^{\mathrm{On}}_{\mathbbm{j}}$ 
				& \makecell[l]{Indicator of local execution, indicator of scheduling SD $u^{\mathrm{On}}_{\mathbbm{j}}$ to SP $s^{\mathrm{On}}_{\mathbbm{i}}$, and the one-hot scheduling vector\\ of SD $u^{\mathrm{On}}_{\mathbbm{j}}$} \\ \hline
				
				$y_{\mathbbm{i}}$, $\mathbbm{x}_{\mathbbm{i}}$, $n_{\mathbbm{i}}(\bm{\pi})$ 
				& \makecell[l]{Effective congestion load, aggregate computing load, and number of associated SDs at SP $s^{\mathrm{On}}_{\mathbbm{i}}$ under profile $\bm{\pi}$} \\ \hline
				
				$t^{\mathrm{comp}}_{\mathbbm{j},\mathbbm{i}}$, $t^{\mathrm{tx}}_{\mathbbm{j},\mathbbm{i}}$, $T^{\mathrm{edge}}_{\mathbbm{j},\mathbbm{i}}$ 
				& \makecell[l]{Computation delay, uplink transmission delay, and total edge-side latency of SD $u^{\mathrm{On}}_{\mathbbm{j}}$ at SP $s^{\mathrm{On}}_{\mathbbm{i}}$} \\ \hline
				
				$S_{\mathbbm{j}}(\mathbbm{i})$, $\kappa_{\mathbbm{j}}$, $\theta_{\mathbbm{j},\mathbbm{i}}$, $\tau_{\mathbbm{i}}(y)$ 
				& \makecell[l]{Non-congestion benefit term, congestion sensitivity, load factor, and congestion function at SP $s^{\mathrm{On}}_{\mathbbm{i}}$} \\ \hline
				
				$K_{\mathbbm{i}}(\mathbbm{x}_{\mathbbm{i}})$, $G_{\mathbbm{i}}$ 
				& \makecell[l]{Computation/operation cost function and maximum number of SDs that online SP $s^{\mathrm{On}}_{\mathbbm{i}}$ can concurrently serve} \\ \hline
				
				$\bm{\pi}$, $\Phi^{\mathrm{G}}(\bm{\pi})$, $\bm{\mathcal{F}}^{(\mathrm{I})}$, $\bm{\mathcal{F}}^{(\mathrm{II})}$ 
				& \makecell[l]{Joint scheduling profile, game potential function, and optimization problems in the offline and online stages} \\ \hline 
				
			\end{tabular}
		\end{center}
	}
	\vspace{-0.6cm}
\end{table*}

\section{Details of Offline Stage}
\subsection{Detailed Pro-LNN Architecture and Training Procedure}
\label{app:prolnn}
In this section, we provide the detailed update equations and training procedure of the proposed Pro-LNN predictor used in Off-AIC$^{2}$.

\vspace{2pt}
\noindent\textbf{Step 1. Stable liquid cell }(line 7, Alg.~\ref{Alg1}):
At the core of Pro-LNN is a stable liquid cell that updates a hidden state $\mathbf{h}_i^{(\nu)}\in\mathbb{R}^{d}$ given the current input $\mathbf{z}_i^{(\nu)}$ and the previous state. Concretely, we first compute neuron-wise time constants and a pre-activation:
$\boldsymbol{\tau}^{(\nu)} = \mathrm{softplus} \big(W_\tau[\mathbf{z}_i^{(\nu)},\mathbf{h}_i^{(\nu)}]\big) + \tau_{\min}$ and
$\tilde{\mathbf{u}}^{(\nu)} = W_h \mathbf{h}_i^{(\nu)} + W_x \mathbf{z}_i^{(\nu)} + \mathbf{b}$,
where $W_\tau,W_h,W_x$ and $\mathbf{b}$ are trainable, and $\tau_{\min}>0$ enforces strictly positive time scales (we also clamp $\boldsymbol{\tau}^{(\nu)}$ to $[\tau_{\min},\tau_{\max}]$ for numerical stability). The effective update rate is then
$\boldsymbol{\alpha}^{(\nu)} 
= \mathrm{clip}\Big(\frac{\Delta \nu}{\boldsymbol{\tau}^{(\nu)}}, \alpha_{\min}, \alpha_{\max}\Big)$,
with a fixed $\Delta \nu$ (one trading interval in our implementation) and small constants $0<\alpha_{\min}\le\alpha_{\max}<1$. We then apply a lightweight layer normalization and non-linearity,
$\tilde{\mathbf{h}}_i^{(\nu)} = \tanh\!\big(\mathrm{LN}(\tilde{\mathbf{u}}^{(\nu)})\big)$,
and update the hidden state as
\begin{equation}
	\mathbf{h}_i^{(\nu+1)} 
	= (1-\boldsymbol{\alpha}^{(\nu)})\odot \mathbf{h}_i^{(\nu)}
	+ \boldsymbol{\alpha}^{(\nu)}\odot \tilde{\mathbf{h}}_i^{(\nu)}
	+ \gamma\,\tanh\!\big(\mathrm{LN}(\mathbf{h}_i^{(\nu)})\big),
\end{equation}
where $\gamma$ is a small residual scaling factor and $\odot$ denotes element-wise multiplication. This leaky-integrator update blends exponential memory decay with input-driven state injection, while the neuron-wise $\boldsymbol{\tau}^{(\nu)}$ allow different neurons to operate at different effective time scales.

\noindent\textit{Notation.}
For clarity, all nonlinear operators in the above update are applied element-wise unless otherwise stated. Specifically,
$\mathrm{softplus}(\cdot)$ denotes the standard softplus activation
$\mathrm{softplus}(x)=\log(1+e^{x})$,
$\mathrm{clip}(x,a,b)$ clips each element of $x$ into the interval $[a,b]$,
$\mathrm{LN}(\cdot)$ denotes layer normalization~\cite{ba2016layer},
and $\mathrm{LiquidCell}(\mathbf{z}_i^{(\nu)},\mathbf{h}_i^{(\nu)})$
is a shorthand for one step of the stable liquid update in Step~1
(using the time-constant gating and residual integration described above).

\vspace{2pt}
\noindent\textbf{Step 2. Historical encoding }(lines 4-7, Alg.~\ref{Alg1}):
Given the latest $L$ observations of ES $s_i$ up to trading round $\nu$, Pro-LNN encodes the history by iteratively applying the liquid cell:
$\mathbf{h}_i^{(\nu-L+1)} \leftarrow \mathrm{LiquidCell}\big(\mathbf{z}_i^{(\nu-L+1)},\mathbf{0}\big), 
\ldots, 
\mathbf{h}_i^{(\nu)} \leftarrow \mathrm{LiquidCell}\big(\mathbf{z}_i^{(\nu)},\mathbf{h}_i^{(\nu-1)}\big)$.

This exponential-style integration naturally adapts memory decay and information injection under irregular or noisy trajectories.

\vspace{2pt}
\noindent\textbf{Step 3. Multi-horizon readout and consistency }(lines 8-17, Alg.~\ref{Alg1}):
After encoding the history, a shallow MLP readout maps the final hidden state to an $H$-step forecast:
\begin{equation}
	\hspace{-3mm}\hat{\mathbf{n}}_i^{(1:H)} = \mathrm{Readout}\big(\mathbf{h}_i^{(\nu)}\big)
	= [\hat N_i^{\mathrm{Dem},(\nu+1)},\ldots,\hat N_i^{\mathrm{Dem},(\nu+H)}],
\end{equation}
where $\mathrm{Readout}(\cdot)$ denotes a lightweight fully connected network that maps 
the $d$-dimensional hidden state to an $H$-dimensional prediction vector.
The base training loss is the mean squared error (MSE) over the $H$ horizons. To further improve stability for long horizons, we optionally add a \emph{multi-step consistency} objective: starting from a history window, we repeatedly roll the model forward in closed loop by feeding its own one-step predictions back into the window, and penalize the mismatch at each rolled-out step. 

\begin{algorithm}[t]
	\footnotesize
	\caption{Pro-LNN}
	\label{Alg1}
	\KwIn{Training sequences $\{\mathbf{z}_i^{(1:T)}\}$; history length $L$; horizon $H$; learning rate $\eta$; consistency depth $K$}
	\KwOut{Predicted sequences $\{\hat N_i^{\mathrm{Dem},(\nu+1:\nu+H)}\}$}
	
	Initialize liquid-cell and readout parameters $\theta$; \\
	\While{not converged}{
		Sample a mini-batch $\mathcal{B}$ of training windows from $\mathcal{D}$; \\
		\For{each window $(\mathbf{z}_i^{(\nu-L+1:\nu)}, N_i^{\mathrm{Dem},(\nu+1:\nu+H)}) \in \mathcal{B}$}{
			Set hidden state $\mathbf{h}_i \leftarrow \mathbf{0}$; \\
			\For{$k = \nu-L+1$ \KwTo $\nu$}{
				$\mathbf{h}_i \leftarrow \mathrm{LiquidCell}\big(\mathbf{z}_i^{(k)}, \mathbf{h}_i\big)$;
			}
			\textbf{(Base $H$-step forecast)} \\
			$\hat{\mathbf{n}}_i^{(1:H)} \leftarrow \mathrm{Readout}(\mathbf{h}_i)$; \\
			$\mathcal{L}_i \leftarrow \frac{1}{H}\sum_{h=1}^{H}\big(\hat n_i^{(h)} - N_i^{\mathrm{Dem},(\nu+h)}\big)^{2}$; \\
			\If{$K > 0$}{
				Construct a history window $\mathbf{y}_i$ from $\{\mathbf{z}_i^{(\nu-L+1)},\ldots,\mathbf{z}_i^{(\nu)}\}$; \\
				\For{$k = 1$ \KwTo $K$}{
					$\hat{\mathbf{n}}_i^{(1:H)} \leftarrow f_\theta(\mathbf{y}_i)$; \\
					$\mathcal{L}_i \leftarrow \mathcal{L}_i + \big(\hat n_i^{(1)} - N_i^{\mathrm{Dem},(\nu+k)}\big)^{2}$; \\
					Update $\mathbf{y}_i$ by discarding the oldest step and appending $\hat n_i^{(1)}$ as the newest demand;
				}
				$\mathcal{L}_i \leftarrow \mathcal{L}_i /(H+K)$;
			}
		}
		Aggregate loss $\mathcal{L} \leftarrow \frac{1}{|\mathcal{B}|}\sum_{i\in\mathcal{B}} \mathcal{L}_i + \lambda\,\Omega(\theta)$; \\
		Update parameters $\theta \leftarrow \theta - \eta\nabla_\theta \mathcal{L}$;
	}
\end{algorithm}

\vspace{2pt}
\noindent\textbf{Step 4. Inference and integration }(line 9, Alg.~\ref{Alg1}):
At inference, Pro-LNN takes, for each ES $s_i$, the latest $L$ observations $\{\mathbf{z}_i^{(\nu-L+1)},\ldots,\mathbf{z}_i^{(\nu)}\}$, encodes them via the liquid cell into $\mathbf{h}_i^{(\nu)}$, and applies the readout to obtain an $H$-step forecast $\{\hat N_i^{\mathrm{Dem},(\nu+1)},\ldots,\hat N_i^{\mathrm{Dem},(\nu+H)}\}$. These predictions are computed in the offline stage before each trading round and then fed into: \textit{(i)} the eACO-VRP module for time-window-aware ES recommendation and AP route pre-planning, and \textit{(ii)} the auction module for ES--AP contract design. Since Pro-LNN runs offline, its wall-clock inference latency does not affect the real-time trading between SDs and SPs, while its compact footprint and robustness to noisy inputs make it well suited for deployment on edge and near-edge computing platforms.

{\subsection{Economic Properties of Off-AIC$^2$}\label{app:auction}

We analyze the economic properties of Off-AIC$^2$ under standard mechanism-design assumptions. Specifically, participants have quasi-linear utilities; each ES has an independent private valuation for receiving service, and each AP has an independent private cost for providing service; participants do not collude; deterministic tie-breaking is used whenever multiple allocations have the same score; and service-capacity/window information is verifiable by the auction platform. In addition, the final settlement of each active ES--AP contract is selected within a non-deficit settlement band, as follows:
\begin{equation}
	\mathrm{ask}_k
	\le
	r_{i,k}^{\mathrm{ES}}
	\le
	p_{i,k}^{\mathrm{ES}}
	\le
	\overline{\mathrm{bid}}_i ,
	\qquad
	\forall (i,k)\ \text{with }x_{i,k}^{\mathrm{Off}}=1,
	\label{eq:settlement_band}
\end{equation}
which ensures that the ES payment is sufficient to cover the AP reward while remaining no larger than the ES-side declared valuation. These assumptions define the applicability scope of the individual-rationality, $\epsilon$-near-truthfulness, and weak-budget-balance results established below.

\begin{Defn}[Individual Rationality]
	Off-AIC$^2$ is individually rational if every winning ES obtains nonnegative utility, i.e., it pays no more than its true valuation for the contracted service, and every winning AP obtains nonnegative utility, i.e., it receives a reward no lower than its true service cost.
\end{Defn}

\begin{Defn}[$\epsilon$-Near-Truthfulness]
	Let $q$ denote any participant in the offline ES--AP contracting mechanism, with true type $\theta_q$, reported type $\hat{\theta}_q$, and other participants' reports $\mathbf{b}_{-q}$. Let $U_q(\theta_q;\hat{\theta}_q,\mathbf{b}_{-q})$ denote the utility of participant $q$ when its true type is $\theta_q$ but it reports $\hat{\theta}_q$. Off-AIC$^2$ is $\epsilon$-near-truthful if, for any participant $q$ and any possible misreport $\hat{\theta}_q\in\Theta_q$,
	\begin{equation}
		U_q(\theta_q;\theta_q,\mathbf{b}_{-q})
		\ge
		\sup_{\hat{\theta}_q\in\Theta_q}
		U_q(\theta_q;\hat{\theta}_q,\mathbf{b}_{-q})
		-\epsilon .
	\end{equation}
	This means that the maximum utility improvement obtainable through unilateral misreporting is bounded by $\epsilon$.
\end{Defn}

\begin{Defn}[Weak Budget Balance]
	Off-AIC$^2$ is weakly budget balanced if the total payment collected from all winning ESs is no smaller than the total reward paid to all winning APs. Equivalently, the auctioneer does not incur a deficit.
\end{Defn}

\begin{Defn}[Bounded Approximation Error]
	In an exactly monotone single-parameter allocation rule, raising an ES's bid should not reduce its chance of winning, and lowering an AP's ask should not reduce its chance of winning. In Off-AIC$^2$, the route-aware eACO-VRP subroutine may introduce small non-monotone perturbations because ES--AP matching is coupled with route feasibility, service-window compatibility, AP capacity, and travel costs. For each participant $q$, let $\Delta_q^{\mathrm{alloc}}$ denote the maximum utility gain that can be caused by such route-induced non-monotone allocation perturbations, and let $\eta_q^{\mathrm{price}}$ denote the maximum utility gain caused by binary-search approximation in the critical-value-like pricing step. We assume that
	\begin{equation}
		\Delta_q^{\mathrm{alloc}}+\eta_q^{\mathrm{price}}\le \epsilon_q,
		\qquad
		\epsilon=\max_q\epsilon_q .
	\end{equation}
\end{Defn}

We next prove that Off-AIC$^2$ satisfies individual rationality, $\epsilon$-near-truthfulness, and weak budget balance under the above assumptions.

\begin{Prop}[Individual Rationality of Off-AIC$^2$]
	All winning ESs and APs obtained from Off-AIC$^2$ are individually rational in the offline stage.
\end{Prop}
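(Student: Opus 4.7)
The plan is to verify the two requirements of individual rationality stated in Proposition~1 separately for ES buyers and AP sellers, by exploiting two structural features of Off-AIC$^2$: the feasibility screen in Step~2 and the bracketed binary search in Step~3. Concretely, I would decompose the proof into an ES-side inequality $p_i^{\mathrm{b}*}\le\overline{\mathrm{bid}}_i$ for every winning ES and an AP-side inequality $r_k^{\mathrm{s}*}\ge\mathrm{ask}_k$ for every winning AP, and then combine them to conclude Proposition~4.

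For the ES side, the argument is essentially a loop invariant on Step~3. By construction of the candidate set $\mathcal{C}_k$ in Step~2, the event $x_{i,k}^{\mathrm{Off}}=1$ already implies $\overline{\mathrm{bid}}_i\ge\mathrm{ask}_k$ and $U_{i,k}^{\mathrm{Off}}>0$, so $s_i$ is never matched below cost. Next, the binary search computing $p_i^{\mathrm{b}*}$ is initialized with $high=\overline{\mathrm{bid}}_i$, is updated only via $high\leftarrow b_{\mathrm{eff}}=(low+high)/2\le high$, and finally returns $p_i^{\mathrm{b}*}=high$. A simple induction on the iterations of the while-loop thus yields $p_i^{\mathrm{b}*}\le\overline{\mathrm{bid}}_i$, establishing that every winning ES pays no more than its submitted bid.

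For the AP side, I would start from the pivotal-rank property $\overline{\mathrm{bid}}_{I^*}\ge\mathrm{ask}_{K^*}$, which guarantees $\mathrm{ask}_k\le\overline{\mathrm{bid}}_{I^*}$ for every winning AP $v_k$ with $k\le K^*$, so serving any top-$I^*$ ES at $v_k$'s own ask is non-loss-making. The binary search for $r_k^{\mathrm{s}*}$ is then bracketed on one side by $\mathrm{ask}_k$ (the true ask, at which $v_k$ is by assumption a winner under Step~2) and on the other side by the next AP's ask $\mathrm{ask}_{K^*+1}$, with the update $low\leftarrow\mathrm{ask}_{\mathrm{eff}}$ executed only when the re-run of Step~2 confirms that $v_k$ remains in the winning set. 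Returning $r_k^{\mathrm{s}*}=low$ therefore yields an effective ask at which $v_k$ still wins and satisfies $r_k^{\mathrm{s}*}\ge\mathrm{ask}_k$, which is exactly the AP-side IR condition.

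The hard part will be justifying that this AP-side binary search is well defined when the inner matching is closed by the heuristic eACO-VRP rather than by an exact optimizer, because strict pointwise monotonicity of the predicate ``$v_k$ remains a winner'' with respect to its reported ask is not automatic under a heuristic routing subroutine. I plan to handle this via a two-layer monotonicity argument: at the outer layer, decreasing $v_k$'s effective ask can only move it earlier in the ascending list $\mathcal{L}^{\mathrm{AP}}$, so the candidate set $\mathcal{C}_k$ it sees is weakly larger and its pairwise utilities $U_{i,k}^{\mathrm{Off}}=\mathrm{bid}_{i,k}-\mathrm{ask}_k$ are weakly higher; at the inner layer, any feasible route constructed by eACO-VRP at a higher effective ask remains feasible and strictly more profitable at a lower one, since the route's time-window and capacity constraints are ask-independent and only its utility changes. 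Together these two layers ensure that the set of effective asks at which $v_k$ wins contains an interval anchored at $\mathrm{ask}_k$, which is precisely the structure the binary search exploits. Combining this with the ES-side invariant established above delivers Proposition~4; the buyer-side inequality follows immediately from the initialization of the binary search, while the monotonicity-with-heuristic step is the delicate part of the argument.
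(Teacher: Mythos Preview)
Your core approach matches the paper's: split into the ES-side bound $p_i^{\mathrm{b}*}\le\overline{\mathrm{bid}}_i$ and the AP-side bound $r_k^{\mathrm{s}*}\ge\mathrm{ask}_k$, and read both off the endpoints of the Step~3 binary-search bracket. The ES-side loop invariant on $high$ is exactly what the paper does (stated there simply as ``the search interval is $[\underline{b}_i,\overline{\mathrm{bid}}_i]$, hence $p_i^{\mathrm{b}*}\le\overline{\mathrm{bid}}_i$'').

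Where you diverge is on the AP side, and there you overshoot. The two-layer monotonicity argument you flag as ``the hard part'' is not needed for individual rationality. IR only asks for $r_k^{\mathrm{s}*}\ge\mathrm{ask}_k$, and this already follows from the bracket initialization: the search interval for $v_k$'s effective ask is anchored at $\mathrm{ask}_k$ (the paper writes it as $r_k^{\mathrm{s}*}\in[\mathrm{ask}_k,\overline{r}_k]$), so regardless of what the inner eACO-VRP heuristic does at intermediate asks, the returned reward cannot drop below $\mathrm{ask}_k$. Whether the predicate ``$v_k$ remains a winner'' is monotone in the effective ask is irrelevant for Proposition~4; that concern belongs to the near-truthfulness analysis (Proposition~5), where one actually needs the critical value to be well-defined so that misreporting is unprofitable. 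The paper's IR proof for the AP side is a single line citing the interval bound, and that suffices.
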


\begin{proof}
	The individual rationality of Off-AIC$^2$ follows from the candidate-pair screening rule in Step~2 and the settlement rule in Step~3 of Alg.~\ref{Alg3}. We prove it separately for the ES side and the AP side.
	
	\noindent
	\textbf{\textit{(i) ES side.}}
	Consider a winning ES $s_i$ matched with AP $v_k$. In Step~2 of Alg.~\ref{Alg3}, for each AP $v_k$, the candidate ES set is constructed as
	\begin{equation}
		\mathcal{C}_k =
		\big\{
		s_i\in\mathcal{U}^{\mathrm{free}}
		\mid
		U_{i,k}^{\mathrm{Off}}>0,\ 
		\overline{\mathrm{bid}}_i\ge \mathrm{ask}_k
		\big\},
	\end{equation}
	where $U_{i,k}^{\mathrm{Off}}=\overline{\mathrm{bid}}_i-\mathrm{ask}_k$ denotes the pairwise surplus. Hence, if $s_i$ is finally matched with $v_k$, the pair must satisfy $\overline{\mathrm{bid}}_i\ge \mathrm{ask}_k$ and has a nonempty settlement band.
	
	Let $v_i$ denote the true per-unit valuation of ES $s_i$. Under truthful reporting, $v_i=\overline{\mathrm{bid}}_i$. By the settlement rule in~\eqref{eq:settlement_band}, the final ES-side payment satisfies
	\begin{equation}
		p_{i,k}^{\mathrm{ES}}\le \overline{\mathrm{bid}}_i=v_i .
	\end{equation}
	Therefore, the utility of winning ES $s_i$ is
	\begin{equation}
		U_i^{\mathrm{ES,Off}}
		=
		N_{i,k}^{\mathrm{Trad}}
		\big(v_i-p_{i,k}^{\mathrm{ES}}\big)
		\ge 0 .
	\end{equation}
	Thus, every winning ES obtains nonnegative utility.
	
	\medskip
	\noindent
	\textbf{\textit{(ii) AP side.}}
	Consider a winning AP $v_k$. In Step~2, AP $v_k$ can be matched only with ESs that satisfy the surplus and feasibility conditions, and its total assigned demand must obey the capacity constraint
	\begin{equation}
		\sum_i x_{i,k}^{\mathrm{Off}}D_i\le Q_k^{\mathrm{A}} .
	\end{equation}
	Let $c_k$ denote the true per-unit service cost of AP $v_k$. Under truthful reporting, $c_k=\mathrm{ask}_k$. By the settlement rule in~\eqref{eq:settlement_band}, for every active pair $(i,k)$, the AP-side reward satisfies
	\begin{equation}
		r_{i,k}^{\mathrm{ES}}\ge \mathrm{ask}_k=c_k .
	\end{equation}
	Hence, the utility of AP $v_k$ is
	\begin{equation}
		U_k^{\mathrm{A,Off}}
		=
		\sum_i x_{i,k}^{\mathrm{Off}}N_{i,k}^{\mathrm{Trad}}
		\big(r_{i,k}^{\mathrm{ES}}-c_k\big)
		\ge 0 .
	\end{equation}
	Therefore, every winning AP obtains nonnegative utility. Combining the ES and AP sides, Off-AIC$^2$ is individually rational in the offline stage.
\end{proof}

\begin{Prop}[$\epsilon$-Near-Truthfulness of Off-AIC$^2$]
	Under quasi-linear utilities, independent private values/costs, deterministic tie-breaking, and bounded approximation error, Off-AIC$^2$ is $\epsilon$-near-truthful for both ESs and APs in the offline stage.
\end{Prop}

\begin{proof}
	We first clarify the role of $\epsilon$. If the ES--AP allocation rule were exactly monotone and the payment/reward were computed as the exact critical value, the standard single-parameter auction argument would imply dominant-strategy truthfulness. In Off-AIC$^2$, however, ES--AP allocation is coupled with route feasibility, AP capacity, service windows, and the eACO-VRP route-construction subroutine. These factors may introduce small deviations from exact monotonicity. In addition, the critical-value-like payment and reward are obtained through binary search and may contain a small approximation error. By Definition 4, the maximum utility gain caused by these two effects is bounded by $\epsilon_q$ for participant $q$.
	
	At a high level, Off-AIC$^2$ follows the same structural principle as classical truthful single-parameter auctions: it combines an approximately monotone allocation rule with critical-value-like payments. We prove the near-truthfulness property for ES buyers and AP sellers separately.
	
	\noindent
	\textbf{\textit{(i) ES side.}}
	Fix any ES $s_i$ and treat all other bids and asks as given. Under the single-parameter abstraction, $s_i$ reports a per-unit bid $b_i$, which determines its average bid $\overline{\mathrm{bid}}_i=b_i$ and its pairwise surplus values with APs.
	
	\textit{Approximate monotone allocation.}
	In Step~1, ESs are sorted in nonincreasing order of $\overline{\mathrm{bid}}_i=b_i$. Increasing $b_i$ can only move $s_i$ upward in the sorted list and therefore cannot worsen its admission rank. Since Off-AIC$^2$ admits the top-$I^\ast$ ESs into the subsequent matching stage, once $s_i$ belongs to the admitted ES set at bid $b_i$, it remains admitted for any higher bid $b_i'>b_i$.
	
	In Step~2, $s_i$ can be matched with AP $v_k$ only if
	\begin{equation}
		U_{i,k}^{\mathrm{Off}}
		=
		b_i-\mathrm{ask}_k
		>0,
		\qquad
		b_i\ge \mathrm{ask}_k .
	\end{equation}
	Thus, increasing $b_i$ weakly increases all pairwise surplus values involving $s_i$, making $s_i$ no less attractive in the candidate matching stage. If the route-aware allocation were exactly monotone, this would imply exact ES-side monotonicity. With eACO-VRP, the selected ES set may be slightly perturbed by route feasibility, service-window compatibility, travel cost, and AP capacity. By the bounded approximation-error definition, any additional utility gain caused by such route-induced non-monotonicity and pricing approximation is at most $\epsilon_i$.
	
	\textit{Critical-value-like payment.}
	For each winning ES $s_i$, Step~3 treats its effective bid as a variable and performs binary search over the feasible interval to find the smallest effective bid that keeps $s_i$ in the winning set. The resulting value is used to determine the payment $p_{i,k}^{\mathrm{ES}}$ within the settlement band. Under exact monotonicity and exact critical-value pricing, this payment would eliminate profitable unilateral bid deviations. Under approximate monotonicity and finite-precision binary search, all remaining deviation gains are bounded by $\epsilon_i$. Therefore, for any misreported bid $\hat b_i$,
	\begin{equation}
		U_i(b_i;b_i,\mathbf{b}_{-i})
		\ge
		U_i(b_i;\hat b_i,\mathbf{b}_{-i})
		-\epsilon_i .
	\end{equation}
	
	\noindent
	\textbf{\textit{(ii) AP side.}}
	Now fix any AP $v_k$ and treat all ES bids and other APs' asks as given. Under the single-parameter abstraction, $v_k$ reports a per-unit ask $a_k$.
	
	\textit{Approximate monotone allocation.}
	In Step~1, APs are sorted in nondecreasing order of $\mathrm{ask}_k=a_k$. Lowering $a_k$ can only move $v_k$ upward in the sorted list and therefore cannot worsen its admission rank. Since Off-AIC$^2$ admits the top-$K^\ast$ APs into the matching stage, once $v_k$ is admitted at ask $a_k$, it remains admitted for any lower ask $a_k'<a_k$.
	
	In Step~2, $v_k$'s competitiveness is determined by the pairwise surplus
	\begin{equation}
		U_{i,k}^{\mathrm{Off}}
		=
		\overline{\mathrm{bid}}_i-a_k .
	\end{equation}
	Lowering $a_k$ weakly increases all surplus values involving $v_k$, making it no less competitive in the matching stage. If the allocation were exactly monotone, lowering the ask would not reduce the chance that $v_k$ wins. With eACO-VRP, route-aware feasibility, travel cost, service windows, and AP capacity coupling may introduce small non-monotone perturbations. By the bounded approximation-error definition, any additional utility gain caused by such perturbations and pricing approximation is at most $\epsilon_k$.
	
	\textit{Critical-reward-like payment.}
	For each winning AP $v_k$, Step~3 treats its effective ask as a variable and performs binary search to find the largest effective ask at which $v_k$ still remains in the winning set. The resulting value is used to determine the reward $r_{i,k}^{\mathrm{ES}}$ within the settlement band. Under exact monotonicity and exact critical-value pricing, this reward would eliminate profitable unilateral ask deviations. Under approximate monotonicity and finite-precision binary search, all remaining deviation gains are bounded by $\epsilon_k$. Thus, for any misreported ask $\hat a_k$,
	\begin{equation}
		U_k(a_k;a_k,\mathbf{a}_{-k})
		\ge
		U_k(a_k;\hat a_k,\mathbf{a}_{-k})
		-\epsilon_k .
	\end{equation}
	
	Combining the ES and AP sides, for every participant $q$ and every unilateral misreport $\hat{\theta}_q$, we have
	\begin{equation}
		U_q(\theta_q;\theta_q,\mathbf{b}_{-q})
		\ge
		U_q(\theta_q;\hat{\theta}_q,\mathbf{b}_{-q})
		-\epsilon_q .
	\end{equation}
	Taking the supremum over all possible misreports and using $\epsilon=\max_q\epsilon_q$, we obtain
	\begin{equation}
		U_q(\theta_q;\theta_q,\mathbf{b}_{-q})
		\ge
		\sup_{\hat{\theta}_q\in\Theta_q}
		U_q(\theta_q;\hat{\theta}_q,\mathbf{b}_{-q})
		-\epsilon .
	\end{equation}
	Therefore, Off-AIC$^2$ is $\epsilon$-near-truthful for both ESs and APs in the offline stage. This result should be interpreted as an approximate incentive guarantee rather than strict dominant-strategy truthfulness. The numerical evidence in Figs.~\ref{fig:truth_agp}--\ref{fig:truth_es} further supports this analysis by showing that the realized utility curves are maximized near truthful reports and that unilateral deviations bring at most marginal gains in the tested instances.
\end{proof}

\begin{Prop}[Weak Budget Balance of Off-AIC$^2$]
	Off-AIC$^2$ satisfies weak budget balance in the offline stage.
\end{Prop}

\begin{proof}
	To verify weak budget balance, it suffices to show that the auctioneer's utility in the offline stage is nonnegative. The auctioneer's utility is
	\begin{equation}
		U^{\mathrm{Auc,Off}}
		=
		\sum_{s_i\in\bm{\mathcal{S}}}
		\sum_{v_k\in\bm{\mathcal{V}}}
		x_{i,k}^{\mathrm{Off}}N_{i,k}^{\mathrm{Trad}}
		\big(p_{i,k}^{\mathrm{ES}}-r_{i,k}^{\mathrm{ES}}\big).
	\end{equation}
	Since $x_{i,k}^{\mathrm{Off}}\in\{0,1\}$ and $N_{i,k}^{\mathrm{Trad}}\ge 0$, it is enough to prove that every active trading pair satisfies
	\begin{equation}
		p_{i,k}^{\mathrm{ES}}-r_{i,k}^{\mathrm{ES}}\ge 0 .
	\end{equation}
	
	For every active pair $(i,k)$, the final settlement rule in~\eqref{eq:settlement_band} gives
	\begin{equation}
		r_{i,k}^{\mathrm{ES}}
		\le
		p_{i,k}^{\mathrm{ES}} .
	\end{equation}
	Therefore,
	\begin{equation}
		x_{i,k}^{\mathrm{Off}}N_{i,k}^{\mathrm{Trad}}
		\big(p_{i,k}^{\mathrm{ES}}-r_{i,k}^{\mathrm{ES}}\big)
		\ge 0 .
	\end{equation}
	Unmatched pairs have $x_{i,k}^{\mathrm{Off}}=0$ and thus contribute zero. Summing over all ES--AP pairs yields
	\begin{equation}
		U^{\mathrm{Auc,Off}}
		=
		\sum_{i,k}
		x_{i,k}^{\mathrm{Off}}N_{i,k}^{\mathrm{Trad}}
		\big(p_{i,k}^{\mathrm{ES}}-r_{i,k}^{\mathrm{ES}}\big)
		\ge 0 .
	\end{equation}
	Therefore, the total payment collected from winning ESs is no smaller than the total reward paid to winning APs, and Off-AIC$^2$ satisfies weak budget balance in the offline stage.
\end{proof}}

{\subsection{Justification, Evaluation, and Limitations of eACO-VRP}
\label{app:eaco_rh}

\begin{table*}[b]
	\centering
	\small
	\caption{Rolling-horizon robustness and scalability of eACO-VRP. Normalized SW is computed relative to eACO-RH under the same scenario, seed, and trading round.}
	\label{tab:R2_eaco_scalability_robustness}
	\resizebox{\textwidth}{!}{
		\begin{tabular}{llcccc}
			\toprule
			\textbf{Scenario} 
			& \textbf{Method} 
			& \textbf{Runtime} 
			& \textbf{Feasible Route Ratio} 
			& \textbf{Route Utility} 
			& \textbf{Normalized SW} \\
			\midrule
			Small scale & eACO-RH & 0.0610 $\pm$ 0.0087s & 1.000 $\pm$ 0.000 & 18.26 $\pm$ 2.03 & 1.000 $\pm$ 0.000 \\
			Small scale & eACO-Static & 0.0008 $\pm$ 0.0003s & 1.000 $\pm$ 0.000 & -1.57 $\pm$ 3.16 & 0.785 $\pm$ 0.034 \\
			Small scale & Greedy-RH & 0.0001 $\pm$ 0.0000s & 1.000 $\pm$ 0.000 & 18.05 $\pm$ 2.05 & 0.996 $\pm$ 0.007 \\
			Small scale & Random-RH & 0.0001 $\pm$ 0.0000s & 1.000 $\pm$ 0.000 & 14.72 $\pm$ 2.19 & 0.962 $\pm$ 0.020 \\
			\midrule
			Medium scale & eACO-RH & 0.1901 $\pm$ 0.0164s & 1.000 $\pm$ 0.000 & 36.79 $\pm$ 3.14 & 1.000 $\pm$ 0.000 \\
			Medium scale & eACO-Static & 0.0051 $\pm$ 0.0007s & 1.000 $\pm$ 0.000 & -3.63 $\pm$ 9.37 & 0.786 $\pm$ 0.057 \\
			Medium scale & Greedy-RH & 0.0004 $\pm$ 0.0002s & 1.000 $\pm$ 0.000 & 36.12 $\pm$ 3.99 & 0.996 $\pm$ 0.007 \\
			Medium scale & Random-RH & 0.0001 $\pm$ 0.0000s & 1.000 $\pm$ 0.000 & 28.69 $\pm$ 3.15 & 0.956 $\pm$ 0.013 \\
			\midrule
			Demand drift & eACO-RH & 0.1145 $\pm$ 0.0070s & 1.000 $\pm$ 0.000 & 26.11 $\pm$ 1.60 & 1.000 $\pm$ 0.000 \\
			Demand drift & eACO-Static & 0.0023 $\pm$ 0.0004s & 1.000 $\pm$ 0.000 & -0.01 $\pm$ 5.72 & 0.810 $\pm$ 0.038 \\
			Demand drift & Greedy-RH & 0.0002 $\pm$ 0.0001s & 1.000 $\pm$ 0.000 & 26.00 $\pm$ 1.56 & 0.999 $\pm$ 0.006 \\
			Demand drift & Random-RH & 0.0001 $\pm$ 0.0000s & 1.000 $\pm$ 0.000 & 21.02 $\pm$ 2.09 & 0.962 $\pm$ 0.012 \\
			\midrule
			Topology perturbation & eACO-RH & 0.1115 $\pm$ 0.0094s & 1.000 $\pm$ 0.000 & 26.23 $\pm$ 2.60 & 1.000 $\pm$ 0.000 \\
			Topology perturbation & eACO-Static & 0.0023 $\pm$ 0.0005s & 1.000 $\pm$ 0.000 & -4.51 $\pm$ 5.84 & 0.775 $\pm$ 0.053 \\
			Topology perturbation & Greedy-RH & 0.0002 $\pm$ 0.0001s & 1.000 $\pm$ 0.000 & 25.48 $\pm$ 3.16 & 0.994 $\pm$ 0.011 \\
			Topology perturbation & Random-RH & 0.0001 $\pm$ 0.0000s & 1.000 $\pm$ 0.000 & 20.99 $\pm$ 2.50 & 0.959 $\pm$ 0.014 \\
			\bottomrule
	\end{tabular}}
\end{table*}

This appendix further clarifies the role, evaluation setting, and limitations of eACO-VRP in FUSION. We emphasize that eACO-VRP is not designed as a slot-level real-time UAV trajectory controller. Instead, it is used as an offline or rolling-horizon AP route planner at the coarse-grained trading-round timescale. At the beginning of each trading round, Pro-LNN first predicts the ES-side demand over the upcoming service window. Given the predicted demand, AP states, mobility constraints, service-window constraints, and capacity constraints, eACO-VRP generates feasible AP service routes before online task execution starts. During the subsequent online service window, the generated AP routes and ES--AP contracts are treated as prepared service resources, while fast-timescale variations in HU/MU arrivals, channel states, SD locations, and congestion states are handled by the online PG-BRD scheduler.

This design follows the operational structure of air--ground service provisioning. AP/UAV routes cannot be arbitrarily redesigned at every online slot because vehicle movement, UAV dispatching, service-window feasibility, and UAV recovery operations impose physical constraints. In contrast, SD--SP task assignment can be updated more frequently according to the observed online states. Therefore, eACO-VRP is responsible for route-aware AP preparation at the trading-round timescale, whereas PG-BRD provides fine-grained online adaptability at the slot level.
To evaluate whether eACO-VRP is suitable for this intended role, we conduct a rolling-horizon robustness and scalability study. We compare four route-planning methods. \emph{eACO-RH} denotes the proposed rolling-horizon eACO-VRP, which replans AP routes at the beginning of each trading round based on the latest predicted demand. \emph{eACO-Static} denotes a one-shot planner that uses the initial eACO route plan and keeps the same routes in later rounds; later rounds only re-evaluate the fixed routes under current demand and perturbations. \emph{Greedy-RH} and \emph{Random-RH} are lightweight rolling-horizon baselines that construct routes greedily or randomly from feasible ES candidates.

We consider four representative scenarios. The small-scale setting contains 8 ESs, 4 APs, and 80 SDs, including 40 HUs and 40 MUs. The medium-scale setting contains 16 ESs, 8 APs, and 160 SDs, including 80 HUs and 80 MUs. The demand-drift setting contains 12 ESs, 6 APs, and 120 SDs, including 60 HUs and 60 MUs; it applies a drift intensity of 0.3 by blending the original demand vector with a spatially shifted demand vector and adding lognormal temporal perturbations. The topology-perturbation setting also contains 12 ESs, 6 APs, and 120 SDs, but applies a 20\% multiplicative perturbation to the route/travel-cost matrix. Each scenario is evaluated over 5 rolling trading rounds and 10 random seeds, yielding 50 seed-round observations for each method.
We report four metrics. \emph{Runtime} measures the route-planning wall-clock time per trading round. \emph{Feasible route ratio} measures the fraction of generated routes satisfying route, service-window, and capacity feasibility. \emph{Route utility} is the cumulative AP-side route utility induced by the selected ES visiting sequence. \emph{Normalized SW} is the downstream system social welfare after applying the generated routes, normalized by the eACO-RH result under the same scenario, seed, and trading round.

\begin{table*}[!b]
	\centering
	\caption{Roles of different components in the FUSION framework.}
	\label{tab:app_fusion_component_roles}
	\resizebox{\textwidth}{!}{
		\begin{tabular}{p{0.13\textwidth}p{0.25\textwidth}p{0.27\textwidth}p{0.27\textwidth}}
			\toprule
			\textbf{Component} 
			& \textbf{Information state / timescale} 
			& \textbf{Decision role in FUSION} 
			& \textbf{Reason for adoption} \\
			\midrule
			Pro-LNN 
			& Before task arrivals; only historical records and aggregate ES-side demand traces are available. 
			& Predicts future ES-side aggregate demand for proactive offline planning. 
			& Purely online scheduling cannot anticipate future overloaded ESs or prepare AP resources in advance. \\
			\midrule
			eACO-VRP 
			& Before the online service window; predicted demand, AP states, mobility/resource constraints, and service-window information are available. 
			& Constructs route-feasible AP service plans and ES visiting sequences. 
			& The AP preparation problem has a combinatorial vehicle routing problem (VRP)-like structure with route feasibility, service-window, and capacity constraints. \\
			\midrule
			Off-AIC$^2$ 
			& Offline contracting stage; ES/AP bids, asks, capacities, predicted demands, and route-feasible AP plans are available. 
			& Forms incentive-aware ES--AP contracts and computes payments/rewards. 
			& Route planning alone does not determine economic incentives, while contracting without route feasibility may produce physically infeasible agreements. \\
			\midrule
			PG-BRD 
			& Online execution stage; actual HU/MU arrivals, SD locations, channel states, and congestion states are observed. 
			& Refines the real-time SD--SP assignment profile $\bm{\pi}$. 
			& Provides distributed congestion-aware online scheduling with convergence under the exact-potential-game structure. \\
			\bottomrule
	\end{tabular}}
\end{table*}

Table~\ref{tab:R2_eaco_scalability_robustness} shows that eACO-RH consistently achieves the largest route utility and the highest normalized SW across all tested scenarios. Its runtime remains below one second even in the medium-scale case, which is acceptable for trading-round-level offline planning. Compared with eACO-Static, eACO-RH achieves clear gains under both demand drift and topology perturbation, showing that rolling-horizon replanning is necessary when the predicted demand or travel-cost structure changes across trading rounds. Greedy-RH is competitive in some scenarios due to its very low runtime and simple local selection rule, but eACO-RH remains consistently better in route utility and provides a more systematic search over route-feasible ES combinations. Random-RH performs worse because it does not exploit route utility or service-window information when constructing AP routes.
These results support the use of eACO-VRP as the route-aware AP planning component in FUSION. The offline AP planning problem has a combinatorial VRP-like structure with route feasibility, service-window constraints, heterogeneous AP costs, and capacity limits. eACO-VRP directly targets this structured route-construction problem and produces feasible AP routes that can be naturally consumed by Off-AIC$^2$ for ES--AP contracting. More importantly, once the AP routes and reserved capacities are prepared, PG-BRD can adapt online SD--SP assignment to real-time arrivals and congestion states. Hence, the performance of FUSION comes from the complementary roles of eACO-VRP and PG-BRD: eACO-VRP prepares high-quality route-feasible service resources at the slow timescale, while PG-BRD performs congestion-aware task assignment at the fast timescale.

Nevertheless, eACO-VRP has limitations. If abrupt road blockages, emergency events, severe topology changes, or post-disaster infrastructure failures occur within a service window, the pre-planned AP route may become suboptimal or infeasible. In such cases, event-triggered replanning, dynamic ACO, model predictive control, or multi-agent reinforcement learning could be incorporated to improve real-time responsiveness. Therefore, our use of eACO-VRP is justified for rolling-horizon offline route planning under predicted demand, while the fine-grained online adaptability of FUSION is mainly provided by PG-BRD.
}

{
	\section{Rationale for the Hybrid Design of FUSION}
	\label{app:hybrid_rationale}

	FUSION is modularized according to the information structure and decision dependencies of the considered air--ground edge service provisioning problem. Before the online service window starts, the system only has access to historical demand records, predicted ES-side aggregate demand, AP states, mobility/resource constraints, and ES/AP-side bidding information. Therefore, decisions that must be prepared in advance, including AP route planning, service-capacity reservation, and ES--AP contracting, are handled in the offline stage. After actual HU/MU arrivals, SD locations, channel states, and congestion states are revealed, the online stage performs real-time SD--SP assignment through PG-BRD.
	
	This design is not a heuristic aggregation of independent solvers. Instead, each component resolves one decision dependency required by the subsequent layer. Pro-LNN provides future ES-side demand information for proactive planning. eACO-VRP converts the predicted demand and AP states into route-feasible AP service plans. Off-AIC$^2$ then uses these route-feasible candidates to form incentive-aware ES--AP contracts. Finally, PG-BRD adapts the prepared routes, contracts, and capacities to real-time HU/MU scheduling under congestion. The roles of the four components are summarized in Table~\ref{tab:app_fusion_component_roles}.

	The above organization also clarifies why a single-module solution is insufficient. A purely online scheduler cannot prepare AP routes or reserve capacities before demand realization. A route planner alone ensures mobility feasibility but does not determine payments or incentives. An auction mechanism alone may identify economically acceptable ES--AP pairs without guaranteeing route feasibility. A learning-based online scheduler can adapt to observed states, but does not directly provide the offline economic properties required for ES--AP cooperation, such as individual rationality, near-truthfulness, and budget balance. FUSION therefore separates forecasting, route-feasible AP preparation, incentive-aware contracting, and online SD--SP assignment while connecting them through well-defined information flows.
}

\section{Properties of the Exact Potential}\label{app:potential}

This appendix states the exact-potential property used by PG-BRD. The key point is that PG-BRD does not update SD actions according to the raw transfer-inclusive HU/MU utility alone. Instead, each SD evaluates feasible actions through the potential-aligned payoff in~\eqref{eq:unified_utility_new}. Under this payoff, the change in an SD's payoff under any unilateral deviation is exactly equal to the change in the game potential $\Phi^{\mathrm{G}}(\bm{\pi})$ in~\eqref{eq:potential_def_pi_final}. This is stronger than an ordinal-potential argument based only on sign consistency.

\begin{Prop}[Exact potential and finite improvement]
	Consider the online SD-side scheduling game in which each SD $u_{\mathbbm{j}}^{\mathrm{On}}$ chooses one feasible action from $\mathcal{A}_{\mathbbm{j}}^{\mathrm{feas}}$ and evaluates each action according to the potential-aligned payoff in~\eqref{eq:unified_utility_new}. The posted prices $\{p_{\mathbbm{j},\mathbbm{i}}\}$, WPS weights $\{\gamma_{\mathbbm{j},\mathbbm{i}}\}$, coverage sets, and capacity limits are fixed during one PG-BRD update round. Then $\Phi^{\mathrm{G}}(\bm{\pi})$ in~\eqref{eq:potential_def_pi_final} is an exact potential. Specifically, for any SD $u_{\mathbbm{j}}^{\mathrm{On}}$ and any unilateral deviation from action $\mathbbm{i}$ to action $\mathbbm{i}'$,
	\begin{equation}
		\widetilde U_{\mathbbm{j},\mathbbm{i}'}(\bm{\pi}_{-\mathbbm{j}})
		-
		\widetilde U_{\mathbbm{j},\mathbbm{i}}(\bm{\pi}_{-\mathbbm{j}})
		=
		\Phi^{\mathrm{G}}(\bm{\pi}_{-\mathbbm{j}},\mathbbm{i}')
		-
		\Phi^{\mathrm{G}}(\bm{\pi}_{-\mathbbm{j}},\mathbbm{i}).
	\end{equation}
	Therefore, every accepted payoff-improving update strictly increases $\Phi^{\mathrm{G}}(\bm{\pi})$. Since the feasible joint action space is finite, PG-BRD with $\varepsilon=0$ terminates at a pure-strategy NE. With a positive threshold $\varepsilon>0$, termination without accepted updates yields an $\varepsilon$-NE. If PG-BRD stops because $T_{\max}$ is reached, the returned profile is only the final feasible profile obtained within the iteration budget.
\end{Prop}

The full proof is given in Appx.~E. In particular, Appx.~E.3 proves the exact-potential identity, Appx.~E.4 explains the active-regime MU soft-deadline form, Appx.~E.5 proves the finite-improvement and $\varepsilon$-NE claims, and Appx.~E.6 clarifies the distinction between transfer-free welfare and the game potential.

\section{Proof Details for the Exact Potential and PG-BRD}\label{app:tech_details}

This appendix provides the complete proof of the exact-potential property used in the online stage. Unlike an ordinal-potential proof based on comparing the signs of pointwise and integrated congestion terms, the argument below proves an exact identity between unilateral payoff changes and game-potential changes.

\subsection{Matrix Formulation Based on the Global Assignment Matrix}\label{app:matrixX}

For implementation and analysis, we represent the joint SD assignment profile by a global one-hot matrix
\begin{equation}
	\bm{X}\in\{0,1\}^{|\mathcal{U}^{\mathrm{On}}|\times(1+|\mathcal{S}^{\mathrm{On}}|)}.
\end{equation}
The $\mathbbm{j}$-th row is
\begin{equation}
	\bm{X}^{\mathrm{On}}_{\mathbbm{j}}
	=
	(x^{\mathrm{On}}_{\mathbbm{j},0},
	x^{\mathrm{On}}_{\mathbbm{j},1},
	\ldots,
	x^{\mathrm{On}}_{\mathbbm{j},|\mathcal{S}^{\mathrm{On}}|}),
\end{equation}
where $x^{\mathrm{On}}_{\mathbbm{j},0}=1$ denotes local execution and
$x^{\mathrm{On}}_{\mathbbm{j},\mathbbm{i}}=1$ denotes offloading to SP $s^{\mathrm{On}}_{\mathbbm{i}}$. The one-hot constraint is
\begin{equation}
	\sum_{\mathbbm{i}=0}^{|\mathcal{S}^{\mathrm{On}}|}
	x^{\mathrm{On}}_{\mathbbm{j},\mathbbm{i}}
	=1,\qquad
	\forall u_{\mathbbm{j}}^{\mathrm{On}}\in\mathcal{U}^{\mathrm{On}}.
\end{equation}
Equivalently, the scalar action $\pi_{\mathbbm{j}}$ and the one-hot vector $\bm{X}^{\mathrm{On}}_{\mathbbm{j}}$ represent the same decision.

For each online SP $s_{\mathbbm{i}}^{\mathrm{On}}$, the effective congestion load and physical workload induced by $\bm{X}$ are
\begin{equation}
	y_{\mathbbm{i}}(\bm{X})
	=
	\sum_{\mathbbm{j}\in\mathcal{U}^{\mathrm{On}}}
	\gamma_{\mathbbm{j},\mathbbm{i}}
	x^{\mathrm{On}}_{\mathbbm{j},\mathbbm{i}},
	\qquad
	\mathbbm{x}_{\mathbbm{i}}(\bm{X})
	=
	\sum_{\mathbbm{j}\in\mathcal{U}^{\mathrm{On}}}
	r_{\mathbbm{j}}
	x^{\mathrm{On}}_{\mathbbm{j},\mathbbm{i}}.
\end{equation}
These quantities coincide with $y_{\mathbbm{i}}(\bm{\pi})$ and $\mathbbm{x}_{\mathbbm{i}}(\bm{\pi})$ in~\eqref{eq:cong_vars_pi_final} whenever the rows of $\bm{X}$ are induced by the scalar profile $\bm{\pi}$.

\subsection{Marginal-Contribution Form of the Scheduling Payoff}\label{app:marginal_payoff}

Fix an SD $u_{\mathbbm{j}}^{\mathrm{On}}$ and the action profile of all other SDs, denoted by $\bm{\pi}_{-\mathbbm{j}}$. For each offloading action $\mathbbm{i}\in\mathcal{S}^{\mathrm{On}}$, define the load and workload generated by all SDs except $u_{\mathbbm{j}}^{\mathrm{On}}$ as
\begin{equation}
	y_{\mathbbm{i}}^{-\mathbbm{j}}
	=
	\sum_{\mathbbm{v}\neq\mathbbm{j}:\pi_{\mathbbm{v}}=\mathbbm{i}}
	\gamma_{\mathbbm{v},\mathbbm{i}},
	\qquad
	\mathbbm{x}_{\mathbbm{i}}^{-\mathbbm{j}}
	=
	\sum_{\mathbbm{v}\neq\mathbbm{j}:\pi_{\mathbbm{v}}=\mathbbm{i}}
	r_{\mathbbm{v}}.
\end{equation}
The marginal congestion contribution of assigning $u_{\mathbbm{j}}^{\mathrm{On}}$ to $s_{\mathbbm{i}}^{\mathrm{On}}$ is
\begin{equation}
	\mathcal{C}_{\mathbbm{j},\mathbbm{i}}(\bm{\pi}_{-\mathbbm{j}})
	=
	\int_{y_{\mathbbm{i}}^{-\mathbbm{j}}}^{y_{\mathbbm{i}}^{-\mathbbm{j}}+\gamma_{\mathbbm{j},\mathbbm{i}}}
	\tau_{\mathbbm{i}}(z)\,dz ,
\end{equation}
and the marginal SP-side operating cost is
\begin{equation}
	\mathcal{K}_{\mathbbm{j},\mathbbm{i}}(\bm{\pi}_{-\mathbbm{j}})
	=
	K_{\mathbbm{i}}\big(\mathbbm{x}_{\mathbbm{i}}^{-\mathbbm{j}}+r_{\mathbbm{j}}\big)
	-
	K_{\mathbbm{i}}\big(\mathbbm{x}_{\mathbbm{i}}^{-\mathbbm{j}}\big).
\end{equation}
For local execution, we use the convention
\begin{equation}
	S_{\mathbbm{j}}(0)=0,\quad
	p_{\mathbbm{j},0}=0,\quad
	\mathcal{C}_{\mathbbm{j},0}(\bm{\pi}_{-\mathbbm{j}})=0,\quad
	\mathcal{K}_{\mathbbm{j},0}(\bm{\pi}_{-\mathbbm{j}})=0.
\end{equation}
Then the potential-aligned payoff in~\eqref{eq:unified_utility_new} can be written compactly as
\begin{equation}
	\widetilde U_{\mathbbm{j},\mathbbm{i}}(\bm{\pi}_{-\mathbbm{j}})
	=
	S_{\mathbbm{j}}(\mathbbm{i})
	-
	\mathcal{C}_{\mathbbm{j},\mathbbm{i}}(\bm{\pi}_{-\mathbbm{j}})
	-
	\mathcal{K}_{\mathbbm{j},\mathbbm{i}}(\bm{\pi}_{-\mathbbm{j}})
	-
	p_{\mathbbm{j},\mathbbm{i}},
	\label{eq:app_payoff_compact}
\end{equation}
for every action $\mathbbm{i}\in\{0\}\cup\mathcal{S}^{\mathrm{On}}$.

\subsection{Single-User Deviation and Exact Potential Identity}\label{app:dev}

We now prove the exact-potential identity. Fix $\bm{\pi}_{-\mathbbm{j}}$ and consider a unilateral deviation of SD $u_{\mathbbm{j}}^{\mathrm{On}}$ from action $\mathbbm{i}$ to action $\mathbbm{i}'$. Denote the two resulting profiles by
\begin{equation}
	\bm{\pi}
	=
	(\bm{\pi}_{-\mathbbm{j}},\mathbbm{i}),
	\qquad
	\bm{\pi}'
	=
	(\bm{\pi}_{-\mathbbm{j}},\mathbbm{i}').
\end{equation}
All terms in $\Phi^{\mathrm{G}}$ that are unrelated to actions $\mathbbm{i}$ and $\mathbbm{i}'$ remain unchanged. Therefore, the change in the valuation term is
\begin{equation}
	\Delta\Phi_{\mathrm{val}}
	=
	S_{\mathbbm{j}}(\mathbbm{i}')
	-
	S_{\mathbbm{j}}(\mathbbm{i}).
\end{equation}
The congestion-integral change is
\begin{equation}
	\Delta\Phi_{\mathrm{cong}}
	=
	-\mathcal{C}_{\mathbbm{j},\mathbbm{i}'}(\bm{\pi}_{-\mathbbm{j}})
	+
	\mathcal{C}_{\mathbbm{j},\mathbbm{i}}(\bm{\pi}_{-\mathbbm{j}}).
\end{equation}
Indeed, assigning $u_{\mathbbm{j}}^{\mathrm{On}}$ to the new offloading action $\mathbbm{i}'$ increases the integrated congestion at $s_{\mathbbm{i}'}^{\mathrm{On}}$ by
$\mathcal{C}_{\mathbbm{j},\mathbbm{i}'}$, while removing it from the old offloading action $\mathbbm{i}$ decreases the integrated congestion at $s_{\mathbbm{i}}^{\mathrm{On}}$ by
$\mathcal{C}_{\mathbbm{j},\mathbbm{i}}$. The local-execution case is covered by the zero convention above.

Similarly, the SP-cost change is
\begin{equation}
	\Delta\Phi_{\mathrm{cost}}
	=
	-\mathcal{K}_{\mathbbm{j},\mathbbm{i}'}(\bm{\pi}_{-\mathbbm{j}})
	+
	\mathcal{K}_{\mathbbm{j},\mathbbm{i}}(\bm{\pi}_{-\mathbbm{j}}).
\end{equation}
This expression is exact for any cost function $K_{\mathbbm{i}}(\cdot)$ with finite values; it does not require differentiability or convexity. Hence, it remains valid when $K_{\mathbbm{i}}(\cdot)$ includes the activation cost
$\omega_2 c_{\mathbbm{i}}^{\mathrm{hard}}\mathbbm{1}_{\{\mathbbm{x}_{\mathbbm{i}}>0\}}$ in~\eqref{eq:sp_cost}.

Finally, the posted-price term changes by
\begin{equation}
	\Delta\Phi_{\mathrm{price}}
	=
	-p_{\mathbbm{j},\mathbbm{i}'}
	+
	p_{\mathbbm{j},\mathbbm{i}}.
\end{equation}
Adding the four components gives
\begin{equation}
	\begin{aligned}
		\Phi^{\mathrm{G}}(\bm{\pi}')
		-
		\Phi^{\mathrm{G}}(\bm{\pi})
		=&~
		S_{\mathbbm{j}}(\mathbbm{i}')
		-
		\mathcal{C}_{\mathbbm{j},\mathbbm{i}'}(\bm{\pi}_{-\mathbbm{j}})
		-
		\mathcal{K}_{\mathbbm{j},\mathbbm{i}'}(\bm{\pi}_{-\mathbbm{j}})
		-
		p_{\mathbbm{j},\mathbbm{i}'}\\
		&-
		\Big[
		S_{\mathbbm{j}}(\mathbbm{i})
		-
		\mathcal{C}_{\mathbbm{j},\mathbbm{i}}(\bm{\pi}_{-\mathbbm{j}})
		-
		\mathcal{K}_{\mathbbm{j},\mathbbm{i}}(\bm{\pi}_{-\mathbbm{j}})
		-
		p_{\mathbbm{j},\mathbbm{i}}
		\Big].
	\end{aligned}
\end{equation}
Using~\eqref{eq:app_payoff_compact}, we obtain
\begin{equation}
	\Phi^{\mathrm{G}}(\bm{\pi}_{-\mathbbm{j}},\mathbbm{i}')
	-
	\Phi^{\mathrm{G}}(\bm{\pi}_{-\mathbbm{j}},\mathbbm{i})
	=
	\widetilde U_{\mathbbm{j},\mathbbm{i}'}(\bm{\pi}_{-\mathbbm{j}})
	-
	\widetilde U_{\mathbbm{j},\mathbbm{i}}(\bm{\pi}_{-\mathbbm{j}}).
\end{equation}
This proves that $\Phi^{\mathrm{G}}$ is an exact potential function for the PG-BRD payoff.

\subsection{Active-Regime Form of the MU Soft-Deadline Utility}\label{app:softlinear}

We next clarify how the MU soft-deadline utility connects to the non-congestion valuation term $S_{\mathbbm{j}}(\mathbbm{i})$. For MU $u_{\mathbbm{j}}^{\mathrm{On}}$ offloading to $s_{\mathbbm{i}}^{\mathrm{On}}$,
\begin{equation}
	G^{\mathrm{M}}_{\mathbbm{j},\mathbbm{i}}
	=
	v_{\mathbbm{j}}
	\frac{\big[\tau_{\mathbbm{j}}^{\mathrm{ddl}}-T^{\mathrm{edge}}_{\mathbbm{j},\mathbbm{i}}\big]_+}
	{\tau_{\mathbbm{j}}^{\mathrm{ddl}}},
	\qquad
	T^{\mathrm{edge}}_{\mathbbm{j},\mathbbm{i}}
	=
	t^{\mathrm{tx}}_{\mathbbm{j},\mathbbm{i}}
	+
	\theta_{\mathbbm{j},\mathbbm{i}}\tau_{\mathbbm{i}}(y_{\mathbbm{i}}).
\end{equation}
In the active soft-deadline regime
$T^{\mathrm{edge}}_{\mathbbm{j},\mathbbm{i}}<\tau_{\mathbbm{j}}^{\mathrm{ddl}}$, the positive-part operator is active, and thus
\begin{equation}
	G^{\mathrm{M}}_{\mathbbm{j},\mathbbm{i}}
	=
	v_{\mathbbm{j}}
	-
	v_{\mathbbm{j}}
	\frac{t^{\mathrm{tx}}_{\mathbbm{j},\mathbbm{i}}}{\tau_{\mathbbm{j}}^{\mathrm{ddl}}}
	-
	\frac{v_{\mathbbm{j}}}{\tau_{\mathbbm{j}}^{\mathrm{ddl}}}
	\theta_{\mathbbm{j},\mathbbm{i}}\tau_{\mathbbm{i}}(y_{\mathbbm{i}}).
\end{equation}
After adding the MU energy-saving term
$\mathbb{V}^{\mathrm{M}}_e c^{\mathrm{M},\mathrm{save}}_{\mathbbm{j},\mathbbm{i}}$
and subtracting the posted price, the active-regime MU utility becomes
\begin{equation}
	U^{\mathrm{M}}_{\mathbbm{j},\mathbbm{i}}
	=
	\bar S_{\mathbbm{j}}(\mathbbm{i})
	-
	\kappa_{\mathbbm{j}}
	\theta_{\mathbbm{j},\mathbbm{i}}\tau_{\mathbbm{i}}(y_{\mathbbm{i}})
	-
	p_{\mathbbm{j},\mathbbm{i}},
\end{equation}
where
\begin{equation}
	\bar S_{\mathbbm{j}}(\mathbbm{i})
	=
	v_{\mathbbm{j}}
	-
	v_{\mathbbm{j}}
	\frac{t^{\mathrm{tx}}_{\mathbbm{j},\mathbbm{i}}}{\tau_{\mathbbm{j}}^{\mathrm{ddl}}}
	+
	\mathbb{V}^{\mathrm{M}}_e c^{\mathrm{M},\mathrm{save}}_{\mathbbm{j},\mathbbm{i}},
	\qquad
	\kappa_{\mathbbm{j}}
	=
	\frac{v_{\mathbbm{j}}}{\tau_{\mathbbm{j}}^{\mathrm{ddl}}}.
\end{equation}
Therefore, in the active regime, MUs share the same congestion-sensitive structure as HUs. Outside this regime, the slope of the positive-part operator becomes zero. In implementation, the original piecewise soft-deadline expression in~\eqref{eq:mu_soft_deadline} can be evaluated directly, while the active-regime form explains the congestion sensitivity used to construct the potential-aligned payoff.

\subsection{Finite-Improvement Property and $\varepsilon$-NE Guarantee}\label{app:fip}

The exact-potential identity in Appx.~E.3 implies the finite-improvement property. Each SD has a finite feasible action set because it can only choose local execution or one accessible SP satisfying the coverage and capacity constraints. Hence, the joint feasible action space is finite.

When $\varepsilon=0$, PG-BRD accepts only strict payoff-improving unilateral deviations. By the exact-potential identity, each accepted update strictly increases $\Phi^{\mathrm{G}}(\bm{\pi})$. Since the feasible profile space is finite, the algorithm cannot generate an infinite sequence of distinct profiles with strictly increasing potential values. It must terminate at a profile $\bm{\pi}^{\star}$ where no SD has a strictly profitable unilateral feasible deviation. By definition, $\bm{\pi}^{\star}$ is a pure-strategy NE of the potential-aligned scheduling game.

When PG-BRD uses a positive threshold $\varepsilon>0$, termination with $\mathrm{converged}=\textbf{true}$ means that for every SD $u_{\mathbbm{j}}^{\mathrm{On}}$ and every feasible action $a\in\mathcal{A}_{\mathbbm{j}}^{\mathrm{feas}}$,
\begin{equation}
	\widetilde U_{\mathbbm{j}}(a\mid\bm{\pi}_{-\mathbbm{j}}^{\star})
	-
	\widetilde U_{\mathbbm{j}}(\pi_{\mathbbm{j}}^{\star}\mid\bm{\pi}_{-\mathbbm{j}}^{\star})
	\le \varepsilon.
\end{equation}
Thus, no SD can improve its potential-aligned payoff by more than $\varepsilon$ through a unilateral feasible action change, and $\bm{\pi}^{\star}$ is an $\varepsilon$-NE. If the algorithm stops because $T_{\max}$ is reached before $\mathrm{converged}=\textbf{true}$, no equilibrium guarantee is claimed.

\subsection{Transfer-Free Welfare and Game Potential}\label{app:payments}

The transfer-free welfare and the game potential play different roles. The transfer-free welfare measures real system efficiency and excludes monetary transfers between SDs and SPs:
\begin{equation}
	\mathcal{W}(\bm{\pi})
	=
	\sum_{\mathbbm{j}:\pi_{\mathbbm{j}}\neq0}
	S_{\mathbbm{j}}(\pi_{\mathbbm{j}})
	-
	\sum_{\mathbbm{i}\in\mathcal{S}^{\mathrm{On}}}
	\int_0^{y_{\mathbbm{i}}(\bm{\pi})}
	\tau_{\mathbbm{i}}(z)\,dz
	-
	\sum_{\mathbbm{i}\in\mathcal{S}^{\mathrm{On}}}
	K_{\mathbbm{i}}\big(\mathbbm{x}_{\mathbbm{i}}(\bm{\pi})\big).
\end{equation}
This is the quantity used to evaluate real latency, energy, congestion, and SP operating costs.

In contrast, the game potential $\Phi^{\mathrm{G}}(\bm{\pi})$ includes posted prices because posted prices affect SD-side best responses. If prices were omitted from the game potential while SDs optimized price-dependent utilities, a price-driven unilateral improvement could fail to increase the potential. Including the fixed posted price term in~\eqref{eq:potential_def_pi_final} preserves the exact-potential identity while keeping $\mathcal{W}(\bm{\pi})$ available as the transfer-free welfare metric.

{\section{Performance Evaluation on DRL-Based Online Schedulers}
\label{app:drl_comparison}
\begin{table*}[t!]
	\centering
	\small
	\caption{Comparison with DRL-based online schedulers under the same offline planning results.}
	\label{tab:R2_drl_baseline_comparison}
	\resizebox{\textwidth}{!}{
		\begin{tabular}{lccccccc}
			\toprule
			\textbf{Method} 
			& \textbf{Social Welfare} 
			& \textbf{Timely Ratio} 
			& \textbf{Served Ratio} 
			& \textbf{Avg. Served Delay} 
			& \textbf{Energy Cost} 
			& \textbf{Online Runtime} 
			& \textbf{Training Time} \\
			\midrule
			FUSION 
			& 9362.50 $\pm$ 552.09 
			& 0.769 $\pm$ 0.048 
			& 0.876 $\pm$ 0.035 
			& 2.929 $\pm$ 0.084 
			& 1470.73 $\pm$ 308.62 
			& 0.0909 $\pm$ 0.0415s 
			& None \\
			
			FUSION-DQN 
			& 8946.71 $\pm$ 710.14 
			& 0.711 $\pm$ 0.056 
			& 0.921 $\pm$ 0.047 
			& 3.214 $\pm$ 0.113 
			& 1057.22 $\pm$ 248.76 
			& 0.0307 $\pm$ 0.0046s 
			& 7.32 $\pm$ 0.46s (100 ep) \\
			
			FUSION-PPO 
			& 8848.63 $\pm$ 731.95 
			& 0.700 $\pm$ 0.058 
			& 0.919 $\pm$ 0.045 
			& 3.265 $\pm$ 0.093 
			& 1068.20 $\pm$ 263.83 
			& 0.0397 $\pm$ 0.0043s 
			& 27.54 $\pm$ 2.46s (100 ep) \\
			
			FUSION-MADDPG 
			& 8904.83 $\pm$ 727.07 
			& 0.697 $\pm$ 0.054 
			& 0.921 $\pm$ 0.047 
			& 3.313 $\pm$ 0.084 
			& 1051.14 $\pm$ 260.71 
			& 0.0474 $\pm$ 0.0156s 
			& 52.52 $\pm$ 5.06s (100 ep) \\
			
			FUSION\_NoPG 
			& 7781.43 $\pm$ 749.96 
			& 0.642 $\pm$ 0.052 
			& 0.843 $\pm$ 0.056 
			& 3.310 $\pm$ 0.131 
			& 1517.06 $\pm$ 330.30 
			& 0.0040 $\pm$ 0.0003s 
			& None \\
			
			PurOnline 
			& 6565.13 $\pm$ 545.01 
			& 0.525 $\pm$ 0.018 
			& 0.625 $\pm$ 0.022 
			& 3.222 $\pm$ 0.192 
			& 3017.13 $\pm$ 227.11 
			& 0.0281 $\pm$ 0.0030s 
			& None \\
			\bottomrule
	\end{tabular}}
\end{table*}

This appendix provides an additional comparison between FUSION and representative DRL-based online scheduling baselines. The goal is to examine whether learning-based online policies can replace the PG-BRD scheduler when the same offline service preparation is available.
For a fair comparison, all DRL baselines replace only the online PG-BRD component, while using the same offline outputs as FUSION, including the Pro-LNN demand forecasts, eACO-VRP AP routes, Off-AIC$^2$ ES--AP contracts, and reserved service capacities. Thus, FUSION and its DRL variants operate under the same predicted demand, AP/ES resource configuration, mobility setting, channel model, online task-arrival process, and feasibility constraints. The performance difference therefore reflects the impact of the online scheduling policy rather than differences in offline planning.

We consider three representative DRL baselines. \emph{DQN} is a value-based method for discrete online assignment decisions. \emph{PPO} is a policy-gradient-based method with clipped policy updates and improved training stability. \emph{MADDPG} is a representative multi-agent DRL method for distributed decision-making with centralized training. For all DRL baselines, the state includes online task-arrival information, accessible SP sets, residual subcarrier capacity, estimated delay, energy-related features, and congestion/load indicators. The action corresponds to selecting local execution or one feasible SP. The reward is aligned with the online system objective by combining service utility, deadline/timeliness reward, delay penalty, energy cost, and capacity-violation penalty.

Table~\ref{tab:R2_drl_baseline_comparison} shows that FUSION achieves the highest social welfare and timely ratio among all compared methods. Compared with FUSION-DQN, FUSION improves social welfare from $8946.71$ to $9362.50$ and timely ratio from $0.711$ to $0.769$. Similar gains are observed over FUSION-PPO and FUSION-MADDPG. These results indicate that PG-BRD better balances service value, delay sensitivity, deadline satisfaction, congestion effects, and feasibility constraints under the same offline service preparation.
The DRL baselines achieve higher served ratios and lower energy costs than FUSION. This suggests that learned policies tend to admit or serve more tasks with lower energy expenditure. However, their average served delay is larger and their timely ratios are lower, indicating that they do not always prioritize delay-sensitive or deadline-critical tasks in a way that maximizes the overall system utility. In contrast, PG-BRD explicitly exploits the exact-potential structure of the online scheduling game and updates SD--SP assignments according to potential-aligned payoff-improving best responses, which leads to higher timely service performance and social welfare.
Another important difference lies in training and interpretability. DQN, PPO, and MADDPG require additional policy training before deployment, and their performance depends on the coverage and quality of training samples. When the task-arrival distribution, mobility pattern, channel condition, or AP/ES resource configuration changes, the learned policies may need to be retrained or fine-tuned. By contrast, PG-BRD does not require offline policy training. It directly operates on the currently observed online state and converges under the finite-potential-game structure. Although PG-BRD has a slightly higher online runtime than the trained DRL policies in this experiment, its runtime remains below $0.1$ seconds on average and avoids the additional training cost.

This comparison does not imply that DRL is unsuitable for UAV-assisted MEC systems. DRL remains a powerful tool for dynamic decision-making when sufficient training data and reliable simulation environments are available. Instead, the results highlight the different design philosophy of FUSION. FUSION provides an interpretable two-stage framework that combines demand forecasting, route-aware AP planning, incentive-aware ES--AP contracting, and distributed online congestion-aware assignment. Moreover, the offline Off-AIC$^2$ component explicitly provides economic properties, including individual rationality, $\epsilon$-near-truthfulness, and weak budget balance, which are not directly guaranteed by standard DRL-based online schedulers. Therefore, the DRL comparison confirms that FUSION remains competitive against learning-based online scheduling baselines while retaining interpretability, convergence guarantees, and economic feasibility.}

\begin{table*}[t!]
	\centering
	\small
	\caption{Small-scale exact/upper-bound gap and runtime comparison. Timed-out branch-and-bound cases use the relaxed upper bound as the benchmark value.}
	\label{tab:R2_optimality_gap_runtime}
	\resizebox{\textwidth}{!}{
		\begin{tabular}{lccccccc}
			\toprule
			\textbf{Case} 
			& \textbf{Problem Size} 
			& \textbf{Benchmark Type} 
			& \textbf{Exact/UB SW} 
			& \textbf{FUSION SW} 
			& \textbf{Gap} 
			& \textbf{Exact/UB Runtime} 
			& \textbf{FUSION Runtime} \\
			\midrule
			1 
			& $|\mathcal{U}|=10, |\mathcal{S}|=3, |\mathcal{V}|=2$ 
			& 7/10 Exact, 3/10 UB 
			& 982.50 $\pm$ 172.22 
			& 974.64 $\pm$ 173.01 
			& 0.83 $\pm$ 1.13\% 
			& 0.9005 $\pm$ 0.9207s 
			& 0.0016 $\pm$ 0.0006s \\
			
			2 
			& $|\mathcal{U}|=16, |\mathcal{S}|=4, |\mathcal{V}|=2$ 
			& 2/10 Exact, 8/10 UB 
			& 1615.33 $\pm$ 223.16 
			& 1580.05 $\pm$ 224.43 
			& 2.25 $\pm$ 1.73\% 
			& 1.7800 $\pm$ 0.5200s 
			& 0.0028 $\pm$ 0.0006s \\
			
			3 
			& $|\mathcal{U}|=20, |\mathcal{S}|=5, |\mathcal{V}|=3$ 
			& 0/10 Exact, 10/10 UB 
			& 2090.58 $\pm$ 266.99 
			& 2041.00 $\pm$ 272.14 
			& 2.44 $\pm$ 0.74\% 
			& 2.0003 $\pm$ 0.0001s 
			& 0.0050 $\pm$ 0.0017s \\
			\bottomrule
	\end{tabular}}
\end{table*}
{\section{Decomposition-Induced Optimality Gap and Runtime}
\label{app:decomposition_gap}

This appendix quantifies the performance loss introduced by the two-stage decomposition of FUSION. As clarified in Sec.~\ref{sec:system_model}, FUSION does not claim to exactly solve the full-information dynamic mixed-integer service provisioning problem. Instead, it follows the practical information structure of the system: forecasting, AP route planning, and ES--AP contracting are performed before online task arrivals are observed, whereas SD--SP assignment is optimized after real-time states become available.

To evaluate the decomposition-induced gap, we construct small-scale oracle benchmark instances in which future task arrivals, channel states, route feasibility, and system states are assumed to be known. For each instance, an exact branch-and-bound solver is first attempted to solve the corresponding full-information benchmark. If the exact solver terminates within the time limit, the obtained optimum is used as the benchmark value. Otherwise, we use the relaxed upper bound returned by the solver as the benchmark value. Therefore, when the benchmark type contains upper-bound cases, the reported gap should be interpreted as a conservative upper estimate of the true optimality gap. The gap is computed as
\begin{equation}
	\mathrm{Gap}
	=
	\frac{
		U^{\mathrm{Exact/UB}}_{\mathrm{SW}}
		-
		U^{\mathrm{FUSION}}_{\mathrm{SW}}
	}{
		U^{\mathrm{Exact/UB}}_{\mathrm{SW}}
	}
	\times 100\% .
\end{equation}

Table~\ref{tab:R2_optimality_gap_runtime} shows that FUSION achieves a small exact/upper-bound gap, ranging from approximately 0.83\% to 2.44\% across the tested small-scale cases. At the same time, FUSION requires substantially lower runtime than the exact/upper-bound benchmark. For example, in Case 1, FUSION obtains an SW of 974.64 compared with the exact/upper-bound value of 982.50, while reducing the runtime from 0.9005s to 0.0016s. As the problem size increases, exact branch-and-bound becomes increasingly difficult: the benchmark changes from 7/10 exact solutions in Case 1 to 0/10 exact solutions in Case 3, where all benchmark values rely on relaxed upper bounds. This trend illustrates the rapid growth of the full-information joint problem complexity.
These results indicate that the proposed two-stage decomposition introduces limited performance loss on small-scale instances while greatly improving computational efficiency and implementability. More importantly, the decomposition enables FUSION to operate under the practical information structure of the system, where offline AP preparation and ES--AP contracting must be completed before actual task arrivals, while online SD--SP assignment must adapt to the realized system states.

We emphasize that this experiment is not intended to prove global optimality at practical scales. Rather, it quantifies the tradeoff introduced by FUSION: exact global optimality of the full-information joint problem is sacrificed in exchange for tractability, online implementability, route feasibility, incentive-aware ES--AP cooperation, and distributed congestion-aware scheduling.
}

\end{document}